\newbox\tempa
\newbox\tempb
\newdimen\tempc
\newbox\tempd
\def\mud#1{\hfil $\displaystyle{#1}$\hfil}
\def\rig#1{\hfil $\displaystyle{#1}$}
\def\inruleanhelp#1#2#3{\setbox\tempa=\hbox{$\displaystyle{\mathstrut #2}$}%
                        \setbox\tempd=\hbox{$\; #3$}%
                        \setbox\tempb=\vbox{\halign{##\cr
        \mud{#1}\cr
        \noalign{\vskip\the\lineskip}%
        \noalign{\hrule height 0pt}%
        \rig{\vbox to 0pt{\vss\hbox to 0pt{\copy\tempd \hss}\vss}}\cr
        \noalign{\hrule}%
        \noalign{\vskip\the\lineskip}%
        \mud{\copy\tempa}\cr}}%
                      \tempc=\wd\tempb
                      \advance\tempc by \wd\tempa
                      \divide\tempc by 2 }
\def\inrulean#1#2#3{{\inruleanhelp{#1}{#2}{#3}%
                     \hbox to \wd\tempa{\hss \box\tempb \hss}}}
\def\lowerhalf#1{\hbox{\raise -0.8\baselineskip\hbox{#1}}}
\def\ianc#1#2#3{{\lineskip 4pt\lowerhalf{\inruleanhelp{#1}{#2}{#3}%
                   \box\tempb\hskip\wd\tempd}}}
\def\ibnc#1#2#3#4{{\lineskip 4pt\ianc{#1\quad\qquad #2}{#3}{#4}}}
\def\bnfas{\mathrel{::=}}
\def\lam{\lambda}
\def\oftp{\mathord{:}}
\def\hastype{\mathrel{:}}
\def\ldot{\mathord{.}\;}
\long\def\ignore#1{}
\def\vd{\null\mathrel{\vdash}}
\newcommand{\FV}{\textsf{EV}}
\renewcommand{\cite}{\citep}
\newcommand{\cD}{{\cal D}}
\newcommand{\crel}[2]{#1\sim#2}
\newcommand{\termeq}[2]{\mathsf{aeq}\;#1\;#2}
\newcommand{\tpaeq}[2]{\mathsf{atp}\;#1\;#2}
 \newcommand{\istm}[1]{\mathsf{is\_tm}\;#1}
 \newcommand{\istp}[1]{\mathsf{is\_tp}\;#1}
\newcommand{\termequal}[2]{\mathsf{deq}\;#1\;#2}
\newcommand{\app}[2]{\mathsf{app}\; #1\; #2}
\newcommand{\capp}{\mathsf{app}}
\newcommand{\tapp}[2]{\mathsf{tapp}\; #1\; #2}
\renewcommand{\lam}[2]{\mathsf{lam}\,#1.\,#2}
\newcommand{\clam}{\mathsf{lam}}
\newcommand{\lamt}[3]{\mathsf{lam}\,#1^{}.\,#3}
\newcommand{\tlam}[2]{\mathsf{tlam}\,#1.\,#2}
\newcommand{\all}[2]{\mathsf{all}\,#1.\,#2}
\newcommand{\call}{\mathsf{all}}
\newcommand{\grnd}{\ensuremath{\mathsf{i}}}
\newcommand{\alt}{\, \ensuremath{\mathsf{+}} \, }
\spnewtheorem{define}[theorem]{Definition}{\bfseries}{\itshape}
\spnewtheorem{prop}[theorem]{Property}{\bfseries}{\itshape}
\spnewtheorem{lem}[theorem]{Lemma}{\bfseries}{\itshape}
 \newcommand{\oft}[2]{\mathtt{oft}\;#1\;#2}
\newcommand{\lcase}[1]{\hfill\mbox{#1}\\}
\newcommand{\caseof}[2]{\textsf{case}\;#1\;\textsf{of}\;#2}
\newcommand{\step}{\leadsto}
\newcommand{\fsp}[2]{\mathsf{arr} #1 \, #2}
\newcommand{\carr}{\mathsf{arr}}
\newcommand{\tctx}{\Phi_t}
\newcommand{\tsch}{S_t}
\newcommand{\rctx}{\Phi_r}
\newcommand{\rsch}{S_r}
\newcommand{\ictx}{\Phi_x}
\newcommand{\isch}{S_x}
\newcommand{\actx}{\Phi_{xa}}
\newcommand{\asch}{S_{xa}}
\newcommand{\dctx}{\Phi_{xd}} 
\newcommand{\dsch}{S_{xd}} 
\newcommand{\pctx}{\Phi_{a\!e\!q}}
\newcommand{\psch}{S_{a\!e\!q}}
\newcommand{\alphctx}{\Phi_{\alpha}}
\newcommand{\alphsch}{S_{\alpha}}
\newcommand{\axctx}{\Phi_{\alpha x}}
\newcommand{\axsch}{S_{\alpha x}}
\newcommand{\atctx}{\Phi_{\alpha t}}
\newcommand{\atpctx}{\Phi_{{\alpha t}}}
\newcommand{\atpsch}{S_{\alpha t}}
\newcommand{\atpeqctx}{\Phi_{{a\! t\! p}}}
\newcommand{\atpeqsch}{S_{{a\! t\! p}}}
\newcommand{\vsk}{\\[2ex]}
\def\gvd{\tctx \vd}
\def\pvd{\rctx \vd}
\newcommand{\schema}[2]{#1\textsf{ has\_schema }#2}
\newcommand{\regdecl}[2]{#1\in #2}
\newcommand{\memb}[2]{#1 \in #2}
\newcommand{\copda}[2]{\textit{c-#1}}
\newcommand{\dop}[1]{\textit{d-#1}}
\newcommand{\aboved}[2]{
  \begin{array}{c}
    #1\\#2
  \end{array}
}
 \newcommand{\wf}[1]{\mathit{is\_wf}\;#1}
\journalname{Journal of Automated Reasoning}
\begin{document}

\title{The Next 700 Challenge Problems for Reasoning with Higher-Order Abstract
  Syntax Representations}
\subtitle{Part 1---A Common Infrastructure for Benchmarks}

\titlerunning{The Next 700 Challenge Problems: A Common Infrastructure}

\author{Amy P. Felty \and Alberto Momigliano \and Brigitte Pientka}

\institute{A. P. Felty \at
           School of Electrical Engineering and Computer Science,
           University of Ottawa, Ottawa, Canada,
           \email{afelty@eecs.uottawa.ca}
           \and
           A. Momigliano \at
           Dipartimento di Informatica, Universit\`{a}
           degli Studi di Milano, Italy,
           \email{momigliano@di.unimi.it}
          \and
           B. Pientka \at
           School of Computer Science, McGill University, Montreal,
           Canada,
           \email{bpientka@cs.mcgill.ca}
}

\date{\today}

\maketitle

\begin{abstract}
  A variety of logical frameworks support the use of higher-order
  abstract syntax (HOAS) in representing formal systems.  Although
  these systems seem superficially the same, they differ in a variety
  of ways; for example, how they handle a \emph{context} of
  assumptions and which theorems about a given formal system can be
  concisely expressed and proved. Our contributions in this paper are
  three-fold: 1) we develop a common infrastructure for representing
  benchmarks for systems supporting reasoning with binders, 2) we
  present several concrete benchmarks, which highlight a variety of
  different aspects of reasoning within a context of assumptions, and
  3) we design an open repository \emph{ORBI} (\underline{O}pen
  challenge problem \underline{R}epository for systems supporting
  reasoning with \underline{BI}nders).  Our work sets the stage for
  providing a basis for qualitative comparison of different systems.
  This allows us to review and survey the state of the art, which we
  do in great detail for four systems in Part 2 of this
  paper~\cite{companion}.  It also allows us to outline future
  fundamental research questions regarding the design and
  implementation of meta-reasoning systems.

\keywords{Logical Frameworks \and Higher-Order Abstract Syntax \and
    Context Reasoning \and Benchmarks}
\end{abstract}

\section{Introduction}
In recent years the \textsc{PoplMark} challenge \cite{Aydemir05TPHOLS} has
stimulated considerable interest in mechanizing the meta-theory of
programming languages and it has played a substantial role in the wide-spread
use of proof assistants to prove properties, for example, of parts of a compiler or  of
a language design.  
The \textsc{PoplMark} challenge concentrated on summarizing the state of the
art, identifying best practices for (programming language)
researchers embarking on formalizing language definitions, and
identifying a list of engineering improvements to make the use of
proof assistants (more) common place. While these are important questions
whose answers will foster the adoption of proof assistants  by
non-experts, it neglects some of the deeper fundamental questions:
 What should existing or future
meta-languages and meta-reasoning environments look like and what
requirements should they satisfy?
What support should an ideal
meta-language and proof environment give to facilitate mechanizing
meta-reasoning? How can its design 
reflect and support these ideals? 

We believe ``good'' meta-languages should free the user from dealing
with tedious bureaucratic details, so s/he is able to concentrate on
the essence of a proof or algorithm. Ultimately, this means that users
will mechanize proofs more quickly.  In addition, since effort is not
wasted on cumbersome 
details, proofs are more likely to capture only the essential steps of
the reasoning process, and as a result, may be easier to trust.
For instance, weakening is a typical a low-level lemma that is
used pervasively (and silently) in a proof.  Freeing the user of such details
ultimately may also mean that the automation of such proofs is more
feasible.

One fundamental question when mechanizing formal systems and their
meta-theory is how to represent variables and variable binding
structures. There is a wide range of answers to this question from
using de Bruijn indices to locally nameless representations,
 and  nominal
encodings, etc. For a partial view of the field see the papers
collected in the \emph{Journal of Automated Reasoning}'s special issue
dedicated to \textsc{PoplMark} \cite{POPL-JAR} and the one on ``Abstraction,
Substitution and Naming'' \cite{NOM-JAR}.

Encoding object languages and logics (OLs) via higher-order abstract syntax
(HOAS), sometimes referred to as ``lambda-tree syntax''~\cite{MillerP99},
where we utilize meta-level binders to model object-level binders is
in our opinion the most advanced technology.   HOAS 
avoids implementing common although notoriously tricky routines
dealing with variables, such as capture-avoiding substitution,
renaming, and fresh name generation. Compared to other techniques,
HOAS leads to very concise and elegant encodings and provides
significant support for such an endeavor.  Concentrating on encoding
binders, however, neglects another important and fundamental aspect:
the support for hypothetical and parametric reasoning, in other words
reasoning within a context of assumptions. Considering a derivation
within a context is common place in programming language theory and
leads to several natural questions:
How do we model the context of assumptions?  How do we know that a
derivation is sensible within the scope of a context?  Can we model
the relationships between different contexts?  How do we deal with
structural properties of contexts such as weakening, strengthening,
and exchange?
How do we know assumptions in a context occur uniquely?
How do we take advantage of the HOAS approach to substitution?

Even in systems supporting HOAS there is not a uniform answer to
these questions.  On one side of the spectrum  we have systems that
implement various dependently-typed calculi. 
Such systems include the logical framework Twelf \cite{TwelfSP}, the
dependently-typed functional language Beluga
\cite{Pientka:POPL08,Pientka:IJCAR10}, and
Delphin~\cite{Schuermann:ESOP08}.  All these systems also provide, in
various degrees, built-in support for reasoning modulo structural
properties of a context of assumptions.

On the other side  there are systems based on a
proof-theoretic foundation, which follow a two-level approach: they
implement a specification logic (SL) inside a higher-order logic or
type theory. Hypothetical judgments of object languages are modeled using
implication in the SL and parametric judgments are handled via
(generic) universal quantification. 
Contexts are commonly represented explicitly as lists or sets in the
SL, and structural properties are established separately as
lemmas. For example substituting for an assumption is justified by
appealing to the cut-admissibility lemma of the SL.  These lemmas are
not directly and intrinsically supported through the SL, but may be
integrated into a system's automated proving procedures, usually via
tactics.  Systems following this philosophy are for instance the
two-level Hybrid system~\cite{MMF07,FeltyMomigliano:JAR10} as
implemented on top of Coq
 and
Isabelle/HOL,
and the Abella system \cite{Gacek:IJCAR08}.

This paper, together with Part 2~\cite{companion}, is a major
extension of an earlier conference paper \cite{felty/pientka:ITP2011}.
The contributions of the present paper are three-fold.  {First}, we
develop a common framework and infrastructure for representing
\emph{benchmarks} for systems supporting reasoning with binders; in
particular, we develop notation to 
view contexts as ``structured sequences'' and classify contexts using
schemas. Moreover, we abstractly characterize in a uniform way basic
structural properties that many object languages satisfy, such as
weakening, strengthening, and exchange. This lays the foundation for
describing benchmarks and comparing different approaches to
mechanizing OLs. {Second}, we propose several challenge problems that
are \emph{crafted} to highlight the differences between the designs of
various meta-languages with respect to reasoning with and within a
context of assumptions, in view of their mechanization in a given
proof assistant. In Part 2 of this paper~\cite{companion}, we carry
out such a comparison on four systems: Twelf, Beluga, Hybrid, and
Abella.  {Third}, we discuss the design of \emph{ORBI}
(\underline{O}pen challenge problem \underline{R}epository for systems
supporting reasoning with \underline{BI}nders), an open repository for
sharing benchmark problems based on the infrastructure that we have
developed.  Although \emph{ORBI}'s syntax is inspired by systems such
as Twelf and Beluga, we do not commit to using a particular system, as
we wish to retain the needed flexibility to be able to easily support
translations to both type-theoretic and proof-theoretic
approaches.\footnote{A first step in this direction is the translator
  for Hybrid, whose first version is presented
  in~\citet{HabliFelty:PXTP13}.} The common notation allows us to
express the syntax of object languages 
 that we wish to reason about, as well as the
context schemas, the judgments and inference rules, and the statements
of the benchmark theorems.  We hope that ORBI will foster sharing of
examples in the community and provide a common set of
examples. We also see our benchmark repository as a place to collect and propose
``open'' challenge problems to push the development of meta-reasoning
systems.  

The challenge problems also play a role in allowing us, as designers
and developers of logical frameworks, to highlight and explain how the
design decisions for each individual system lead to differences in
using them in practice. This means reviewing the state of the art, as
well as outlining future fundamental research questions regarding the
design and implementation of meta-reasoning systems, as we discuss
further in the companion paper~\cite{companion}.  Additionally, our
benchmarks aim to provide a better understanding of what practitioners
should be looking for, as well as help them foresee what kind of
problems can be solved elegantly and easily in a given system, and
more importantly, why this is the case. Therefore the challenge
problems provide guidance for users and developers in better
comprehending differences and limitations. Finally, they serve as an
excellent regression suite.

%

This paper does not, of course, present 700 challenge
problems.  We start with a few and hope that others will contribute to
the benchmark repository, implement these challenge problems, and
further our understanding of the trade-offs involved in choosing one
system over another for this kind of reasoning.

The paper is structured as follows: In Sect.~\ref{sec:theory} we
motivate our definition of contexts as ``structured sequences,'' which
refines the standard view of contexts, and we describe generically and
abstractly some context properties.
Using this terminology we then present the benchmarks and their proofs
in Sect.~\ref{sec:bench}.  In Sect.~\ref{sec:mech}, 
we introduce ORBI and discuss how it provides HOAS encodings of the
benchmarks in a uniform manner. 
We discuss related work in Sect.~\ref{sec:related},
before concluding in Sect.~\ref{sec:concl}.
Appendix~\ref{sec:overview} provides a quick reference guide
to the benchmarks and Appendix~\ref{sec:aede-orbi} gives a complete
example of an ORBI file for a selection of the benchmark problems.
Full details about the challenge problems and their mechanization can
be found at \url{https://github.com/pientka/ORBI}. The latter, as well
as the present paper, can be better appreciated by reading the
companion paper \cite{companion}.


\section{Contexts of Assumptions: Preliminaries and Terminology}
\label{sec:theory}

Reasoning with and within a context of assumptions is common when we prove
meta-theoretic properties about object languages such as type systems or
logics. Hence, how to represent contexts and enforce properties such as
well-formedness, weakening, strengthening, exchange, uniqueness of assumptions, 
and substitution is a central issue once we mechanize such reasoning.

As mentioned, proof environments supporting higher-order abstract
syntax differ in how they represent and model contexts and our
comparison~\cite{companion} to a large extent focuses on this
issue. Here we lay down a common framework and notation for describing
the syntax of object languages, inference rules and contexts by using
different representative examples. In particular, we refine the
standard view of contexts as sequences of assumptions and abstractly
describe structural properties such as weakening and exchange
satisfied by many object languages. Our description follows
mathematical practice, in contrast to giving a fully formal account
based on, for example, type theory. In fact, all the notions that we
touch upon in this section, such as substitution, $\alpha$-renaming,
bindings, context schemas to name a few, can and have been generally
treated in Beluga \citep[see e.g.,][]{Pientka:POPL08}.
However, we deliberately choose to base our description on
mathematical practice to make our benchmarks more accessible to a
wider audience and so as not to force upon us one particular
foundation. This infrastructure may be seen as a first step towards
developing a formal translation between different foundations, i.e., a
translation between Beluga's type-theoretic foundation and the
proof-theory underlying systems such as Hybrid or Abella.

\subsection{Defining Well-formed Objects} 
\label{subsec:theory}
The first question that we face when defining an OL is how
to describe well-formed objects. 
Consider the polymorphic lambda-calculus. Commonly the grammar of this language
is defined using Backus-Naur form (BNF) as follows.
\[
\begin{array}{llcl}
\mbox{Types}& A,B & \bnfas & \alpha \mid \fsp A B \mid \all \alpha A \\
\mbox{Terms} & M & \bnfas &  x \mid \lamt{x}{A}{M} \mid \app{M_1}{M_2} \mid \tlam{\alpha}{M} \mid \tapp{M}{A}
\end{array}
\]
The grammar, however, does not capture properties of interest such as when a given
term or type is \emph{closed}. Alternatively, we can describe well-formed
types and terms as \emph{judgments} using axioms and inference rules
following \citet{MartinLof85}, as popularized in programming language
theory by Pfenning's \emph{Computation and Deduction} notes~\cite{Pfenning01book}.

We start with an \emph{implicit-context} version of the rules for
well-formed types and terms that plays the part of the above BNF
grammar, but is also significantly more expressive. To describe
whether a type $A$ or term $M$ is well-formed we use two judgments:
\fbox{$\istp A$} and \fbox{$\istm M$}, whose formation rules 
are depicted in Fig.~\ref{fig:wftp}. The rule for function types
($tp_{ar}$) is unsurprising. The rule $tp_{al}$ states that a type
$\all{\alpha}A$ is well-formed if $A$ is well-formed under the
assumption that the variable $\alpha$ is also. We say that this
rule 
is \emph{parametric} in the name of the bound variable $\alpha$---thus
implicitly enforcing the usual eigenvariable condition, since bound
variables can be $\alpha$-renamed at will---and \emph{hypothetical} in
the name of the axiom ($tp_v$) stating the well-formedness of this
type variable.  In this two-dimensional representation, derived from
Gentzen's presentation of natural deduction, we do not have an
explicit rule for variables: instead, for each type variable
introduced by $tp_{al}$ we also introduce the well-formedness
assumption about that variable, and we explicitly include names for
the bound variable and axiom as parameters to the rule name.

%
\begin{figure}[th]
  \centering 
 \[
\label{sec:cpolylam}
 \begin{array}{ll}
 \multicolumn{2}{l}{\mbox{\fbox{$\istp A$}
     -- Type $A$ is well-formed}}\\[0.75em]
 \infer[tp_{al}^{\alpha,tp_v}]{\istp{(\all{\alpha}{A})}}{
  \begin{array}{l}
    \infer[tp_v]{\istp{\alpha}}{}\\
  \quad\vdots\\
 \istp{A}
  \end{array}
} 
\qquad\qquad & \qquad\qquad
 \infer[tp_{ar}]{\istp{(\fsp A B)}}
 {\istp{A}\qquad  \istp{B}
 }\\[1em]
\multicolumn{2}{l}{\mbox{\fbox{$\istm M$} -- Term $M$ is well-formed}}\\[0.75em]
 \infer[tm_l^{x,tm_v}]{\istm{(\lamt{x}{A}{M})}}{
& 
  \begin{array}{l}
    \infer[tm_v]{\istm{x}}{}\\
 \quad\vdots\\
 \istm{M}
  \end{array}
}
\qquad \qquad& \qquad\qquad
 \infer[tm_{tl}^{\alpha,tp_v}]{\istm{(\tlam{\alpha}{M})}}{
  \begin{array}{l}
    \infer[tp_v]{\istp{\alpha}}{}\\
\quad\vdots\\
\istm{M}
  \end{array}
} 
\\[1em]
 \infer[tm_a]{\istm{(\app{M_1}{M_2})}}
 {\istm{M_1}\qquad  \istm{M_2}
 }
\qquad \qquad & \qquad\qquad
 \infer[tm_{ta}]{\istm{(\tapp{M}{A})}}
 {\istm{M}\qquad  \istp{A}
 }
\end{array}
\]
  \caption{Well-formed Types and Terms (implicit context)}
  \label{fig:wftp}
\end{figure}  
While variables might occur free in a type given via the BNF grammar,
the two-dimensional implicit-context formulation models more cleanly
the \emph{scope} of variables; e.g., a type $\istp({\all{\alpha}\fsp\, \alpha\,
  \beta})$ is only meaningful in the context where we have the
assumption $\istp{\beta}$.

Following this judgmental view, we can also characterize well-formed terms
: the rule for term application ($tm_a$) is straightforward and
the rule for type application ($tm_{ta}$) simply refers to the previous judgment for
well-formed types since types are embedded in terms. The rules for term
abstraction ($tm_l$) and type abstraction ($tm_{tl}$) are again the most
interesting. The rule $tm_l$ is parametric in the variable $x$ and hypothetical
in the assumption $\istm{x}$; similarly the rule $tm_{tl}$ is parametric in the
type variable $\alpha$ and hypothetical in the assumption $\istp{\alpha}$.

We emphasize that mechanizations of a given object language can use
either one of these two representations, the BNF grammar or the
judgmental implicit context formulation. However, it is important to
understand how to move between these representations and the
trade-offs and consequences involved. For example, if we choose to
support the BNF-style representation of object languages in a proof
assistant, we
 might need to provide basic predicates that verify whether
a given object is closed; further we may need to reason explicitly
about the scope of variables. HOAS-style proof assistants typically
adopt the judgmental view providing a uniform treatment for objects
themselves (well-formedness rules) and other inference rules about them.

\subsection{Context Definitions}
\label{subsec:ctxdef}
Introducing the appropriate assumption about each variable is a
general methodology that scales to  OLs accommodating much
more expressive assumptions. For example, when we specify
typing rules, we introduce a typing assumption that keeps track of the
fact that a given variable has a certain type. This approach can also result in 
compact and elegant proofs. Yet, it is often convenient to
present hypothetical judgments in a  \emph{localized} form,
reducing some of the ambiguity of the two-dimensional notation. We
therefore introduce an \emph{explicit} context for bookkeeping, since
when establishing properties about a given system, it allows us to
consider the variable case(s) separately and to state clearly when
considering closed objects, i.e., an object in the empty context. More
importantly, while structural properties of contexts are implicitly
present in the above presentation of inference rules (where
assumptions are managed informally), the explicit context presentation
makes them more apparent and highlights their use in reasoning about
contexts.  To contrast representation using explicit contexts to
implicit ones and to highlight the differences, we re-formulate the
earlier rules for well-formed types and terms given in
Fig.~\ref{fig:wftp} using explicit contexts in
Sect.~\ref{ssec:poly-expl}. As another example of using explicit
contexts, we give the standard typing rules for the polymorphic
lambda-calculus (see Sect.~\ref{ssec:poly-expl}). The reader might
want to skip ahead to get an intuition of what explicit contexts are
and how they are used in practice.
In the rest of this section, we first introduce terminology for
structuring such contexts, and then describe structural properties
they (might) satisfy.

Traditionally, a context of assumptions is characterized as a sequence
of formulas $A_1, A_2, \ldots, A_n$ listing its elements separated by
commas~\cite{PierceFirstBook,GirardLafontTaylor:proofsAndTypes}.
However, we argue that this is not expressive enough to capture the
structure  present in contexts, especially when mechanizing OLs.
In fact, there are two limitations from that point of view.  

First, simply
stating that a context is a sequence of formulas does not characterize
adequately and precisely what assumptions can occur in a context and
in what order.  For example, to characterize a well-formed
{type}, we consider a type in a context $\alphctx$ of type
variables. To characterize a well-formed {term}, we must consider
the term in a context $\axctx$ that may contain type variables
$\alpha$ \emph{and} term variables $x$.
\[\begin{array}[t]{llll}
  \mbox{Context}
& \alphctx & \bnfas & \cdot\mid \alphctx,\istp{\alpha} \\
& \axctx & \bnfas & \cdot\mid \axctx,\istp{\alpha} \mid \axctx, \istm{x} \\
\end{array}
\]
As a consequence, we need to be able to state in our mechanization
when a given context \emph{satisfies} being a well-formed context $\alphctx$
or $\axctx$. In other words, the grammar for $\alphctx$ and $\axctx$
will give rise to a \emph{schema}, which describes when a context is
meaningful. Simply stating that a context is a sequence of assumptions does
not allow us necessarily to distinguish between different contexts.

Second, forming new contexts by a {comma} does not capture enough
structure.  For example, consider the typing rule for
lambda-abstraction that states that $\lamt{x}{C}{M}$ has type $(\fsp\,C\,
B)$, if assuming that $x$ is a term variable and $x$ has type $C$, we can
show that $M$ has type $B$. Note that whenever we introduce
assumptions $x \oftp C$ (read as ``term variable $x$ has type $C$''),
we at the same time introduce the additional assumption that $x$ is a \emph{new}
term variable. This is indeed important, since from it we can derive
the fact that every typing assumption is unique. Simply stating that
the typing context is a list of assumptions $x \oftp C$, as shown
below in the first attempt, fails to capture  that $x$ is a
term variable, distinct from all other term variables. In fact, it
says nothing about $x$. 
\[
\begin{array}{llrcl}
\mbox{Typing context} & \mbox{(attempt 1)} & \Phi & \bnfas & \cdot \mid
\Phi, x\oftp C
\end{array}
\]
The second attempt below also fails, because
the occurrences of the comma have two different meanings. 
\[
\begin{array}{llrcl}
\mbox{Typing context} &  \mbox{(attempt 2)}& \Phi & \bnfas & \cdot \mid
\Phi, \istm{x}, x\oftp C 
\end{array}
\]
The comma
between $\istm{x}, x\oftp C$ indicates that whenever we have an
assumption $\istm{x}$, we also have an assumption $x\oftp C$. These
assumptions come in pairs and form one \emph{block} of assumptions. On
the other hand, the comma between $\Phi$ and $\istm{x},x\oftp C$
indicates that the context $\Phi$ is \emph{extended} by the block
containing assumptions $\istm{x}$ and $x\oftp C$.

Taking into account such blocks leads to the definition of contexts as
\emph{structured sequences}. A context is a sequence of declarations
$D$ where a declaration is a block of individual atomic
assumptions separated by '$;$'. The '$;$' binds tighter
than '$,$'. We treat contexts as ordered, i.e., later assumptions in
the context may depend on earlier ones, but not vice versa---this
 in contrast to viewing contexts as multi-sets.

We thus introduce the following categories:
\[
\begin{array}{lrcl}
&\mbox{Atom} \quad A & & \\
&\mbox{Block of declarations} \quad D & \bnfas & A  \mid D; A\\
&\mbox{Context} \quad \Gamma & \bnfas & \cdot \mid \Gamma , D\\
&\mbox{Schema} \quad S & \bnfas & D_s \mid D_s \alt S
\end{array}
\]
Just as types classify terms, a \emph{schema} will classify meaningful
structured sequences. A schema consists of declarations $D_s$, where
we use the subscript $s$ to indicate that the declaration occurring in
a concrete context having schema $S$ may be an \emph{instance} of $D_s$. We
use $\alt$ to denote the alternatives in a  context schema.

We can declare the schemas corresponding to the previous contexts,
seen as structured sequences, as follows:
\[
\begin{array}{llcl}
& S_{\alpha} & \bnfas & \istp{\alpha} \\ 
& \axsch & \bnfas & \istp{\alpha} \alt \istm{x} \\ 
& \atpsch & \bnfas & \istp{\alpha} \alt \istm{x};x\oftp C 
\end{array}
\]

We use the following notational convention for declarations and
schemas: Lower case letters denote bound variables (eigenvariables),
obeying the Barendregt variable convention;
$\FV(D)$ will denote the set of eigenvariables occurring in $D$.
Upper case letters are used for ``schematic'' variables. Therefore, we
can always rename the $x$ in the declaration $\istm{x};x\oftp C$ and
instantiate $C$.
For example, the context $\istm{y};y\oftp\;\mathsf{nat},\;
\istp\alpha,\;\istm{z};z\oftp\;(\mathsf{\fsp\, \alpha\, \alpha})$ fits
the schema $\atpsch$.
\footnote{Although a schema does not appear to have an explicit
  binder, all the eigenvariables and schematic variables occurring are
  considered bound. In ORBI (see Sect.~\ref{sec:mech}) the
  \lstinline{block} keyword delineates the scope of eigenvariables and
  we use the convention that schematic variables are written using
  upper case letters. Beluga's type theory provides a formal
  type-theoretic foundation for describing schemas where the scope of
  eigenvariables and schematic variables in a schema is enforced using
  $\Sigma$ and $\Pi$-types.}

We say that a declaration $D$ is \emph{well-formed} if for every
$x\in\FV(D)$ there is an atom in $D$ (notation $\memb{A}{D}$) denoting
the well-formedness judgment for $x$, which we generically refer to as
$\wf{x}$, with the proviso that $\wf{x}$ precedes its use in $D$; the
meta-notation $\wf\!\!$ will be instantiated by an appropriate atom
such as $\mathsf{is\_tm}$ or $\mathsf{is\_tp}$.  A schema is
\emph{well-formed} if and only if all its declarations are well-formed. For
example, the schema $\atpsch$ is well-formed since the $x$ in $x\oftp
C$ is declared by $\istm x$ appearing earlier in the same declaration.
We will assume in the following that all schemas are
such. %

More generally, we say that a concrete context $\Gamma$ \emph{has
  schema} $S$ ($\schema{\Gamma}{S}$), if every declaration in $\Gamma$
is an instance of some schema declaration $D_s$ in $S$.
By convention, when we write $S_l$ to denote a context schema,
$\Gamma_l$ will denote a valid instance of $S_l$, namely such that
$\schema{\Gamma_l}{S_l}$, where subscript $l$ is used to denote the
relationship between the schema and an instance of it.

\[
\begin{array}{l}
\mbox{Schema Satisfaction }\qquad \fbox{$\schema{\Gamma}{S}$}\\[1em]
\infer{\schema{\;\;\cdot\;\;}{S}}{}
\qquad
\infer{\schema{(\Gamma, D)}{S}}{\schema{\Gamma}{S} \qquad \regdecl{D}{S} \qquad
  \FV(D) \cap \FV{(\Gamma) = \emptyset}} 
\\[1em] 
\mbox{Block $D$ of Declaration is valid }\qquad\fbox{$\regdecl{D}{S}$}\\[1em]
\infer{\regdecl{D}{D_s}}{D \ \mbox{instance of} \ D_s} \qquad
\infer{\regdecl{D}{D_s \alt S}}{D \ \mbox{instance of}\  D_s}\qquad
\infer{\regdecl{D}{D_s \alt S}}{\regdecl{D}{S}} 
\end{array}
\]

Note that if $D \in S$, then it is by definition
well-formed. The premise $\FV(D) \cap \FV(\Gamma) = \emptyset$ requires
eigenvariables in different blocks in a context satisfying the schema
to be distinct from each other.  This constraint will always be
satisfied by contexts that appear in proofs of judgments using our
inference rules---again, see for example the inference rules in
Sect.~\ref{ssec:poly-expl}.
We remark that a given context can in principle inhabit different
schemas; for example the context $\istp{\alpha_1}, \istp{\alpha_2}$
has schema $S_{\alpha}$ but also inhabits schemas $\axsch$ and
$\atpsch$.

Note that according to the given grammar for schemas, contexts contain
only atomic assumptions. We could consider non-atomic assumptions; in
fact, more complex assumptions are not only possible, but sometimes
yield very compact and elegant specifications, as we touch upon in
Sect.~\ref{sec:concl}. However, to account for them, we would need to
introduce a language for terms and formulas that we feel would detract
from the goal at hand.

\subsection{Structural Properties of Contexts}
 \label{subsec:struct}

So far we have introduced terminology for describing objects in three
different ways: using a BNF grammar, 
defining objects and rules via a two-dimensional implicit context, and
using an explicit context containing structured sequences of
assumptions following a given context schema.  For the latter, we have
not yet described the associated inference rules.  Before we do (in
Sect.~\ref{ssec:poly-expl} as mentioned), we introduce structural
properties of explicit contexts \emph{generically} and
\emph{abstractly}.

We concentrate here on developing a common framework for {describing}
object languages including structural properties they might
satisfy. However, we emphasize that whether a given object language
does admit structural properties such as weakening or exchange is a
property that needs to be verified on a case-by-case basis. In the
subsequent discussion and in all our benchmarks, we concentrate on
examples satisfying weakening, exchange, and strengthening, i.e.,
assumptions can be used as often as needed, they can be used in any
order, and certain assumptions will be known not to be needed.

Our refined notion of context has an impact on structural properties
of contexts: e.g., weakening can be described by adding a new
declaration to a context, as well as adding an element inside a block
of declarations.  We distinguish between structural properties of a
\emph{concrete} context and structural properties of \emph{all}
contexts of a given schema. For example, given the context schemas
$S_{\alpha}$ and $\axsch$, we know that all concrete contexts of
schema $\axsch$ can be \emph{strengthened} to obtain a concrete
context of schema $S_\alpha$. Dually, we can think of \emph{weakening}
a context of schema $S_\alpha$ to a context of schema $\axsch$.  We
introduce the operations $\mathsf{rm}$ and $\mathsf{perm}$,
where
 \textsf{rm} removes an element of a declaration, and
\textsf{perm} permutes the elements within a declaration.
 
\begin{define}[Operations on Declarations]
  \begin{itemize}
  \item Let $\mathsf{rm}_A:S \rightarrow S'$ be a total function
    taking a (well-formed) declaration $\regdecl{D}{S}$ and returning
    a (well formed) declaration $\regdecl{D'}{S'}$ where $D'$ is $D$
    with $A$ removed, if $\memb{A}{D}$; otherwise $D'=D$.
  \item Let ${\mathsf{perm}_{\pi}}:S \rightarrow S'$ be a total
    function that permutes the elements of a (well-formed)
    declaration $\regdecl{D}{S}$ according to $\pi$ to obtain a (well
    formed) declaration $\regdecl{D'}{S'}$.
  \end{itemize}
\label{def:struct}
\end{define}

Using these operations on declarations we  state structural
{properties} of declarations, later to be extended to
contexts. These make no assumptions and give no guarantees about the
schema of the context $\Gamma, D$ and the resulting context
$\Gamma, f(D)$ where $f\in
\{\mathsf{rm}_A,\,\mathsf{perm}_{\pi}\}$. In fact,  we often want to
use these properties when $\Gamma$ satisfies some schema $S$, but $D$
does not \emph{yet} fit $S$; in this case, we apply an operation to $D$
so that $\Gamma, f(D)$ \emph{does} satisfy the schema $S$.

Since our context schema may contain 
alternatives, the function \textsf{rm} is defined via
case-analysis covering all the possibilities, where
we describe dropping all assumptions of a case using a dot, e.g.,
$\istm{x}\mapsto \cdot$.  For example:

\begin{itemize}
\item $\mathsf{rm}_{x\oftp A} : \atpsch \rightarrow \axsch = \lambda
  d. \caseof{d}{\istp{\alpha} \mapsto \istp{\alpha} \mid
    \istm{y};y\oftp A \mapsto \istm{y}}$
\item $\mathsf{rm}_{\istm{x}} : \axsch \rightarrow \alphsch = \lambda
  d. \caseof{d}{\istp{\alpha} \mapsto \istp{\alpha} \mid \istm{y}
    \mapsto \cdot}$
\end{itemize}

\begin{prop}[Structural Properties of Declarations] 
  \begin{enumerate}
  \item Declaration Weakening: 
\[
\begin{array}{c}
\infer[\textit{d-wk}]{\Gamma, D, \Gamma' \vdash J}{\Gamma, \mathsf{rm}_A(D),\Gamma' \vdash J}
\end{array}
\]

\item Declaration Strengthening:
\[
\begin{array}{c}
\infer[\textit{d-str}\dagger]{\Gamma, \mathsf{rm}_A(D), \Gamma' \vdash J}{\Gamma,D,\Gamma' \vdash J}
\end{array}
\]
with the proviso $(\dagger)$ that $A$ is irrelevant to
$J$ and $\Gamma'$.\footnote{In practice, this may be done by maintaining a
  dependency call graph of all judgments.}

\item Declaration Exchange:
\[
\begin{array}{c}
\infer[\textit{d-exc}]{\Gamma, \mathsf{perm}_\pi(D), \Gamma' \vdash J}{\Gamma, D,\Gamma'  \vdash J}  

\end{array}
\]


  \end{enumerate}
\label{prop:structd}
\end{prop}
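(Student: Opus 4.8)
The plan is to prove each of the three properties by structural induction on the derivation of the premise, reading each rule off the (abstract) inference system for $\Gamma \vdash J$ that is introduced in Sect.~\ref{ssec:poly-expl}. The only features of that system I need are: (i) every rule is either a \emph{lookup} rule, whose sole effect is to read one atom out of the context, or a \emph{decomposition} rule, which analyses $J$ and issues subgoals whose contexts extend the current one only by appending further blocks (the HOAS discipline of Sect.~\ref{subsec:ctxdef}); and (ii) no rule inspects the order of atoms within a block, nor deletes atoms. Before starting I would generalize each statement so that the induction hypothesis survives the decomposition rules: for weakening the general form is ``if $\Delta \vdash J$ and $\Delta'$ is obtained from $\Delta$ by inserting atoms into one of its blocks, then $\Delta' \vdash J$'', from which \textit{d-wk} follows by taking $\Delta = \Gamma, \mathsf{rm}_A(D), \Gamma'$ and $\Delta' = \Gamma, D, \Gamma'$; the generalizations for strengthening (deleting an irrelevant atom) and exchange (permuting atoms within a block) are analogous.

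For \emph{Declaration Weakening} I induct on a derivation $\cD$ of $\Gamma, \mathsf{rm}_A(D), \Gamma' \vdash J$ and re-apply the last rule of $\cD$. If it is a decomposition rule, each subderivation lives in a context that extends $\Gamma, \mathsf{rm}_A(D), \Gamma'$ by further blocks; the inserted atom $A$ still sits in the same block, so the induction hypothesis applies to each subderivation verbatim and the results are reassembled under the same rule. If it is a lookup rule, the atom it retrieves lies in $\Gamma$, in $\mathsf{rm}_A(D)$, or in $\Gamma'$; since every atom of $\mathsf{rm}_A(D)$ is an atom of $D$, that atom is still available in $\Gamma, D, \Gamma'$ and the lookup still succeeds. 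The degenerate case $A \notin D$, where $\mathsf{rm}_A(D) = D$, is trivial.

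\emph{Declaration Strengthening} is the same induction, run on a derivation of $\Gamma, D, \Gamma' \vdash J$ and using the proviso $(\dagger)$; making $(\dagger)$ precise is where the content lies. I would read ``$A$ is irrelevant to $J$ and $\Gamma'$'' as: the judgment form of $A$ does not occur among the judgment forms reachable, in the rule-dependency graph mentioned in the footnote, from the form of $J$ or from any form occurring in $\Gamma'$. Then no lookup rule in the derivation can retrieve an atom matching $A$, since otherwise the dependency graph would connect $J$ (or $\Gamma'$) to $A$'s form; hence every lookup succeeds against $\Gamma, \mathsf{rm}_A(D), \Gamma'$ as well, and the decomposition cases go through by the induction hypothesis, the proviso being inherited by each subgoal because decomposition only adds blocks whose forms are already reachable and refines $J$ into subjudgments whose forms are reachable from that of $J$. \emph{Declaration Exchange} is once more the same induction, with $\mathsf{perm}_\pi$ in place of $\mathsf{rm}_A$: the decomposition cases use the induction hypothesis, and the lookup cases are unaffected because $D$ and $\mathsf{perm}_\pi(D)$ contain exactly the same atoms, permutation changing only their order, which by (ii) no rule observes. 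As the surrounding text warns, $\Gamma, \mathsf{perm}_\pi(D), \Gamma'$ need not satisfy any schema, but derivability never consults the schema, so this does no harm.

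The main obstacle I expect is not any of the inductions, each of which is routine once set up, but pinning down the abstract hypotheses (i)--(ii), and for strengthening the dependency-graph reading of $(\dagger)$, so that they are faithful to every concrete inference system used later (Sect.~\ref{ssec:poly-expl} and the benchmarks) yet still strong enough to close the lookup and decomposition cases. Everything past that is bookkeeping about which block a given atom sits in after the context has been extended.
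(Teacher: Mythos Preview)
The paper does not prove this property at all. In the surrounding text the authors are explicit that they are \emph{stating} structural properties that object languages \emph{might} satisfy, and that ``whether a given object language does admit structural properties such as weakening or exchange is a property that needs to be verified on a case-by-case basis.'' Property~\ref{prop:structd} is thus presented as a template for what declaration-level weakening, strengthening, and exchange look like when they hold, not as a theorem about a fixed inference system; accordingly no proof is offered, and the properties are simply invoked (as \textit{d-wk}, \textit{d-str}, etc.) in the benchmark proofs of Sect.~\ref{sec:bench} as admissible rules whose justification is left to the particular OL at hand.

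Your proposal therefore goes well beyond the paper: you attempt a uniform meta-argument by isolating structural hypotheses (your (i) and (ii)) on the shape of the inference rules and then running a single induction that specializes to every OL meeting those hypotheses. That is a perfectly reasonable and more ambitious programme, and your own closing paragraph correctly identifies where the content lies---namely in formulating (i), (ii), and the dependency-graph reading of $(\dagger)$ so that they cover the concrete systems of Sect.~\ref{ssec:poly-expl} and the benchmarks. The inductive bookkeeping you sketch is fine once those hypotheses are in place. Just be aware that the paper deliberately refrains from fixing such hypotheses, precisely because it wants the framework to accommodate substructural object languages as well (cf.\ the closing remark of Sect.~\ref{subsec:struct}); so what you are proving is a sufficient condition for Property~\ref{prop:structd} to hold, not the property as the paper states it.
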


The special case  $\mathsf{rm}_A(A)$  drops $A$ completely, since 
\[
\mathsf{rm}_A = \lambda d. \caseof{d}{A \mapsto \cdot \mid \ldots}
\] 
We treat $\Gamma, \cdot, \Gamma'$ as equivalent to $\Gamma,
\Gamma'$. Hence, in the special case where we have
$\Gamma,\mathsf{rm}_A(A),\Gamma'$, we obtain the well-known weakening
and strengthening laws on contexts that are often stated as:
\[
\begin{array}{c}
\infer[str\dagger]{\Gamma,\Gamma' \vdash J}{\Gamma, A, \Gamma' \vdash J} \qquad
\infer[wk]{\Gamma,A, \Gamma' \vdash J}{\Gamma,  \Gamma' \vdash J} 
\end{array}
\]
In contrast to the above, 
the general exchange property on blocks of declarations cannot be
obtained ``for free'' from the above operations and we define it explicitly:
\begin{prop}[Exchange]
\[
\begin{array}{c}
\infer[exc]{\Gamma, D, D', \Gamma' \vdash J}{\Gamma, D', D, \Gamma' \vdash J}  
\end{array}
\]
with the proviso that the sub-context $D, D'$ is well-formed.
\end{prop}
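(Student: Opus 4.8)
The plan is to prove block exchange by a straightforward structural induction on the derivation of the premise $\Gamma, D', D, \Gamma' \vdash J$, after first pinning down what the proviso buys us. The hypothesis requires the sub-context $D, D'$ to be well-formed, and since the premise is a sensible judgment, the starting sub-context $D', D$ is well-formed too. Reading off the definition of well-formedness together with our convention that later assumptions may depend on earlier ones but not conversely, well-formedness of $D', D$ says that every eigenvariable of $D$ is declared by a $\wf{\cdot}$ atom occurring in $D'$ or earlier in $D$, while well-formedness of $D, D'$ says symmetrically the same about the eigenvariables of $D'$. Taken together these two facts force the two declarations to be \emph{independent}: $\FV(D) \cap \FV(D') = \emptyset$ and neither block refers to an eigenvariable bound by the other. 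Hence $\Gamma, D, D', \Gamma'$ and $\Gamma, D', D, \Gamma'$ are both well-formed contexts, they contain exactly the same atomic assumptions, and they have exactly the same set of in-scope eigenvariables; they differ only in the relative order of $D$ and $D'$.

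Given this, the induction rests on two observations about the inference systems we consider (the explicit-context rules of Sect.~\ref{ssec:poly-expl}, and more generally any rule written in our format). First, whether a rule applies to a conclusion $\Gamma_0 \vdash J_0$ depends on $\Gamma_0$ only through the \emph{set} of assumptions it contains --- a hypothesis/variable axiom fires exactly when the required $\wf{\cdot}$ atom is a member of $\Gamma_0$, irrespective of position --- and through the \emph{set} of eigenvariables it binds, via the freshness side conditions attached to parametric/hypothetical rules. Second, each premise of such a rule has the shape $\Gamma_0, \Phi \vdash J_i$ where $\Phi$ is a (possibly empty) block of fresh parametric/hypothetical assumptions; rules only extend the context on the right, never permute or delete. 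So in the base case, when $\Gamma, D', D, \Gamma' \vdash J$ is an axiom instance, the very same instance derives $\Gamma, D, D', \Gamma' \vdash J$ because the two contexts carry the same assumptions. In the inductive step, if the last rule has premises $\Gamma, D', D, \Gamma', \Phi_k \vdash J_k$, we read each premise context as $\Gamma, D', D, (\Gamma', \Phi_k)$ and apply the induction hypothesis with tail $\Gamma', \Phi_k$ in place of $\Gamma'$ to obtain $\Gamma, D, D', \Gamma', \Phi_k \vdash J_k$; the freshness conditions on $\Phi_k$ still hold since the eigenvariable set is unchanged, so re-applying the rule gives $\Gamma, D, D', \Gamma' \vdash J$.

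The main obstacle is conceptual rather than technical: it is isolating precisely why the proviso is needed and why this property does not follow from the declaration-level operations. Without independence --- say $D'$ introduces an eigenvariable that $D$ uses --- the context $\Gamma, D, D', \Gamma'$ in the conclusion is ill-scoped, so the statement would be vacuous; well-formedness of $D, D'$ is exactly the assumption that excludes this. And unlike declaration weakening, strengthening, and exchange, block exchange cannot be simulated by $\mathsf{rm}$ and $\mathsf{perm}$: those act \emph{within} a single declaration, and the grammar offers no way to fuse $D$ and $D'$ into one block, permute across the boundary, and split again. That is why, as the text remarks, it must be postulated --- and, along the lines above, justified --- on its own.
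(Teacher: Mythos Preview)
The paper does not actually prove this property. It is stated as a \emph{property} (using the \texttt{prop} environment, with no accompanying proof), and the surrounding text is explicit about its status: the section opens by emphasizing that ``whether a given object language does admit structural properties such as weakening or exchange is a property that needs to be verified on a case-by-case basis,'' and the sentence introducing this particular property says only that block-level exchange ``cannot be obtained `for free' from the above operations and we define it explicitly.'' In other words, the paper treats $exc$ as something each object language may or may not enjoy, to be checked per system, not as a theorem with a generic proof.

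Your proposal therefore goes well beyond the paper: you supply a uniform structural-induction argument under meta-level assumptions about the shape of inference rules (axioms depend only on the set of assumptions; non-axiom rules only extend the context on the right; freshness side-conditions depend only on the set of eigenvariables). That argument is plausible for the particular systems in Sect.~\ref{ssec:poly-expl}, and your analysis of the proviso---that well-formedness of both $D',D$ and $D,D'$ forces $\FV(D)\cap\FV(D')=\emptyset$ so that neither block depends on the other---is the right reading. But the paper deliberately refrains from fixing a rule format at this level of generality, precisely so that the framework can also describe substructural object languages where such properties fail. So while your proof is a reasonable sketch for the benchmarks at hand, it is not what the paper does: the paper simply postulates the rule and leaves its admissibility to be established for each object language. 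Your final paragraph, noting that $exc$ cannot be simulated by $\mathsf{rm}$ and $\mathsf{perm}$, is exactly the point the paper makes in introducing the property.
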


Further, we state structural properties of \emph{contexts} generically. To
``strengthen'' \emph{all} declarations in a given context $\Gamma$, we simply
write $\mathsf{rm}_A^*(\Gamma)$ using the $*$ superscript.
More generally, by $f^*$ with $f \in
\{\mathsf{rm}_A,\;\mathsf{perm}_\pi\}$, we mean the \emph{iteration}
of the operation $f$ over a context.

\begin{prop}[Structural Properties of Contexts] 
  \begin{enumerate}
  \item Context weakening
\[
\begin{array}{c}
\infer[\textit{c-wk}]{\Gamma\vdash J}{\mathsf{rm}_A^*(\Gamma) \vdash J}  
\end{array}
\]

  \item Context strengthening
\[
\begin{array}{c}
\infer[\textit{c-str}\dagger]{\mathsf{rm}_A^*(\Gamma)\vdash J}{\Gamma \vdash J}  
\end{array}
\]
with the proviso $(\dagger)$ that declarations that are instances of
$A$ are irrelevant to $J$.

  \item Context exchange
\[
\begin{array}{c}
\infer[\textit{c-exc}
]{\mathsf{perm}_\pi^*(\Gamma) \vdash J}{\Gamma\vdash J}  
\end{array}
\]

  \end{enumerate}

\label{prop:structc}
\end{prop}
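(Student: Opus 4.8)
The plan is to obtain each of the three context rules as a finite iteration of the matching \emph{declaration}-level rule from Proposition~\ref{prop:structd}, organized as a structural induction on the context $\Gamma$ viewed as a structured sequence ($\cdot \mid \Gamma, D$). Since $f^*(\cdot) = \cdot$ and $f^*(\Gamma_0, D) = f^*(\Gamma_0), f(D)$, each step of the induction peels off the rightmost declaration, applies the declaration-level rule once to it, and recurses. To make the induction hypothesis applicable to proper sub-sequences, I would first generalize each statement to carry an arbitrary surrounding context, e.g. for weakening: for all $\Gamma_1, \Gamma_3, J$, if $\Gamma_1, \mathsf{rm}_A^*(\Gamma), \Gamma_3 \vdash J$ then $\Gamma_1, \Gamma, \Gamma_3 \vdash J$; the original statement is the case $\Gamma_1 = \Gamma_3 = \cdot$.

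For \textit{c-wk}: the base case $\Gamma = \cdot$ is immediate. In the step case $\Gamma = \Gamma_0, D$ we have $\mathsf{rm}_A^*(\Gamma) = \mathsf{rm}_A^*(\Gamma_0), \mathsf{rm}_A(D)$, so from $\Gamma_1, \mathsf{rm}_A^*(\Gamma_0), \mathsf{rm}_A(D), \Gamma_3 \vdash J$ one application of \textit{d-wk} (with its ``$\Gamma$'' instantiated to $\Gamma_1, \mathsf{rm}_A^*(\Gamma_0)$, its declaration to $D$, its ``$\Gamma'$'' to $\Gamma_3$) yields $\Gamma_1, \mathsf{rm}_A^*(\Gamma_0), D, \Gamma_3 \vdash J$, and the induction hypothesis applied to $\Gamma_0$ with trailing context $D, \Gamma_3$ closes the case. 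Context exchange \textit{c-exc} is obtained by the identical induction, replacing $\mathsf{rm}_A$ by $\mathsf{perm}_\pi$ and \textit{d-wk} by \textit{d-exc}; here \textit{d-exc} carries no proviso, and well-formedness of each $\mathsf{perm}_\pi(D)$---hence of the whole permuted context---is built into Definition~\ref{def:struct}.

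Context strengthening \textit{c-str} uses the same skeleton with \textit{d-str}, but each iterated application incurs the proviso $(\dagger)$ that $A$ be irrelevant to $J$ and to the trailing context. I would discharge this by reading ``irrelevant'' as a property of the dependency structure of judgment forms (as in the footnote to Proposition~\ref{prop:structd}) rather than of any particular derivation: so construed, the global hypothesis of \textit{c-str} is downward closed---if $A$ is irrelevant to $J$ and to the original context, it is irrelevant to every sub-context encountered---and $\mathsf{rm}_A$ introduces no new dependencies, so the already-strengthened declarations $\mathsf{rm}_A(D_i)$ accumulating in the trailing context remain irrelevant to $A$ as well. Thus the premise of \textit{d-str} holds at every step. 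The same bookkeeping, together with $\FV(\mathsf{rm}_A(D)) \subseteq \FV(D)$ and $\FV(\mathsf{perm}_\pi(D)) = \FV(D)$, shows that the eigenvariable-disjointness side conditions of the schema-satisfaction rules are preserved, so every context appearing in the derivations is still a legal structured sequence.

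The induction skeleton itself is routine; the one genuinely non-bureaucratic point---and the main obstacle---is precisely the uniform discharge of the \textit{d-str} proviso at each iteration, which is exactly where the ``dependency-graph'' reading of irrelevance and its stability under $\mathsf{rm}_A$ and under passing to sub-contexts is needed.
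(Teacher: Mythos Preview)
The paper does not give a proof of this property. In the surrounding text it emphasizes that whether a given object language admits weakening, strengthening, or exchange ``needs to be verified on a case-by-case basis''; the context-level rules are then simply \emph{stated}, with $f^*$ defined as ``the iteration of the operation $f$ over a context.'' In other words, the paper takes \textit{c-wk}, \textit{c-str}, and \textit{c-exc} as the evident iterated forms of \textit{d-wk}, \textit{d-str}, and \textit{d-exc}, without spelling out the derivation.

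Your proposal is exactly that spelling-out, and it is correct. The generalization to an ambient $\Gamma_1,\ldots,\Gamma_3$ is what lets the induction hypothesis apply after one declaration-level step; the induction on $\Gamma$ as a structured sequence, peeling off the rightmost block and invoking the matching rule from Property~\ref{prop:structd}, is the intended reduction. Your reading of the $(\dagger)$ proviso as a property of the judgment dependency graph (cf.\ the footnote to Property~\ref{prop:structd}) rather than of a particular derivation is also the paper's intended reading, and it is what makes the proviso stable under passing to sub-contexts and under already-strengthened trailing declarations. So there is no divergence in approach---you have simply written down the routine argument the paper leaves implicit.
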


Finally, by $\mathsf{rm}_D$ (resp.\ $\mathsf{rm}^*_D$), we mean the
iteration of $\mathsf{rm}_A$ (resp.\ $\mathsf{rm}^*_A$) for every
$\memb A D$, while keeping the resulting declaration and the overall
context well-formed,
e.g.\ $\mathsf{rm}_{\istm{y};\,y\oftp A} (\_) =
\mathsf{rm}_{\istm{y}}(\mathsf{rm}_{y\oftp A} (\_)) $. All the above
properties are \emph{admissible} with respect to those extended $\mathsf{rm}$
functions. 

The following examples  illustrate some of the subtleties of this
machinery:

\begin{itemize}
\item $\Gamma, \mathsf{rm}_{x\oftp A}(\istm{y};y\oftp A) = \Gamma,
  \istm{y}$. Bound variables in the annotation of $\mathsf{rm}$ can
  always be renamed so that they are consistent with the
  eigenvariables used in the declaration.

\item $\mathsf{rm}^*_{\istm{x}}(\istm{x_1}, \istp{\alpha},
  \istp{\beta}, \istm{x_2}) = \istp{\alpha}, \istp{\beta}$.
  Here, the $\mathsf{rm}$ operation drops one of the alternatives in the
  schema $\axsch$.
\item $\mathsf{rm}^*_{y\oftp A}(\istm{x_1};x_1\oftp \mathsf{nat},\;\istm{x_2};x_2\oftp \mathsf{bool},\;\istp{\alpha}) = 
(\istm{x_1}, \istm{x_2}, \istp{\alpha})$. The schematic variable $A$ occurring in the annotation of
$\mathsf{rm}$ will be instantiated with \textsf{nat} when strengthening the block  
$\istm{x_1};x_1\oftp \mathsf{nat}$ and similarly with \textsf{bool}.

\item $\mathsf{rm}^*_{\istm{y};\,y\oftp A} (\istp{\alpha},\;\istp{\beta}) =
  (\istp{\alpha},\;\istp{\beta}) $. 
A $\mathsf{rm}$ operation may leave a context unchanged.
\end{itemize}

We state next the substitution properties for assumptions.  The
\emph{parametric substitution} property allows us to instantiate
parameters, i.e., eigenvariables, in the context. For example, given
$\istp{\alpha},\istp{\beta} \vdash J$ and a type $\mathsf{bool}$, we
can obtain $\istp{\mathsf{bool}},\istp{\beta} \vdash
[\mathsf{bool}/\alpha]J$ by replacing $\alpha$ with
$\mathsf{bool}$. 
The \emph{hypothetical
  substitution} property allows us to eliminate an atomic formula $A$
that is part of a declaration $D$. 
For example, given $\istp{\mathsf{bool}},\;\istp{\beta} \vdash J$ and
evidence that $\istp{\mathsf{bool}}$, we can obtain $\istp{\beta}
\vdash J$. In type theory the two substitution properties collapse
into one.

\begin{prop}[Substitution Properties]
  \begin{itemize}
  \item \textbf{Hypothetical Substitution}: \\If $\Gamma_1, (D_1;A;D_2), \Gamma_2 \vdash J$
    and $\Gamma_1, D_1 \vdash A$, then $\Gamma_1,(D_1;D_2), \Gamma_2 \vdash J$
    provided that $D_1;D_2$ is a well-formed declaration in $\Gamma_1$.
  \item \textbf{Parametric Substitution}: \\ If $\Gamma_1,(D_1;\wf{x};D_2), \Gamma_2 \vdash J$,
    then $\Gamma_1,(D_1; [t/x]D_2),\,[t/x]\Gamma_2  \vdash
    [t/x]J$ for any term $t$ for which $\Gamma_1, D_1 \vdash \wf{t}$
    holds.
  \end{itemize}
\end{prop}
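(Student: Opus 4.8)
The plan is to prove both parts by structural induction on the derivation $\mathcal{D}$ of the first judgment --- for hypothetical substitution on $\mathcal{D} :: \Gamma_1,(D_1;A;D_2),\Gamma_2 \vdash J$, and for parametric substitution on $\mathcal{D} :: \Gamma_1,(D_1;\wf{x};D_2),\Gamma_2 \vdash J$. Since the two arguments run in parallel, I would present them together and only flag where they differ. The only facts about the ambient inference system that I need are the ones common to all object languages we consider: inference rules introduce new assumptions only by extending the context on the right (either inside the current rightmost block or as a fresh declaration), obeying the Barendregt convention, and the ``variable'' steps are exactly the leaves that look up an atom already present in the context. Under this discipline the induction proceeds by cases on the last rule of $\mathcal{D}$, and the two critical cases below are exhaustive.

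For a rule that does not consult the context, each premise is a derivation in a context of the form $\Gamma_1,(D_1;A;D_2),\Gamma_2,\Sigma'$ with $\Sigma'$ possibly empty and, by freshness, $\FV(\Sigma')$ disjoint from the eigenvariables already in play; so I apply the induction hypothesis with the tail taken to be $\Gamma_2,\Sigma'$ and re-apply the same rule. For parametric substitution I additionally push $[t/x]$ through $\Sigma'$, which is sound because the rules are schematic in their eigenvariables (hence closed under term substitution) and because $x$ is chosen distinct from $\FV(\Sigma')$. Well-formedness of the intermediate declarations is preserved: in the hypothetical case I carry along the hypothesis that $D_1;D_2$ is a well-formed declaration in $\Gamma_1$, and dropping the atom $A$ cannot invalidate anything to its right precisely because that hypothesis excludes the case where $A$ is the well-formedness atom of an eigenvariable still in use; in the parametric case, $[t/x]D_2$ is well-formed whenever $\Gamma_1, D_1 \vdash \wf{t}$, which is exactly the assumed side condition (and $x$ is fresh for $\Gamma_1, D_1$, so $D_1$ is untouched).

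The crucial case is the leaf that looks up an assumption. In the hypothetical case, if it looks up the very atom $A$ being removed, I replace the leaf by the given derivation $\Gamma_1, D_1 \vdash A$, weakened --- using the structural properties already established (Properties~\ref{prop:structd} and~\ref{prop:structc}) --- to the context $\Gamma_1,(D_1;D_2),\Gamma_2$; if it looks up any other atom, that atom is still present after $A$ is dropped, so the same rule applies verbatim. In the parametric case, the analogous leaf looks up an atom; if it is $\wf{x}$, then under the substitution its conclusion is $[t/x]\wf{x} = \wf{t}$, and I plug in $\Gamma_1, D_1 \vdash \wf{t}$ weakened to $\Gamma_1,(D_1;[t/x]D_2),[t/x]\Gamma_2$; any other looked-up atom $B$ becomes $[t/x]B$, which still occurs --- with the same substitution applied --- in the rewritten context, so the rule re-applies.

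I expect the main obstacle to be making this ``context discipline'' precise enough to license the induction while staying at the level of mathematical practice adopted in the rest of this section: one must ensure that the rules genuinely add assumptions only on the right and that the variable/axiom leaves are the only points at which the context is consulted, so that the two critical cases are indeed exhaustive. A secondary point needing care is the bookkeeping of freshness and of how $[t/x]$ commutes with declarations introduced \emph{inside} the derivation; this is handled uniformly by the Barendregt convention together with the fact that eigenvariables introduced to the right of $\wf{x}$ are chosen distinct from $x$. As noted earlier, all of this can be --- and in Beluga has been --- made fully formal, at which point the two properties collapse into a single substitution lemma.
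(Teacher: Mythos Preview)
Your induction argument is the standard route and would establish the substitution properties for any particular object language whose rules obey the discipline you describe. However, the paper does not actually prove this statement: it is presented as a \emph{Property} (in the sense of Section~\ref{sec:theory}) without proof, alongside the structural Properties~\ref{prop:structd} and~\ref{prop:structc}. The paper is explicit that ``whether a given object language does admit structural properties such as weakening or exchange is a property that needs to be verified on a case-by-case basis,'' and substitution is treated the same way. These properties are stated \emph{generically and abstractly} as part of the common infrastructure; they are not theorems of the framework but rather admissibility conditions that each OL must be checked to satisfy.

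So there is no proof in the paper to compare against. What you have written is essentially the template one would instantiate when verifying the property for a concrete system (and indeed Lemma~\ref{lem:of-subst} later in the paper is justified by appeal to exactly these principles, not by replaying an induction). Your remark that in a type-theoretic setting such as Beluga the two collapse into one substitution lemma is consistent with what the paper says just after the statement. If anything, you have over-delivered: the paper deliberately stays at the level of mathematical practice and does not commit to a single ambient inference system, so the ``context discipline'' you worry about making precise is precisely what the paper leaves open by design.
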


While parametric and hypothetical substitution do not preserve schema
satisfaction by definition, 
we typically use them in such a way that contexts
continue to satisfy a given schema.

We close this section recalling that, although we concentrate in our
benchmarks on describing object languages that satisfy structural
properties usually associated with intuitionistic logic,
we note that our terminology can be used to also characterize sub-structural
object languages. In the case of a linear object language, we might
choose to only use operations such as $\mathsf{perm}$ and omit
operations such as $\mathsf{rm}$ so as to faithfully and adequately
characterize the allowed context operations.

\subsection{The Polymorphic Lambda-Calculus Revisited}
\label{ssec:poly-expl}

In systems supporting HOAS, inference rules are usually expressed
using an implicit-context representation as illustrated in
Fig.~\ref{fig:wftp}.  The need for explicit structured contexts, as
discussed in Sects.~\ref{subsec:ctxdef} and~\ref{subsec:struct},
arises when performing meta-reasoning about the judgments expressed by
these inference rules.  In order to make the link, we revisit the
example from Sect.~\ref{subsec:theory} giving a presentation with
explicit contexts, and then we make some preliminary remarks about
context schemas and meta-reasoning.  We will adopt the
explicit-context representation of inference rules in the rest of the
paper with the informal understanding of how to move between the
implicit and explicit formulations.

\[
\begin{array}{l}
\multicolumn{1}{l}{\mbox{Well-formed Types}} \\[1em]  
\infer[tp_{v}]{\Gamma \vdash \istp{\alpha }}{\istp{\alpha } \in \Gamma} \quad\quad
\infer[tp_{ar}]{\Gamma \vdash \istp{(\fsp A B)}}{\Gamma \vdash \istp{A} & 
\Gamma \vdash \istp{B}} \qquad\qquad
\infer[tp_{al}]{\Gamma \vdash \istp{(\all{\alpha}{A})}}{
  \Gamma, \istp{\alpha} \vdash \istp{A}
}\\[1em]

\multicolumn{1}{l}{\mbox{Well-formed Terms}} \\[1em]  
\infer[tm_{v}]{\Gamma \vdash \istm{x}}{\istm{x} \in \Gamma}
\quad
\infer[tm_{l}]{\Gamma \vdash \istm{(\lamt{x}{A}{M}})}{
& 
\Gamma, \istm{x} \vdash \istm{M}}
\qquad
\infer[tm_{tl}]{\Gamma \vdash \istm{(\tlam{\alpha}{M}})}{\Gamma, \istp{\alpha} \vdash
  \istm{M}}
\\[1em]
 \infer[tm_a]{\Gamma \vdash \istm{(\app{M_1}{M_2})}}
 {\Gamma \vdash \istm{M_1}\qquad  \Gamma \vdash \istm{M_2}
 }
\quad\infer[tm_{ta}]{\Gamma \vdash \istm{(\tapp{M}{A})}}
  {\Gamma \vdash \istm{M}\qquad\Gamma\vdash \istp{A}
  } \\[1em]
\multicolumn{1}{l}{\mbox{Typing for the Polymorphic $\lambda$-Calculus}}\\[1em]
 \ianc{x\oftp B\in \Gamma}{\Gamma \vd x \hastype B}{{\mathit{of}_v}}  
\quad 
\ianc{\Gamma, \istp{\alpha} \vd M \hastype B}{\Gamma \vd
  \tlam{\alpha}{M} \hastype 
  \all \alpha B}{{\mathit{of}_{tl}}}    
\quad
\ianc{\Gamma \vd M \hastype \all \alpha B \quad \Gamma \vd \istp{B}}{\Gamma \vd (\tapp M   B)
   \hastype [B/\alpha]A}{{\mathit{of}_{ta}}}
\vsk
\ianc{\Gamma, \istm{x};x\oftp A \vd M \hastype B}{\Gamma \vd \lamt x A M \hastype
  \fsp A B}{{\mathit{of}_l}}   
\qquad
\ibnc{\Gamma \vd M \hastype \fsp B A}{\Gamma \vd N \hastype B}{\Gamma \vd (\app M   N)
   \hastype A}{{\mathit{of}_a}}
 \end{array}
\]

In this formulation, and differently from the implicit one, we have a
base case for variables. Here, to look up an assumption in a context,
we simply write $A \in \Gamma$, meaning that there is some block $D$
in context $\Gamma$ such that $\memb A D$.  For example $x\oftp B \in
\Gamma$ holds if $\Gamma$ contains block $\istm{x}; x\oftp B$.  We
will also overload the notation and write $D\in\Gamma$ to indicate
that $\Gamma$ contains the entire block $D$.  We recall the
distinction between the comma used to separate blocks, and the
semi-colon used to separate atoms within blocks, as seen in the
$\mathit{of}_l$ rule, for example.  The assumption that all variables
occurring in contexts are distinct from one another is silently
preserved by the implicit proviso in rules that extend the context,
where we rename the bound variable if already present.

Note that we use a generic $\Gamma$ for the context appearing in these
rules, whereas the reader may have expected this to be, for example,
${\atpctx}$ having schema ${\atpsch}$ in the typing rules.
In fact, we take a more liberal approach, where we pass to the rules
\emph{any} context that can be seen as a \emph{weakening}
of $\atpctx$; in other words, any $\Gamma$ such that there exists a
$D$ for which $\mathsf{rm}^*_D(\Gamma) = \atpctx$.

Suppose now, to fix ideas, that $\atpctx \vd M \hastype B$ holds.  By
convention, we implicitly assume that both $B$ and $M$ are
well-formed, which means that $\atpctx \vd \istp{B}$ and $\atpctx \vd
\istm{M}$.  In fact, we can define functions $\mathsf{rm}^*_{x\oftp
  C}$ and $\mathsf{rm}^*_{\istm{x};x\oftp C}$, use them to define
strengthened contexts $\axctx$ and $\alphctx$, and apply the
\textit{c-str} rule to conclude the following:
\begin{enumerate}
\item $\axctx := \mathsf{rm}^*_{x\oftp C}(\atpctx)$, \quad
  $\schema{\axctx}{\axsch}$, \quad and $\axctx \vdash \istm{M}$;
\item $\alphctx := \mathsf{rm}^*_{\istm{x};x\oftp C}(\atpctx)$, \quad
  $\schema{\alphctx}{\alphsch}$, \quad and $\alphctx \vdash \istp{B}$.
\end{enumerate}

\subsection{Generalized Contexts vs.\ Context Relations}
As an alternative to using functions such as $\mathsf{rm}^*_{x\oftp C}$
in item (1), we may adopt the more suggestive notation
$\crel{\axctx}{\atpctx}$, using inference rules for the context
relation corresponding to the graph of the function $\lambda
d. \caseof{d}{\istp{\alpha} \mapsto \istp{\alpha} \mid \istm{x};x\oftp
  C \mapsto \istm{x}}$:

\[
\begin{array}{c}
\infer{\crel \cdot \cdot}{}\qquad
\infer{
\crel {(\axctx, \istp{\alpha})} {(\atpctx, \istp{\alpha})}
}{\crel \axctx \atpctx} \qquad
\infer{
\crel {(\axctx, \istm{x})} {(\atpctx, \istm{x};x\oftp B)}
}{\crel \axctx \atpctx}\end{array}
\]
Similarly, an alternative to $\mathsf{rm}^*_{\istm{x};x\oftp C}$ in
item (2) is
the following context relation:
\[
\begin{array}{c}
\infer{\crel \cdot \cdot}{}\qquad
\infer{
\crel {(\alphctx, \istp{\alpha})} {(\atpctx, \istp{\alpha})}
}{\crel \alphctx \atpctx} \qquad
\infer{
\crel {\alphctx} {(\atpctx, \istm{x};x\oftp B)}
}{\crel \alphctx \atpctx}
\end{array}
\]
%
The above two statements can now be restated using these relations.
Given $\atpctx$, let $\axctx$ and $\alphctx$ be the unique contexts
such that:
\begin{enumerate}
\item $\crel{\axctx}{\atpctx}$, \quad
  $\schema{\axctx}{\axsch}$, \quad and $\axctx \vdash \istm{M}$;
\item $\crel{\alphctx}{\atpctx}$, \quad
  $\schema{\alphctx}{\alphsch}$, \quad and $\alphctx \vdash \istp{B}$.
\end{enumerate}

When stating and proving properties, we often relate two judgments 
to each other, where each one has its own contexts.  For
example, we may want to prove statements such as ``if $\axctx \vdash
J_1$ then $\atctx \vdash J_2$.'' The question is how  we achieve
that.  In the benchmarks in this paper, we consider two approaches:
\begin{enumerate}
\item We reinterpret the statement in the \emph{smallest} context that
  collects all relevant assumptions;  we call this the
  \emph{generalized context} approach (G). In this case, we
  reinterpret the above statement about $J_1$ in a  context containing
  additional assumptions about typing, which in this case is
  $\atpctx$, yielding:
  \begin{center}
    ``if $\atctx \vdash J_1$ then $\atctx \vdash J_2$.''
  \end{center}

\item We state how two (or more) contexts are \emph{related}; we call
  this the \emph{context relations} approach (R).  Here,
  we define context relations such as those above and use them
  \emph{explicitly} in the statements of theorems.  In this case, we
  use $\crel \axctx \atctx$ yielding:
  \begin{center}
   ``if $\axctx \vdash J_1$ and $\crel \axctx
  \atctx$ then $\atctx \vdash J_2$.''
  \end{center}
Note that here too we ``minimize'' the relations, in the sense of
relating the smallest possible contexts where the relevant judgments
make sense.
\end{enumerate}

\subsection{Context Promotion and Linear Extension of Contexts and Schemas}
Another common idiom in meta-reasoning occurs when we have established
a property for a particular context and we would like to use this
property subsequently in a more general context.  Assume that we have
proven a lemma about types in context $\alphctx$ of the form ``if
$\alphctx \vdash J_1$ then $\alphctx \vdash J_2$.'' We now want to
\emph{use} this lemma in a proof about terms, that is where we have a
context $\axctx$ and $\axctx \vdash J_1$.  We may need to \emph{promote}
this lemma, and prove: ``if $\axctx \vdash J_1$ then $\axctx \vdash
J_2$.''  We will see several examples of such promotion lemmas in
Sect.~\ref{sec:bench}.

Finally, to structure our subsequent discussion, it is
useful to introduce some additional terminology regarding context
relationships, where we use ``relationship'' in contrast to the more
specific notion of ``context relation.''
\begin{itemize}
\item \emph{Linear extension of a declaration}: a declaration $D_2$ is
  a \emph{linear extension} of a declaration $D_1$, if every atom in the
  declaration $D_1$ is a member of the declaration $D_2$.
\item \emph{Linear extension of a schema}: a schema $S_2$ is a
  \emph{linear extension} of a schema $S_1$, if every declaration 
  in $S_1$ is a linear extension of a declaration in $S_2$. For
  example $\atpsch$ is a linear extension of $\axsch$. 
\end{itemize}
Given a context $\Phi_1$ of schema $S_1$ and a context $\Phi_2$ of
schema $S_2$ where $S_2$ is a linear extension of $S_1$, we say that
$\Phi_2$ is a linear extension of $\Phi_1$ (i.e., linear context
extension).  Of course,  sometimes declarations, schemas and contexts are not
related linearly. For example, we may have a schema $S_2$ and a schema
$S_3$ both of which are linear extensions of $S_1$; however, $S_2$ is
not a linear extension of $S_3$ (or vice versa). In this case, we say
$S_2$ and $S_3$ are \emph{non-linear extensions} of each other and
they share a most specific common fragment.

\section{Benchmarks}
\label{sec:bench}

In this section, we present several case studies establishing proofs of
various properties of the lambda-calculus. We
have structured this section around the different shapes and
properties of contexts, namely:
\begin{enumerate}
\item Basic linear context extensions: We consider here contexts
  containing no alternatives. We refer to such contexts as
  \emph{basic}. We discuss context membership and revisit structural
  properties such as weakening and strengthening.
\item Linear context extensions with alternative declarations. %
\item 
  Non-linear context extensions: We consider more complex
  relationships between contexts 
   and discuss how our proofs involving weakening and
  strengthening change.
\item Order: 
  We consider how the ordered structure of contexts impacts proofs
  relying on exchange.
\item Uniqueness: We consider here a case
  study which highlights how the issue of distinctness of all variable
  declarations in a context arises in proofs.
\item Substitution: Finally, we exhibit  the fundamental properties of
hypothetical and  parametric substitution.
\end{enumerate}

The benchmark problems are purposefully \emph{simple}; they
are designed to be easily understood so that one can quickly
appreciate the capabilities and trade-offs of the different systems in
which they can be implemented. Yet we believe they are representative
of the issues and problems arising when encoding formal systems and
reasoning about them. We will subsequently discuss both the G approach
and the R approach and comment on the trade-offs and differences in
proofs depending on the chosen approach.

\subsection{Basic Linear Context Extension}
\label{ssec:basic}
We concentrate in this section on contexts with simple schemas consisting
of a single declaration. We aim to show the basic building blocks of reasoning
over open terms: namely what a context looks like and the structure of an
inductive proof. For the latter, we focus on the case analysis and,
at the risk of being pedantic, the precise way in which the induction
hypothesis is applied.

We start with a very simple judgment: \emph{algorithmic equality} for
the untyped lambda-calculus, written $(\termeq{M}{N}$), also known as
\emph{copy} clauses~\citep[see][]{Miller91jlc}.
We say that two terms are algorithmically equal
provided they have the same structure with respect to the
constructors. 
\[
\begin{array}{c}
\multicolumn{1}{l}{\mbox{Algorithmic Equality}}\\[0.75em]
\infer[ae_v]{\Gamma \vdash \termeq{x}{x}}{\termeq{x}{x} \in \Gamma}
\qquad 
\infer[ae_l]{\Gamma \vdash \termeq{(\lam{x}{M})}{(\lam{x}{N})}}{
\Gamma, \istm{x}; \termeq{x}{x} \vdash \termeq{M}{N}} 
\\[+1em]
\infer[ae_a]{\Gamma \vdash \termeq{(\app{M_1}{M_2})}{(\app{N_1}{N_2})}}
{\Gamma \vdash \termeq{M_1}{N_1} & \Gamma \vdash \termeq{M_2}{N_2}
}
\end{array}
\]
The context schemas needed for reasoning about this judgment are the
following:
\[
 \label{page:ctxdef}
\begin{array}[t]{llll}
\mbox{Context Schemas}
               & \isch & := & \istm{x} \\
               & \asch & := & \istm{x}; \termeq{x}{x} \\
\end{array}
\]
where a context $\actx$ satisfying $\asch$ is the smallest possible
context in which such an equality judgment can hold.
Thus, as discussed in the previous section,  when writing judgment $\actx
\vdash \termeq{M}{N}$, we assume that 
$\actx \vdash\istm{M}$ and $\actx \vdash\istm{N}$ hold, 
and thus also $\ictx \vdash\istm{M}$ and $\ictx \vdash\istm{N}$ hold
by employing an implicit \copda{str}{decl} (using
$\mathsf{rm}_{\termeq{x}{x}}^*$). We note that both contexts $\ictx$ and $\actx$
are simple contexts consisting of one declaration block. Moreover,
$\isch$ is a sub-schema of $\asch$ and therefore the context $\actx$
is a linear extension of the context $\ictx$.

In view of the pedagogical nature of this subsection and also of the
content of Sect.~\ref{ssec:sub}, which will build on this example,
we start with a straightforward property: algorithmic equality is reflexive.
This property should follow by induction on $M$ (via the well-formed
term judgment, which is not shown, but uses the obvious subset of the
rules in Sect.~\ref{ssec:poly-expl}).
However, the question of which contexts the two judgments should be stated in
arises immediately; recall that we want to prove ``if $\Gamma_1 \vdash \istm{M}$
then $\Gamma_2 \vdash \termeq{M}{M}$.''  $\Gamma_2$ should be a
context satisfying $\asch$ since the definition of this schema came
directly from the inference rules of this judgment.  The form that
$\Gamma_1$ should take is less clear.  The main requirement comes from
the base case, where we must know that for every assumption $\istm{x}$
in $\Gamma_1$ there exists a corresponding assumption $\termeq{x}{x}$
in $\Gamma_2$.  The answer differs depending on whether we choose the
R approach or the G approach.  We discuss each in turn below.

\subsubsection{Context Relations, R Version}

\label{sssec:rel}
The relation needed here is $\crel \ictx \actx$, defined as follows:
\[
\begin{array}{c}
  \multicolumn{1}{l}{\mbox{Context Relation}}\\[0.75em]
\infer[crel_e]{\crel \ldot \ldot}{} \quad\quad\quad\quad
\infer[crel_{xa}]{\crel {\ictx, \istm x} {\actx, \istm{x}; \termeq{x}{x}}}{\crel \ictx \actx}
\end{array}
\]
Note that
$\istm x$ will occur in $\ictx$ in sync with an assumption block
containing $\istm{x}; \termeq x x$ in $\actx$. This is a
property which needs to be established separately, so at the risk of
redundancy, we state it as a ``member'' lemma.
\begin{lem}[Context Membership]
\label{le:member_refl}
   $\crel \ictx \actx$ implies that $\istm x\in\ictx$
iff $\istm x;\termeq x x\in\actx$.
\end{lem}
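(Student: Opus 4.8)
The plan is to prove both directions of the biconditional simultaneously by structural induction on the derivation of $\crel \ictx \actx$. The context relation has only the two rules $crel_e$ and $crel_{xa}$, so there are exactly two cases, and the induction is effectively on the common length of the two related contexts; $x$ is carried along as a fixed but arbitrary parameter throughout.

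In the base case the derivation ends in $crel_e$, so $\ictx$ and $\actx$ are both empty; then $\memb{\istm x}{\cdot}$ and $\memb{\istm x; \termeq x x}{\cdot}$ are both false and the biconditional holds vacuously. In the step case the derivation ends in $crel_{xa}$, so $\ictx = \ictx', \istm y$ and $\actx = \actx', \istm y; \termeq y y$ for some $\ictx', \actx'$ and eigenvariable $y$, obtained from a shorter derivation of $\crel{\ictx'}{\actx'}$. I would then spell out what block membership means for contexts of exactly this shape: every block of $\ictx$ is a singleton $\istm{\cdot}$ and every block of $\actx$ is a pair $\istm{\cdot}; \termeq{\cdot}{\cdot}$, so $\memb{\istm x}{\ictx}$ holds iff $x = y$ or $\memb{\istm x}{\ictx'}$, and symmetrically $\memb{\istm x; \termeq x x}{\actx}$ holds iff $x = y$ or $\memb{\istm x; \termeq x x}{\actx'}$. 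The ``$x = y$'' disjuncts are literally the same condition; the remaining disjuncts are equivalent by the induction hypothesis applied to $\crel{\ictx'}{\actx'}$. Chaining these equivalences yields the desired biconditional for $\ictx$ and $\actx$, in both directions at once.

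There is no real obstacle here: the statement is pure bookkeeping about how $crel_e$ and $crel_{xa}$ build the two related contexts in lockstep. The only point that deserves a word is the treatment of bound-variable names. Invoking the Barendregt convention, I would assume that the eigenvariable $y$ introduced by the last instance of $crel_{xa}$ is fresh for $\ictx'$ (equivalently for $\actx'$), i.e. $y \notin \FV(\ictx')$, just as the schema-satisfaction rules impose $\FV(D) \cap \FV(\Gamma) = \emptyset$. This is what makes the case split ``$x = y$, or $\istm x$ already occurs in $\ictx'$'' genuinely exhaustive and unambiguous, and it is the formal content of the remark that $\istm x$ occurs in $\ictx$ exactly in sync with a block $\istm x; \termeq x x$ in $\actx$. (The lemma would in fact still hold without the freshness assumption, since the last block always witnesses the $x=y$ case on both sides; but the ``in sync'' intuition would be muddied.)
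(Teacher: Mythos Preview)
Your proposal is correct and follows exactly the approach the paper indicates: induction on the derivation of $\crel{\ictx}{\actx}$. You have simply spelled out in detail the two cases ($crel_e$ and $crel_{xa}$) that the paper leaves implicit in its one-line proof.
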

\begin{proof}
  By induction on $\crel \ictx \actx$.
\end{proof}
\begin{theorem}[Admissibility of Reflexivity, R Version]
\label{thm:reflR}
  Assume $\crel \ictx \actx$. \\
If $\ictx \vdash \istm{M}$ then $\actx \vdash \termeq{M}{M}$.
\end{theorem}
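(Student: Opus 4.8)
The plan is to proceed by structural induction on the derivation of $\ictx \vdash \istm{M}$, using the context relation $\crel \ictx \actx$ as a standing hypothesis that is threaded through every case. The intuition is that each rule for well-formedness of terms has a mirror-image rule for algorithmic equality (the ``copy'' clauses): $tm_v$ mirrors $ae_v$, $tm_a$ mirrors $ae_a$, and $tm_l$ mirrors $ae_l$. So for each case we apply the corresponding $\termeqcon$-rule, using the induction hypothesis on each subderivation.

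First I would handle the variable case $tm_v$: here $M = x$ with $\istm{x} \in \ictx$. By Lemma~\ref{le:member_refl} (Context Membership), $\crel \ictx \actx$ together with $\istm{x} \in \ictx$ gives $\istm{x}; \termeq{x}{x} \in \actx$, hence $\termeq{x}{x} \in \actx$, and we conclude $\actx \vdash \termeq{x}{x}$ by rule $ae_v$. Next, the application case $tm_a$: $M = \app{M_1}{M_2}$ with $\ictx \vdash \istm{M_1}$ and $\ictx \vdash \istm{M_2}$. Applying the induction hypothesis twice (the context relation is unchanged) yields $\actx \vdash \termeq{M_1}{M_1}$ and $\actx \vdash \termeq{M_2}{M_2}$, and rule $ae_a$ gives $\actx \vdash \termeq{(\app{M_1}{M_2})}{(\app{M_1}{M_2})}$.

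The interesting case is the abstraction case $tm_l$, where $M = \lam{x}{M'}$ and the premise is $\ictx, \istm{x} \vdash \istm{M'}$. Here I would extend the context relation: from $\crel \ictx \actx$ and rule $crel_{xa}$ we obtain $\crel {(\ictx, \istm{x})} {(\actx, \istm{x}; \termeq{x}{x})}$. Now the induction hypothesis applies to the subderivation $\ictx, \istm{x} \vdash \istm{M'}$ with this extended relation, yielding $\actx, \istm{x}; \termeq{x}{x} \vdash \termeq{M'}{M'}$, and rule $ae_l$ concludes $\actx \vdash \termeq{(\lam{x}{M'})}{(\lam{x}{M'})}$. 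I expect this case to be the only real obstacle, and even then it is mild: the key point is recognizing that the induction must be stated with the context relation as a \emph{parameter} (i.e., generalized over all related context pairs) rather than fixed, so that the relation can be grown in sync with the term context when we pass under the binder. One should also note the implicit side condition on $tm_l$ that $x$ is fresh for $\ictx$; by the Barendregt convention this freshness is inherited on the $\termeqcon$ side, so the extended context $\actx, \istm{x}; \termeq{x}{x}$ remains well-formed and of schema $\asch$.
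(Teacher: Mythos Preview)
Your proposal is correct and follows essentially the same approach as the paper: induction on the derivation of $\ictx \vdash \istm{M}$, with the variable case handled via the Context Membership lemma and rule $ae_v$, the application case by two uses of the IH and rule $ae_a$, and the lambda case by extending the context relation with $crel_{xa}$ before invoking the IH and closing with $ae_l$. Your additional remarks about generalizing the induction over the related context pair and about freshness via the Barendregt convention make explicit points the paper leaves implicit, but the structure of the argument is the same.
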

\begin{proof}
By induction on the derivation $\cD :: \ictx \vdash \istm{M}$.

\noindent
Case: $$\cD = \ianc{\istm{x} \in \ictx}{\ictx \vdash \istm{x}}{tm_v}$$
$\istm{x}\in\ictx$ \lcase{by rule premise}
$\istm x;\termeq{x}{x}\in\actx$ \lcase{by Lemma~\ref{le:member_refl}}
$\actx \vdash \termeq{x} x$\lcase{by rule $ae_v$}

\noindent
Case: $$\cD = \ibnc{\aboved {\cD_1}{ \ictx \vdash\istm {M_1}}}{
 \aboved
  {\cD_2}
  { \ictx \vdash\istm {M_2}}} {
\ictx \vdash  \istm{(\app{M_1}{M_2})}}{tm_a}$$
$\ictx \vdash\istm {M_1}$\lcase {sub-derivation $\cD_1$}
$\actx \vdash\termeq {M_1} M_1$\lcase {by IH}
$\ictx \vdash\istm {M_2}$\lcase {sub-derivation $\cD_2$}
$\actx \vdash\termeq {M_2} M_2$\lcase {by IH}
$\actx \vdash \termeq{(\app{M_1}{M_2}) \ (\app{M_1}{M_2})}$\lcase{by
  rule $ae_a$}

\noindent
Case: 
$$\cD = \ianc{\aboved {\cD'}{ \ictx, \istm x \vdash\istm  M}}{\ictx\vd\istm{(\lam x M)}}{tm_l}$$
$\ictx,\istm{x} \vdash\istm{M}$ \hfill
sub-derivation  $\cD'$\\
$\crel \ictx \actx$ \lcase{by assumption}
$\crel {(\ictx,\istm x)} {(\actx,\istm{x};\termeq x x)}$\lcase{by rule $crel_{xa}$}
$\actx, \istm{x}; \termeq{x}{x} \vdash \termeq{M}{M}$ \hfill
by IH \\
$\actx \vdash \termeq{(\lam{x}{M})}{(\lam{x}{M})}$ \hfill by rule
$ae_l$.
\end{proof}

\subsubsection{Generalized Contexts, G Version}
\label{sssec:gen}
In this example, since $\asch$ includes all assumptions in $\isch$,
$\asch$ will serve as the schema of our generalized context.

\begin{theorem}[Admissibility of Reflexivity, G Version]
\label{thm:reflG}
  If $\actx \vdash \istm{M}$ then $\actx \vdash \termeq{M}{M}$.
\end{theorem}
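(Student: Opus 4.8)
The plan is to prove this by induction on the derivation $\cD :: \actx \vdash \istm{M}$, mirroring the structure of the R-version proof (Theorem~\ref{thm:reflR}) but without the explicit context relation. The key simplification is that both judgments now live in the \emph{same} context $\actx$ of schema $\asch$, so the bookkeeping needed to keep two related contexts in sync disappears. The cases on the last rule applied are $tm_v$, $tm_a$, and $tm_l$ (the polymorphic constructs do not arise here since we are working with the untyped lambda-calculus).

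For the variable case, $\cD$ ends in $tm_v$ with premise $\istm{x} \in \actx$. The subtlety is that $\actx$ has schema $\asch$, whose single declaration is $\istm{x}; \termeq{x}{x}$; so whenever $\istm{x} \in \actx$, the block containing $\istm{x}$ must, by the definition of schema satisfaction, also contain $\termeq{x}{x}$, and hence $\termeq{x}{x} \in \actx$. This replaces the appeal to Lemma~\ref{le:member_refl}: instead of relating membership across two contexts, we simply read off from the shape of $\asch$ that the two atoms occur in the same block. Then rule $ae_v$ gives $\actx \vdash \termeq{x}{x}$. For the application case, $\cD$ ends in $tm_a$ with subderivations $\cD_1 :: \actx \vdash \istm{M_1}$ and $\cD_2 :: \actx \vdash \istm{M_2}$; applying the induction hypothesis to each yields $\actx \vdash \termeq{M_1}{M_1}$ and $\actx \vdash \termeq{M_2}{M_2}$, and rule $ae_a$ concludes $\actx \vdash \termeq{(\app{M_1}{M_2})}{(\app{M_1}{M_2})}$.

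The abstraction case is where the G approach shows its character. Here $\cD$ ends in $tm_l$ with subderivation $\cD' :: \actx, \istm{x} \vdash \istm{M}$. We would like to apply the induction hypothesis, but the context $\actx, \istm{x}$ does \emph{not} satisfy $\asch$: the new block contains only $\istm{x}$, not the companion atom $\termeq{x}{x}$. The fix is to weaken this block to $\istm{x}; \termeq{x}{x}$ — that is, to move from $\actx, \istm{x}$ to $\actx, (\istm{x}; \termeq{x}{x})$ — which is exactly an instance of declaration weakening (\textit{d-wk}), using an $\mathsf{rm}$ operation whose graph is $\istm{x}; \termeq{x}{x} \mapsto \istm{x}$. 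After this weakening step, $\actx, \istm{x}; \termeq{x}{x}$ has schema $\asch$, the induction hypothesis applies and yields $\actx, \istm{x}; \termeq{x}{x} \vdash \termeq{M}{M}$, and rule $ae_l$ concludes $\actx \vdash \termeq{(\lam{x}{M})}{(\lam{x}{M})}$.

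I expect the abstraction case to be the main obstacle, or rather the only place requiring real thought: one must notice that the raw subderivation does not sit in a context of the right schema and invoke weakening to repair this. This is the crux of the contrast the authors are drawing — in the R version the relation $crel_{xa}$ handles the discrepancy by construction, whereas in the G version one pays for it with an explicit (here implicit, silent) structural step. It is worth being explicit that the induction is on the derivation $\cD$, not on $M$, so that the appeal to the induction hypothesis on the weakened derivation is legitimate, and that the implicit variable-distinctness proviso on $tm_l$ guarantees the freshly introduced $x$ does not clash with eigenvariables already in $\actx$, so that $\actx, \istm{x}; \termeq{x}{x}$ is genuinely a well-formed context of schema $\asch$.
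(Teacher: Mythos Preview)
Your proposal is correct and matches the paper's proof essentially step for step: induction on $\cD$, the variable case appeals to the shape of the schema $\asch$ to extract $\termeq{x}{x}$ from the block containing $\istm{x}$, the application case is two IH appeals plus $ae_a$, and the abstraction case uses declaration weakening (\textit{d-wk}) to restore schema conformance before invoking the IH and closing with $ae_l$. Your added remarks on why the weakening step is the crux and on the freshness of $x$ are accurate and in the spirit of the paper's surrounding discussion.
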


\begin{proof}
By induction on the derivation $\cD :: \actx \vdash \istm{M}$.

\noindent
Case: $$\cD = \ianc{\istm{x} \in \actx}{\actx \vdash \istm{x}}{tm_v}$$
$\istm{x} \in \actx$\lcase{by rule premise}
$\actx$ contains block
 $(\istm x; \termeq{x}{x})$ \lcase{by definition of $\asch$}
$\actx \vdash \termeq{x} x$\lcase{by rule $ae_v$}

\noindent
Case: $$\cD = \ibnc{\aboved {\cD_1}{ \actx \vdash\istm {M_1}}}{
\aboved  {\cD_2} {\actx \vdash\istm {M_2}}} {\actx \vdash
  \istm{(\app{M_1}{M_2})}}{tm_a}$$
$\actx \vdash\termeq {M_1} M_1$\lcase {by IH on $\cD_1$}
$\actx \vdash\termeq {M_2} M_2$\lcase {by IH on $\cD_2$}
$\actx \vdash \termeq{(\app{M_1}{M_2}) \   (\app{M_1}{M_2})}$\lcase{by
  rule $ae_a$} 

\noindent
Case: 
$$\cD = \ianc{\aboved {\cD'} {\actx, \istm x \vdash\istm  M}}{\actx\vd\istm{(\lam x M)}}{tm_l}$$
%
$\actx, \istm{x}; \termeq{x}{x} \vdash\istm{M}$
\hfill by $\dop wk$ on $\cD'$ \\
$\actx,\istm{x}; \termeq{x}{x} \vdash \termeq{M}{M}$ \hfill
by IH \\
$\actx \vdash \termeq{(\lam{x}{M})}{(\lam{x}{M})}$ \hfill
by rule $ae_l$
\end{proof}
Note that the application cases of Theorems~\ref{thm:reflR}
and~\ref{thm:reflG} are the same except for the context used for the
well-formed term judgment.  The lambda case here, on the other hand,
requires an additional weakening step.  In particular, $\dop wk$ is used to add
an atom to form the declaration needed for schema $\asch$.  The
context before applying weakening does not satisfy this schema, and
the induction hypothesis cannot be applied until it does.

We end this subsection, stating the remaining properties needed to
establish that algorithmic equality is indeed a congruence, which we
will prove in Sect.~\ref{ssec:sub}. Since the proof involves
only $\actx$, the two approaches (R \& G) collapse.
\begin{lem}[Context Inversion]
\label{le:member_shape}  
   If $\termeq M N\in\actx$
then $M = N$.
\end{lem}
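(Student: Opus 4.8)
The plan is to prove Lemma~\ref{le:member_shape} by induction on the structure of the context $\actx$, exploiting the fact that $\asch$ has exactly one declaration shape, namely $\istm{x};\termeq{x}{x}$. Recall that $\memb{A}{\Gamma}$ means there is some block $D\in\Gamma$ with $\memb{A}{D}$, so $\termeq{M}{N}\in\actx$ means some block of $\actx$ contains the atom $\termeq{M}{N}$.

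First I would set up the induction: the base case $\actx = \cdot$ is vacuous since no atom belongs to the empty context. In the step case $\actx = \actx', D$ with $\schema{\actx}{\asch}$, by the schema-satisfaction rules $D$ is an instance of the unique schema declaration $\istm{x};\termeq{x}{x}$, so $D = \istm{y};\termeq{y}{y}$ for some eigenvariable $y$ (after $\alpha$-renaming, and with $\FV(D)\cap\FV(\actx')=\emptyset$). Now $\termeq{M}{N}\in\actx$ means either $\termeq{M}{N}\in\actx'$, in which case we conclude $M=N$ by the induction hypothesis, or $\termeq{M}{N}$ is the atom $\termeq{y}{y}$ occurring in $D$ itself, in which case $M = y = N$ directly. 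Either way $M = N$.

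The only thing that needs care is the second disjunct: we must observe that the \emph{only} atom of the form $\termeq{\_}{\_}$ appearing in a block instance of $\istm{x};\termeq{x}{x}$ is literally $\termeq{x}{x}$ (with both arguments the same bound variable), since the schema declaration is fixed and contains no schematic variable that could be instantiated to make the two arguments differ. This is where well-formedness of the schema and the convention that lower-case letters denote bound variables does the work for us: there is no way to instantiate the declaration $\istm{x};\termeq{x}{x}$ so that the equality atom has distinct arguments. I do not expect any real obstacle here; the main (minor) subtlety is simply being explicit that membership of an atom in a context reduces, via the definition of $\memb{\cdot}{\cdot}$, to membership in one of its constituent blocks, and that each such block is a renaming/instance of the single schema declaration.
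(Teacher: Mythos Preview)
Your proposal is correct and follows essentially the same approach as the paper, which simply states ``Induction on $\termeq M N\in\actx$.'' Your induction on the structure of $\actx$ unfolds exactly this argument: each block of a context satisfying $\asch$ is an instance of $\istm{x};\termeq{x}{x}$, so any $\termeqcon$-atom it contributes has identical arguments, and the rest follows by the induction hypothesis.
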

\begin{proof}
  Induction on $\termeq M N\in\actx$.
\end{proof}

\begin{theorem}[Admissibility of Symmetry and
    Transitivity] \label{le:sym-trans}\mbox{}
  \begin{enumerate}
  \item If $\actx \vdash \termeq{M}{N}$ then $\actx \vdash
    \termeq{N}{M}$.
  \item If $\actx \vdash \termeq{M}{L}$ and $\actx \vdash
    \termeq{L}{N}$ then $\actx \vdash \termeq{M}{N}$.
  \end{enumerate}
\end{theorem}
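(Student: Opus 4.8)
The plan is to prove both parts by structural induction on algorithmic-equality derivations, combined with inversion on the rules defining $\termeq{\cdot}{\cdot}$. Since every judgment involved lives in the single context $\actx$ of schema $\asch$, there is no context-relation bookkeeping to do and the R and G approaches coincide, as already noted. The one auxiliary fact used throughout is the Context Inversion lemma (Lemma~\ref{le:member_shape}): whenever the variable rule $ae_v$ is the last rule of a derivation, the two terms being compared must coincide.

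For part~(1), symmetry, I would induct on $\cD :: \actx \vdash \termeq{M}{N}$ and case-split on its last rule. If $\cD$ ends in $ae_v$, then $M = N = x$ with $\termeq{x}{x} \in \actx$, so the goal $\actx \vdash \termeq{N}{M}$ is just $\actx \vdash \termeq{x}{x}$, re-derived by $ae_v$. If $\cD$ ends in $ae_a$, then $M = \app{M_1}{M_2}$ and $N = \app{N_1}{N_2}$ with subderivations of $\actx \vdash \termeq{M_1}{N_1}$ and $\actx \vdash \termeq{M_2}{N_2}$; two uses of the induction hypothesis and one of $ae_a$ give $\actx \vdash \termeq{(\app{N_1}{N_2})}{(\app{M_1}{M_2})}$. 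If $\cD$ ends in $ae_l$, then $M = \lam{x}{M'}$ and $N = \lam{x}{N'}$ with a subderivation of $\actx, \istm{x}; \termeq{x}{x} \vdash \termeq{M'}{N'}$; since $\actx, \istm{x}; \termeq{x}{x}$ again has schema $\asch$, the induction hypothesis yields $\actx, \istm{x}; \termeq{x}{x} \vdash \termeq{N'}{M'}$, and $ae_l$ concludes.

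For part~(2), transitivity, I would induct on $\cD_1 :: \actx \vdash \termeq{M}{L}$ and invert $\cD_2 :: \actx \vdash \termeq{L}{N}$, the inversion being driven by the head of $L$: a derivation of $\termeq{L}{N}$ must end in $ae_v$ if $L$ is an eigenvariable, in $ae_l$ if $L$ is a $\lam$, and in $ae_a$ if $L$ is an $\app$. In the $ae_v$ case $M = L = x$, and inverting $\cD_2$ (necessarily $ae_v$, using Lemma~\ref{le:member_shape}) gives $N = x$, so $\actx \vdash \termeq{x}{x}$ — which is the desired $\actx \vdash \termeq{M}{N}$ — holds. In the $ae_a$ case $L = \app{L_1}{L_2}$, inversion on $\cD_2$ yields $N = \app{N_1}{N_2}$ with subderivations of $\actx \vdash \termeq{L_1}{N_1}$ and $\actx \vdash \termeq{L_2}{N_2}$, and the induction hypothesis applied to the matching pairs followed by $ae_a$ gives the result. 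The $ae_l$ case is analogous, with the induction hypothesis invoked in the extended context $\actx, \istm{x}; \termeq{x}{x}$.

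The main obstacle — really the only point needing care — is the variable case of transitivity: one cannot recurse there and must instead use inversion on the algorithmic-equality rules to see that $\actx \vdash \termeq{x}{N}$ with $x$ an eigenvariable can only have been derived by $ae_v$, then appeal to Lemma~\ref{le:member_shape} to force $N = x$. Secondary bookkeeping points are the implicit $\alpha$-renaming that makes the two $ae_l$ subderivations agree on the bound variable, and the routine observation that extending $\actx$ by a block $\istm{x}; \termeq{x}{x}$ preserves schema $\asch$, so that the induction hypothesis stays applicable.
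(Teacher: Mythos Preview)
Your proposal is correct and follows essentially the same approach as the paper: induction on the given algorithmic-equality derivation, with Lemma~\ref{le:member_shape} (Context Inversion) invoked in the variable case. The paper's proof is a one-line sketch (``Induction on the given derivation using Lemma~\ref{le:member_shape} in the variable case''), and your expansion of the three cases for each part, together with the observation that the extended context in the $ae_l$ case still satisfies schema $\asch$, is exactly the intended elaboration.
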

\begin{proof}
  Induction on the given derivation using Lemma~\ref{le:member_shape}
  in the variable case.
\end{proof}

\subsection{Linear Context Extensions with Alternative Declarations}
\label{ssec:poly}

We extend our algorithmic equality case study to the polymorphic
lambda-calculus, highlighting the situation where judgments induce
context schemas with \emph{alternatives}.
We accordingly add the judgment for \emph{type equality}, $\tpaeq A
B$, noting that the latter can be defined independently of term
equality. In other words $\termeq{M}{N}$ depends on
$\tpaeq{A}{B}$, but not vice-versa.
In addition to $\alphsch$ and $\axsch$ introduced in
Sect.~\ref{sec:theory}, the following new context schemas are also
used here:
\[
\begin{array}[t]{lll}
   \atpeqsch & := & \istp{\alpha};\tpaeq{\alpha}{\alpha}  \\
   \psch & := & \istp{\alpha};\tpaeq{\alpha}{\alpha}~\alt~\istm{x};\termeq{x}{x} 
\end{array}
\]
The rules for the two equality judgments extend those given in
Sect.~\ref{ssec:basic}.  The additional rules are stated below.
\[
\begin{array}{c}
  \multicolumn{1}{l}{\mbox{Algorithmic Equality for the Polymorphic
      Lambda-Calculus}}\\[0.75em]
  \dots\\[0.75em]
  \infer[ae_{tl}]{\Gamma \vdash \termeq{(\tlam{\alpha}{M})}{(\tlam{\alpha}{N})}}{
    \Gamma, \istp{\alpha};\tpaeq{\alpha}{\alpha} \vdash \termeq{M}{N}} 
  \\[+1em]
  \infer[ae_{ta}]{\Gamma \vdash \termeq{(\tapp{M}{A})}{(\tapp{N}{B})}}
  {\Gamma \vdash \termeq{M}{N} &\Gamma \vdash \tpaeq{A}{B}
  }
\\[+1em]
 \infer[at_{\alpha}]{\Gamma \vdash \tpaeq{\alpha}{\alpha}}{\tpaeq{\alpha}{\alpha} \in \Gamma}
  \qquad 
 \\[+1em]
 \infer[at_{al}]{\Gamma \vdash \tpaeq{(\all{\alpha}{A})}{(\all{\alpha}{B})}}{
    \Gamma, \istp{\alpha};\tpaeq{\alpha}{\alpha} \vdash \tpaeq{A}{B}} 
  \qquad 
  \infer[at_a]{\Gamma \vdash \tpaeq{(\fsp{A_1}{A_2})}{(\fsp{B_1}{B_2})}}
  {\Gamma \vdash \tpaeq{A_1}{B_1} & \Gamma \vdash \tpaeq{A_2}{B_2}
  }
\end{array}
\]

We show again the admissibility of reflexivity.  We start with the G
version this time.

\subsubsection{G Version}

We first state and prove the admissibility of reflexivity for types,
which we then use in the proof of admissibility of reflexivity for
terms.  The schema for the generalized context for the former is
$\atpeqsch$ since the statement and proof do not depend on terms.  The
schema for the latter is $\psch$.
\begin{theorem}[Admissibility of Reflexivity for Types, G Version]\\
   If \mbox{}$\atpeqctx \vdash \istp{A}$ then $\atpeqctx \vdash \tpaeq{A}{A}$.
\label{thm:atpreflG}
\end{theorem}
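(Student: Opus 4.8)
The plan is to prove this by structural induction on the derivation $\cD :: \atpeqctx \vdash \istp{A}$, exactly mirroring the structure of the previous reflexivity proofs (Theorems~\ref{thm:reflR} and~\ref{thm:reflG}) but now at the level of types rather than terms. Since we are in the G version, the generalized context $\atpeqctx$ has schema $\atpeqsch = \istp{\alpha};\tpaeq{\alpha}{\alpha}$, which bundles each type-variable assumption together with the corresponding type-equality assumption; this is what makes the variable case go through without needing an explicit context relation.

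First I would handle the base case, rule $tp_v$: here $\cD$ concludes $\atpeqctx \vdash \istp{\alpha}$ from the premise $\istp{\alpha} \in \atpeqctx$. By the definition of schema $\atpeqsch$, any block of $\atpeqctx$ containing $\istp{\alpha}$ also contains $\tpaeq{\alpha}{\alpha}$, so $\tpaeq{\alpha}{\alpha} \in \atpeqctx$, and then rule $at_{\alpha}$ gives $\atpeqctx \vdash \tpaeq{\alpha}{\alpha}$, as required. Next, the case for $tp_{ar}$ (function types): $\cD$ derives $\atpeqctx \vdash \istp{(\fsp A B)}$ from $\atpeqctx \vdash \istp{A}$ and $\atpeqctx \vdash \istp{B}$; applying the induction hypothesis to each subderivation yields $\atpeqctx \vdash \tpaeq{A}{A}$ and $\atpeqctx \vdash \tpaeq{B}{B}$, and rule $at_a$ then concludes $\atpeqctx \vdash \tpaeq{(\fsp A B)}{(\fsp A B)}$.

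The last case, $tp_{al}$ (universal types), is the one requiring the small bit of care that distinguishes this proof from a purely routine induction. Here $\cD$ derives $\atpeqctx \vdash \istp{(\all{\alpha}{A})}$ from the subderivation $\atpeqctx, \istp{\alpha} \vdash \istp{A}$. The context $\atpeqctx, \istp{\alpha}$ does not yet satisfy $\atpeqsch$, since the final block is just $\istp{\alpha}$ rather than $\istp{\alpha};\tpaeq{\alpha}{\alpha}$; so, just as in the $tm_l$ case of Theorem~\ref{thm:reflG}, I would first apply declaration weakening ($\dop{wk}$, using $\mathsf{rm}_{\tpaeq{\alpha}{\alpha}}$) to obtain $\atpeqctx, \istp{\alpha};\tpaeq{\alpha}{\alpha} \vdash \istp{A}$. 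Now the context satisfies $\atpeqsch$, so the induction hypothesis applies and gives $\atpeqctx, \istp{\alpha};\tpaeq{\alpha}{\alpha} \vdash \tpaeq{A}{A}$; rule $at_{al}$ then concludes $\atpeqctx \vdash \tpaeq{(\all{\alpha}{A})}{(\all{\alpha}{A})}$. I expect this weakening step to be the only real ``obstacle,'' in the sense that it is the single place where one must notice that the schema is temporarily violated and repair it before the IH can fire — everything else is a direct match between the well-formedness rules and the reflexivity-producing equality rules.
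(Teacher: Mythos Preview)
Your proposal is correct and matches the paper's approach exactly: the paper simply states that the proof is the same as that of Theorem~\ref{thm:reflG} modulo replacing $\capp$ and $\clam$ with $\carr$ and $\call$ and using the corresponding rules, which is precisely the case analysis you have spelled out in full.
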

The proof is exactly the same as the proof of Theorem~\ref{thm:reflG}, modulo
replacing $\capp$ and $\clam$ with $\carr$ and $\call$, respectively,
and using the corresponding rules.

As we have already mentioned in Sect.~\ref{sec:theory}, it is often the
case that we need to appeal to a lemma in a context that is different
from the context where it was proved.  A concrete example is the above
lemma, which is stated in context $\atpeqctx$, but is needed in the
proof of the next theorem in the larger context $\pctx$.
To illustrate, we
state and prove the necessary \emph{promotion} lemma here.
\begin{lem}[G-Promotion for Type Reflexivity]\\
   If $\pctx \vdash \istp{A}$ then $\pctx \vdash \tpaeq{A}{A}$.
\label{lem:polypromoteG}
\end{lem}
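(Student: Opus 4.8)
The plan is to prove the statement by a straightforward induction on the derivation $\mathcal{D} :: \pctx \vdash \istp{A}$, mirroring the structure of Theorem~\ref{thm:atpreflG}, but now keeping track of the extra alternative $\istm{x};\termeq{x}{x}$ that $\psch$ allows over $\atpeqsch$. The key observation is that the rules for $\istp{A}$ never inspect or produce term-variable assumptions, so the presence of $\istm{x};\termeq{x}{x}$ blocks in $\pctx$ is simply inert in every case.

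First I would handle the variable case $tp_v$: from $\istp{\alpha} \in \pctx$ we get, by the definition of the schema $\psch$, that $\pctx$ contains the block $\istp{\alpha};\tpaeq{\alpha}{\alpha}$, hence $\tpaeq{\alpha}{\alpha} \in \pctx$, and we conclude $\pctx \vdash \tpaeq{\alpha}{\alpha}$ by rule $at_{\alpha}$. Next, the case $tp_{ar}$ for $\fsp{A}{B}$: apply the induction hypothesis to the two subderivations $\pctx \vdash \istp{A}$ and $\pctx \vdash \istp{B}$ to obtain $\pctx \vdash \tpaeq{A}{A}$ and $\pctx \vdash \tpaeq{B}{B}$, then conclude by rule $at_a$. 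Finally, the case $tp_{al}$ for $\all{\alpha}{A}$: the subderivation is $\pctx, \istp{\alpha} \vdash \istp{A}$; weaken it by $\dop{wk}$ (adding the atom $\tpaeq{\alpha}{\alpha}$ to the trailing declaration, exactly as in the lambda case of Theorem~\ref{thm:reflG}) to get $\pctx, \istp{\alpha};\tpaeq{\alpha}{\alpha} \vdash \istp{A}$; apply the induction hypothesis to obtain $\pctx, \istp{\alpha};\tpaeq{\alpha}{\alpha} \vdash \tpaeq{A}{A}$; and conclude $\pctx \vdash \tpaeq{(\all{\alpha}{A})}{(\all{\alpha}{A})}$ by rule $at_{al}$. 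Note that in the extended context $\pctx, \istp{\alpha};\tpaeq{\alpha}{\alpha}$ still has schema $\psch$, so the induction hypothesis applies.

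The only real subtlety — and the reason the statement is worth isolating as a separate promotion lemma rather than just invoking Theorem~\ref{thm:atpreflG} — is that Theorem~\ref{thm:atpreflG} is stated for contexts of schema $\atpeqsch$, which contain no term-variable blocks, whereas here we must run the same induction over a context of the strictly larger schema $\psch$. Since the proof proceeds by induction on the $\istp{A}$ derivation rather than on the shape of the context, and since the $tp_v$ base case only needs the $\istp{\alpha};\tpaeq{\alpha}{\alpha}$ alternative of $\psch$ (the $\istm{x};\termeq{x}{x}$ alternative is never relevant for an $\istp{}$ conclusion), the argument goes through unchanged. I expect no genuine obstacle; the main thing to get right is simply to observe that the generalized-context technique is robust under adding inert alternatives to the schema, which is precisely the methodological point this lemma is meant to illustrate.
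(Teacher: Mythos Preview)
Your proof is correct, but it takes a different route from the paper. The paper proves this lemma \emph{without} induction, in four lines: from $\pctx \vdash \istp{A}$ it applies context strengthening (\textit{c-str}, using $\mathsf{rm}^*_{\istm{x};\termeq{x}{x}}$) to obtain $\atpeqctx \vdash \istp{A}$, then invokes Theorem~\ref{thm:atpreflG} directly to get $\atpeqctx \vdash \tpaeq{A}{A}$, and finally applies context weakening (\textit{c-wk}) to recover $\pctx \vdash \tpaeq{A}{A}$. Your approach instead redoes the induction of Theorem~\ref{thm:atpreflG} in the larger context, which is valid and indeed the paper explicitly acknowledges this possibility (noting one ``could omit Theorem~\ref{thm:atpreflG} and instead prove Lemma~\ref{lem:polypromoteG} directly''). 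The trade-off: the paper's approach is modular---it reuses the existing theorem and exhibits the general \emph{promotion pattern} of \textit{c-str}/apply/\textit{c-wk} that recurs throughout the benchmarks---whereas your approach is more self-contained but duplicates the inductive argument. Your final paragraph slightly misreads the methodological point: the lemma is not meant to illustrate that the induction is robust under inert alternatives, but rather to showcase how structural context properties let one \emph{lift} a result proved in a smaller context to a larger one without reproving it.
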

\begin{proof}\ \\
$\pctx \vdash \istp{A}$\lcase{by assumption}
$\atpeqctx \vdash \istp{A}$\lcase{by $\copda {str} {alt}$}
$\atpeqctx \vdash \tpaeq{A}{A}$\lcase{by Theorem~\ref{thm:atpreflG}}
$\pctx \vdash \tpaeq{A}{A}$\lcase{by $\copda {wk} {alt}$}
\end{proof}
In general, proofs of promotion lemmas require applications of
\copda{str}{alt} and \copda{wk}{alt} which perform a uniform
modification to an entire context.  In contrast, the abstraction cases
in proofs such as the lambda case of Theorem~\ref{thm:reflG} require
\dop{wk} to add atoms to a single declaration.  The particular
function used here is $\textsf{rm}_{\istm{x};\termeq{x}{x}}^*$, which
drops an entire alternative from $\pctx$ to obtain $\atpeqctx$ and
leaves the other alternative unchanged.  
The combination of \copda{str}{} and \copda{wk}{} in proofs of
promotion lemmas is related to
\textit{subsumption}~\citep[see][]{HarperLicata:JFP07}.

Note that we could omit Theorem~\ref{thm:atpreflG} and instead prove
Lemma~\ref{lem:polypromoteG} directly, 
removing the need for a promotion lemma.  For modularity purposes, we
adopt the approach that we state each theorem in the smallest
possible context in which it is valid.  This particular lemma, for
example, will be needed in an even bigger context than $\pctx$ in
Sect.~\ref{ssec:sub}.  In general, we do not want the choice of
context in the statement of a lemma to depend on later theorems whose
proofs require this lemma.  Instead, we choose the smallest context
and state and prove promotion lemmas where needed.

\begin{theorem}[Admissibility of Reflexivity for Terms, G Version]\\
  \label{thm:prelreflG}
 If $\pctx \vdash \istm{M}$ then $\pctx \vdash \termeq{M}{M}$.
\end{theorem}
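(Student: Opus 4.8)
The plan is to proceed by structural induction on the derivation $\cD :: \pctx \vdash \istm{M}$, reusing the skeleton of the proof of Theorem~\ref{thm:reflG} and extending it with cases for the type-level constructs $\tlam{\alpha}{M}$ and $\tapp{M}{A}$. The point that makes a single generalized context suffice is the choice $\psch = \istp{\alpha};\tpaeq{\alpha}{\alpha} \alt \istm{x};\termeq{x}{x}$: this schema already contains, as its two alternatives, every block that the premise judgment $\istm{\cdot}$, the conclusion judgment $\termeq{\cdot}{\cdot}$, and the auxiliary judgments $\istp{\cdot}$ and $\tpaeq{\cdot}{\cdot}$ will ever need, so every subgoal arising in the induction can be stated over a context of schema $\psch$.

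I would first dispatch the variable case $tm_v$ exactly as in Theorem~\ref{thm:reflG}: from $\istm{x} \in \pctx$ the shape of $\psch$ forces the enclosing block to be $\istm{x};\termeq{x}{x}$, hence $\termeq{x}{x} \in \pctx$ and rule $ae_v$ applies. Next come the congruence cases $tm_a$ and $tm_{ta}$: apply the induction hypothesis to each term subderivation and close with $ae_a$, respectively $ae_{ta}$. The case $tm_{ta}$ carries an extra obligation, namely turning the subderivation $\pctx \vdash \istp{A}$ into $\pctx \vdash \tpaeq{A}{A}$; this is supplied verbatim by Lemma~\ref{lem:polypromoteG} (G-Promotion for Type Reflexivity), and this case is precisely why that promotion lemma was stated over the larger context $\pctx$ rather than over $\atpeqctx$.

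The two binder cases $tm_l$ and $tm_{tl}$ are where the only genuine friction lies, and I expect the $tm_l$ case to be the representative obstacle. Its subderivation lives in $\pctx, \istm{x}$, which is \emph{not} an instance of $\psch$ --- the relevant alternative demands the full block $\istm{x};\termeq{x}{x}$ --- so the induction hypothesis cannot be invoked directly. The remedy is to apply declaration weakening $\dop{wk}$ to add the atom $\termeq{x}{x}$, obtaining $\pctx, \istm{x};\termeq{x}{x} \vdash \istm{M}$, which now has schema $\psch$; the induction hypothesis then yields $\pctx, \istm{x};\termeq{x}{x} \vdash \termeq{M}{M}$, and rule $ae_l$ concludes. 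The $tm_{tl}$ case is entirely parallel: use $\dop{wk}$ to add the atom $\tpaeq{\alpha}{\alpha}$ to $\pctx, \istp{\alpha}$, invoke the induction hypothesis, and close with $ae_{tl}$. No strengthening of the induction hypothesis is required anywhere; the substance of the proof is the disciplined insertion of $\dop{wk}$ inside a declaration to restore schema satisfaction before each recursive call, together with the single appeal to the promotion lemma in the $tm_{ta}$ case.
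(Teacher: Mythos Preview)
Your proposal is correct and matches the paper's approach: induction on $\cD :: \pctx \vdash \istm{M}$ following the pattern of Theorem~\ref{thm:reflG}, with the $tm_{ta}$ case discharged via Lemma~\ref{lem:polypromoteG} and the binder cases handled by $\dop{wk}$ to restore schema $\psch$ before invoking the induction hypothesis. The paper only spells out the $tm_{ta}$ case explicitly, but your treatment of the remaining cases is exactly what the paper's ``similar to the proof of Theorem~\ref{thm:reflG}'' intends.
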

\begin{proof}
Again, the proof is by induction on the given well-formed term
derivation, in this case $\cD :: \pctx \vdash \istm{M}$, and is
similar to the proof of Theorem~\ref{thm:reflG}.
We show the case for application of terms to
types.

 \noindent
Case:
$$\cD = \ibnc{\aboved{\cD_1}{\pctx \vdash\istm{M}}}
                   {\aboved{\cD_2}{\pctx \vdash\istp{A}}}
                   {\pctx \vdash \istm{(\tapp{M}{A})}}{}$$
$\pctx \vdash\termeq {M} M$\lcase {by IH on $\cD_1$}
$\pctx \vdash\tpaeq {A}{A} $\lcase {by Lemma~\ref{lem:polypromoteG} on
  conclusion of
$\cD_2$}
$\pctx \vdash \termeq{(\tapp{M}{A})}{(\tapp{M}{A})}$\lcase{by
  rule $ae_{ta}$}
\end{proof}

\subsubsection{R Version}
We introduce four context relations $\crel{\alphctx}{\atpeqctx}$,
$\crel{\axctx}{\pctx}$, $\crel{\axctx}{\alphctx}$, and
$\crel{\pctx}{\atpeqctx}$.  We define the first two as follows (where
we omit the inference rules for the base cases).
\[
\begin{array}{c}
  \multicolumn{1}{l}{\mbox{Context Relations}}\\[0.75em]
\infer
                        {\crel{\alphctx, \istp{\alpha}}
                        {\atpeqctx,\istp{\alpha};\tpaeq{\alpha}{\alpha}}}
                  {\crel{\alphctx}{\atpeqctx}}    \\[5pt]
\infer
                {\crel{\axctx, \istm{x}}{\pctx,\istm{x};\termeq{x}{x}}}
                  {\crel{\axctx}{\pctx}}\qquad
  \infer
                     {\crel{\axctx,\istp{\alpha}}
                        {\pctx,\istp{\alpha};\tpaeq{\alpha}{\alpha}}}
                  {\crel{\axctx}{\pctx}}
\end{array}
\]
Note that $\crel{\axctx}{\pctx}$ is the extension of
$\crel{\ictx}{\actx}$ with one additional case for equality for
types.\footnote{Again, we
  remark on our policy to use the smallest contexts possible for
  modularity reasons.  Otherwise, we could have omitted the
  $\crel{\alphctx}{\atpeqctx}$  relation, and
  stated the next theorem using $\crel{\axctx}{\pctx}$.}
We also omit the (obvious) inference rules defining
$\crel{\axctx}{\alphctx}$ and $\crel{\pctx}{\atpeqctx}$, and instead
note that they correspond to the graphs of the following two
functions, respectively, which simply remove one of the two schema
alternatives:
$$\begin{array}{l}
\textsf{rm}^*_{\istm x} =
 \lambda d. \caseof{d}{\istp{\alpha} \mapsto \istp{\alpha} \mid
  \istm{x} \mapsto \cdot} \\
\textsf{rm}^*_{\istm x ; \termeq x x } =
  \lambda d. \caseof{d}{\istp{\alpha};\tpaeq{\alpha}{\alpha} \mapsto
    \istp{\alpha};\tpaeq{\alpha}{\alpha} \mid \istm{x};\termeq{x}{x}
    \mapsto \cdot}
\end{array}$$

We start with the theorem for types again, whose proof is similar to
the R version of the previous example (Theorem~\ref{thm:reflR}) and is
therefore omitted.
\begin{theorem}[Admissibility of Reflexivity for Types, R Version]\mbox{}\\
  \label{thm:prelreflR_tp}
  Let $\crel{\alphctx}{\atpeqctx}$.  If $\alphctx \vdash \istp{A}$ then $\atpeqctx
  \vdash \tpaeq{A}{A}$.
\end{theorem}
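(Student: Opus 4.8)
The plan is to mirror the proof of Theorem~\ref{thm:reflR} almost verbatim, with the term-level judgments and rules replaced by their type-level counterparts. First, exactly as Lemma~\ref{le:member_refl} was needed for the variable case there, I would establish the analogous membership fact by induction on the derivation of $\crel{\alphctx}{\atpeqctx}$: namely that $\crel{\alphctx}{\atpeqctx}$ implies $\istp{\alpha}\in\alphctx$ iff $\istp{\alpha};\tpaeq{\alpha}{\alpha}\in\atpeqctx$. This is immediate, since the only non-empty rule of this relation introduces $\istp{\alpha}$ on the left in lockstep with the block $\istp{\alpha};\tpaeq{\alpha}{\alpha}$ on the right. (One may also just state it as a ``member'' lemma, as was done in Sect.~\ref{ssec:basic}.)

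Then I would proceed by induction on the derivation $\cD :: \alphctx \vdash \istp{A}$, using the well-formed-type rules from Sect.~\ref{ssec:poly-expl}, giving three cases. In case $tp_v$ we have $A=\alpha$ with premise $\istp{\alpha}\in\alphctx$; by the membership fact $\istp{\alpha};\tpaeq{\alpha}{\alpha}\in\atpeqctx$, hence $\tpaeq{\alpha}{\alpha}\in\atpeqctx$, and rule $at_{\alpha}$ yields $\atpeqctx\vdash\tpaeq{\alpha}{\alpha}$. In case $tp_{ar}$ we have $A=\fsp{A_1}{A_2}$ with sub-derivations $\alphctx\vdash\istp{A_1}$ and $\alphctx\vdash\istp{A_2}$; applying the induction hypothesis to each (the relation $\crel{\alphctx}{\atpeqctx}$ being unchanged) gives $\atpeqctx\vdash\tpaeq{A_1}{A_1}$ and $\atpeqctx\vdash\tpaeq{A_2}{A_2}$, and rule $at_a$ concludes. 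In case $tp_{al}$ we have $A=\all{\alpha}{A'}$ with sub-derivation $\alphctx,\istp{\alpha}\vdash\istp{A'}$; here I first extend the relation by its one non-trivial rule to $\crel{\alphctx,\istp{\alpha}}{\atpeqctx,\istp{\alpha};\tpaeq{\alpha}{\alpha}}$, apply the induction hypothesis to obtain $\atpeqctx,\istp{\alpha};\tpaeq{\alpha}{\alpha}\vdash\tpaeq{A'}{A'}$, and finish with rule $at_{al}$.

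No step here is genuinely difficult; as in Theorem~\ref{thm:reflR}, the only thing to be careful about is threading the context relation correctly --- keeping it fixed in the two non-binding cases and extending it by exactly one relation step in the $tp_{al}$ case so that the induction hypothesis applies to the enlarged related pair --- together with the routine renaming of the bound variable $\alpha$ so that it is fresh for both $\alphctx$ and $\atpeqctx$, which is exactly what the freshness side conditions built into the schema-satisfaction and relation rules guarantee. Since the statement and proof mention only types, no promotion lemma and no interaction with term contexts is required, which is precisely why the authors remark that the proof parallels Theorem~\ref{thm:reflR} and omit it.
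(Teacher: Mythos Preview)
Your proposal is correct and takes essentially the same approach as the paper, which simply remarks that the proof mirrors Theorem~\ref{thm:reflR} (with the analogous membership lemma in place of Lemma~\ref{le:member_refl}) and omits the details. Your case analysis on $tp_v$, $tp_{ar}$, $tp_{al}$ and the threading of the context relation are exactly the intended steps.
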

\begin{lem}[Relational Strengthening]
Let $\crel{\axctx}{\pctx}$.  Then there exist contexts $\alphctx$ and
$\atpeqctx$ such that $\crel{\axctx}{\alphctx}$,
$\crel{\pctx}{\atpeqctx}$, and
$\crel{\alphctx}{\atpeqctx}$.
\label{lem:strenR}
\end{lem}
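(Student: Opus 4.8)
The plan is to proceed by structural induction on the derivation of $\crel{\axctx}{\pctx}$, building the witnesses $\alphctx$ and $\atpeqctx$ in lockstep with that derivation so that, at every stage, all three target relations hold simultaneously. It helps to first spell out the inference rules for the two ``vertical'' relations, which the text records only as the graphs of $\textsf{rm}^*_{\istm x}$ and $\textsf{rm}^*_{\istm x;\termeq x x}$: apart from the empty-context base rule, $\crel{\axctx}{\alphctx}$ has one rule carrying an $\istp{\alpha}$ declaration onto both sides and one rule adding $\istm{x}$ on the left while leaving the right untouched; symmetrically, $\crel{\pctx}{\atpeqctx}$ carries a block $\istp{\alpha};\tpaeq{\alpha}{\alpha}$ onto both sides and adds $\istm{x};\termeq{x}{x}$ on the left while leaving the right untouched. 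Pictorially we are completing a square whose horizontal edges $\crel{\axctx}{\pctx}$ and $\crel{\alphctx}{\atpeqctx}$ add equality assumptions and whose vertical edges $\crel{\axctx}{\alphctx}$ and $\crel{\pctx}{\atpeqctx}$ delete the term-related assumptions.

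For the base case $\crel{\cdot}{\cdot}$ I would take $\alphctx = \atpeqctx = \cdot$, and all three relations hold by their empty-context rules. In the inductive step there are two cases, according to the last rule applied. If $\crel{\axctx}{\pctx}$ was derived by adding a type variable, so that $\axctx = \axctx',\istp{\alpha}$ and $\pctx = \pctx',\istp{\alpha};\tpaeq{\alpha}{\alpha}$ come from $\crel{\axctx'}{\pctx'}$, the induction hypothesis supplies $\alphctx'$ and $\atpeqctx'$ with $\crel{\axctx'}{\alphctx'}$, $\crel{\pctx'}{\atpeqctx'}$, and $\crel{\alphctx'}{\atpeqctx'}$; I would set $\alphctx := \alphctx',\istp{\alpha}$ and $\atpeqctx := \atpeqctx',\istp{\alpha};\tpaeq{\alpha}{\alpha}$, and each of the three conclusions follows by applying the matching type-variable rule to the corresponding hypothesis. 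If instead $\crel{\axctx}{\pctx}$ was derived by adding a term variable, so that $\axctx = \axctx',\istm{x}$ and $\pctx = \pctx',\istm{x};\termeq{x}{x}$ come from $\crel{\axctx'}{\pctx'}$, I would keep $\alphctx := \alphctx'$ and $\atpeqctx := \atpeqctx'$ unchanged: the two vertical relations hold by their ``dropping'' rules, and the bottom relation $\crel{\alphctx'}{\atpeqctx'}$ is exactly the induction hypothesis. A closing remark would observe that, by construction, $\alphctx$ satisfies $\alphsch$ and $\atpeqctx$ satisfies $\atpeqsch$, and that since all four relations involved are \emph{functional} the witnesses produced are in fact unique.

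I do not expect a genuine obstacle here --- the argument is essentially bookkeeping --- but the point that carries the lemma is the term-variable case, where the top relation genuinely grows and yet the bottom relation $\crel{\alphctx}{\atpeqctx}$ must be kept in sync. This works only because $\textsf{rm}^*_{\istm x}$ and $\textsf{rm}^*_{\istm x;\termeq x x}$ are chosen so that they collapse on precisely the term-related alternative; had the two deletion functions been misaligned (one dropping, the other retaining), the square could fail to commute and one would have to reformulate or strengthen the relations. A minor secondary care is the usual freshness side condition on the newly introduced eigenvariable $\alpha$ or $x$, which is inherited unchanged from the $\crel{\axctx}{\pctx}$ derivation.
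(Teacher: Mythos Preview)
Your proposal is correct and follows exactly the approach the paper indicates: the paper's proof is a one-line ``By induction on the given derivation of $\crel{\axctx}{\pctx}$,'' and you have simply unfolded that induction explicitly, handling the base case and the two alternative extension rules in the expected way.
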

\begin{proof}
By induction on the given derivation of $\crel{\axctx}{\pctx}$.
\end{proof}
 We again need a promotion lemma, this time involving the context
 relation.
 \begin{lem}[R-Promotion for Type Reflexivity]\\
   Let $\crel{\axctx}{\pctx}$. If $\axctx \vdash \istp{A}$ then $\pctx
   \vdash \tpaeq{A}{A}$.
 \label{lem:polypromoteR}
 \end{lem}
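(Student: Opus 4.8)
The plan is to mirror the G version (Lemma~\ref{lem:polypromoteG}) almost verbatim, with the context relations produced by Lemma~\ref{lem:strenR} playing the role that \copda{str}{alt} and \copda{wk}{alt} played there. Concretely, from the hypothesis $\crel{\axctx}{\pctx}$ I would first invoke Lemma~\ref{lem:strenR} to obtain contexts $\alphctx$ and $\atpeqctx$ together with $\crel{\axctx}{\alphctx}$, $\crel{\pctx}{\atpeqctx}$, and $\crel{\alphctx}{\atpeqctx}$; these four relations form a commuting square, the two ``horizontal'' ones being linear extensions that add the type-equality atom and the two ``vertical'' ones being the graphs of $\mathsf{rm}^*_{\istm{x}}$ and $\mathsf{rm}^*_{\istm{x};\termeq{x}{x}}$, which strip off the term-variable alternative.

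The chase then goes as follows. From $\axctx \vdash \istp{A}$ and $\crel{\axctx}{\alphctx}$ I transport the derivation to the strengthened context by \copda{str}{alt}: since $\crel{\axctx}{\alphctx}$ is exactly the graph of $\mathsf{rm}^*_{\istm{x}}$, we have $\alphctx = \mathsf{rm}^*_{\istm{x}}(\axctx)$, and the proviso holds because instances of $\istm{x}$ are irrelevant to $\istp{A}$ (no rule $tp_v$, $tp_{ar}$, $tp_{al}$ ever inspects a term variable), so $\alphctx \vdash \istp{A}$. Now Theorem~\ref{thm:prelreflR_tp}, applied with $\crel{\alphctx}{\atpeqctx}$, gives $\atpeqctx \vdash \tpaeq{A}{A}$. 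Finally, $\crel{\pctx}{\atpeqctx}$ is the graph of $\mathsf{rm}^*_{\istm{x};\termeq{x}{x}}$, so $\atpeqctx = \mathsf{rm}^*_{\istm{x};\termeq{x}{x}}(\pctx)$, and \copda{wk}{alt} lifts $\atpeqctx \vdash \tpaeq{A}{A}$ to $\pctx \vdash \tpaeq{A}{A}$, as required.

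I expect the only delicate points to be the two passages between a context relation and a structural rule: recognising (as already noted in the text just before Theorem~\ref{thm:prelreflR_tp}) that $\crel{\axctx}{\alphctx}$ and $\crel{\pctx}{\atpeqctx}$ are literally the graphs of the displayed $\mathsf{rm}^*$ operations, so that \copda{str}{alt} and \copda{wk}{alt} apply on the nose; and checking the irrelevance proviso of \copda{str}{alt}, which is immediate since neither the term-variable assumptions $\istm{x}$ nor the copy assumptions $\termeq{x}{x}$ occur in any of the type or type-equality rules. Everything else is bookkeeping. An alternative would be a direct induction on the derivation of $\crel{\axctx}{\pctx}$, reconstructing $\alphctx$, $\atpeqctx$ and the three auxiliary relations alongside it, but routing through Lemma~\ref{lem:strenR} and Theorem~\ref{thm:prelreflR_tp} keeps the proof short and strictly parallel to its G counterpart.
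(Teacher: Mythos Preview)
Your proposal is correct and matches the paper's proof essentially line for line: invoke Lemma~\ref{lem:strenR} on $\crel{\axctx}{\pctx}$ to obtain $\alphctx$, $\atpeqctx$, and $\crel{\alphctx}{\atpeqctx}$; apply \copda{str}{alt} to move $\istp{A}$ from $\axctx$ to $\alphctx$; apply Theorem~\ref{thm:prelreflR_tp}; then apply \copda{wk}{alt} to lift the result to $\pctx$. Your additional remarks justifying the irrelevance proviso and identifying the auxiliary relations with the graphs of the $\mathsf{rm}^*$ functions are accurate elaborations that the paper leaves implicit.
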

 \begin{proof} \mbox{} \\
 $\axctx \vdash \istp{A}$\lcase{by assumption}
 $\alphctx \vdash \istp{A}$\lcase{by $\copda {str} {alt}$}
 $\crel{\axctx}{\pctx}$\lcase{by assumption}
 $\crel{\alphctx}{\atpeqctx}$\lcase{by relational
   strengthening (Lemma~\ref{lem:strenR})}
 $\atpeqctx \vdash \tpaeq{A}{A}$\lcase{by Theorem~\ref{thm:prelreflR_tp}}
 $\pctx \vdash \tpaeq{A}{A}$\lcase{by $\copda {wk} {alt}$}
 \end{proof}
\begin{theorem}[Admissibility of Reflexivity for Terms, R Version]\\
  \label{thm:prelreflR}
  Let $\crel{\axctx}{\pctx}$.  If $\axctx \vdash \istm{M}$ then $\pctx
  \vdash \termeq{M}{M}$.
\end{theorem}
\begin{proof}
Again, the proof is by induction on the given derivation.  Most cases
are similar to the analogous cases in the proof of the R version for
the monomorphic case (Theorem~\ref{thm:reflR}) and the G version for
types in the polymorphic case (Theorem~\ref{thm:atpreflG}).
We show again the case for  application of terms to types to compare
with the G version.

 \noindent
Case:
$$\cD = \ibnc{\aboved{\cD_1}{\axctx \vdash\istm{M}}}
                   {\aboved{\cD_2}{\axctx \vdash\istp{A}}}
                   {\axctx \vdash \istm{(\tapp{M}{A})}}{}$$
$\crel{\axctx}{\pctx}$\lcase{by assumption}
$\axctx \vdash\istm{M}$\lcase{sub-derivation $\cD_1$}
$\pctx \vdash\termeq {M} M$\lcase {by IH}
$\axctx \vdash\istp {A}$\lcase {sub-derivation $\cD_2$}
$\pctx \vdash\istp {A}$\lcase {by Lemma~\ref{lem:polypromoteR}}
$\pctx \vdash \termeq{(\tapp{M}{A})}{(\tapp{M}{A})}$\lcase{by
  rule $ae_{ta}$}
\end{proof}

\subsection{Non-Linear Context Extensions}
\label{ssec:sub}

We return to the untyped lambda-calculus of
Sect.~\ref{ssec:basic} and establish the equivalence between the
algorithmic definition of equality defined previously, and declarative
equality $\dctx \vdash \termequal{M}{N}$, which includes reflexivity,
symmetry and transitivity in addition to the congruence
rules.\footnote{We acknowledge that this definition of declarative equality
has a degree of redundancy: the assumption $\termequal x x$ in
rule $de_l$ is not needed, since rule $de_r$ plays the variable
role. However, it yields an interesting generalized context schema, which exhibits
issues that would otherwise require more complex case studies.}
\[
\begin{array}{c}
\multicolumn{1}{l}{\mbox{Declarative Equality}}\\[0.75em]
\infer[de_v]{\Gamma \vdash \termequal{x}{x}}{\termequal{x}{x} \in \Gamma}
\qquad 
\infer[de_l]{\Gamma \vdash \termequal{(\lam{x}{M})}{(\lam{x}{N})}}{
\Gamma, \istm{x}; \termequal{x}{x} \vdash \termequal{M}{N}} 
\\[0.75em]
\infer[de_a]{\Gamma \vdash \termequal{(\app{M_1}{M_2})}{(\app{N_1}{N_2})}}
{\Gamma \vdash \termequal{M_1}{N_1} & \Gamma \vdash \termequal{M_2}{N_2}
}
\\[0.75em]
\infer[de_r]{\Gamma \vdash \termequal{M}{M}}{}
\quad
\infer[de_t]{\Gamma \vdash \termequal{M}{N}}{
\Gamma \vdash \termequal{M}{L} & \Gamma \vdash \termequal{L}{N}}
\quad
\infer[de_s]{\Gamma \vdash \termequal{M}{N}}{
\Gamma \vdash \termequal{N}{M}
}
\\[0.75em]
\multicolumn{1}{l}{\mbox{Context Schema }
\dsch ::= \istm x;\termequal{x}{x}}
\end{array}
\]

We now investigate the interesting part of the equivalence, namely that when
we have a proof of $(\termequal{M}{N})$ then we also have a proof of
$(\termeq{M}{N})$.  We show the G version first.

\subsubsection{G Version}
\label{subsubsec:ceqG}
Here, a generalized context must combine the atoms of $\actx$ and
$\dctx$ into one declaration:
\[
\begin{array}{llcl}
 \mbox{Generalized Context Schema} & S_{{da}} := 
 \istm x; \termequal x x; \termeq{x}{x}
\end{array}
\label{page:daschema}
\]
The following lemma promotes Theorems~\ref{thm:reflG}
and~\ref{le:sym-trans} to the ``bigger'' generalized context.
\begin{lem}[G-Promotion for Reflexivity, Symmetry, and Transitivity]
\label{lem:subsumpRST}
\begin{enumerate}
\item If $\Phi_{da} \vdash \istm{M}$, then
$\Phi_{da} \vdash \termeq{M}{M}$.

\item If $\Phi_{da} \vdash \termeq{M}{N}$, then
$\Phi_{da} \vdash \termeq{N}{M}$.

\item 
If $\Phi_{da} \vdash \termeq{M}{L}$ and
$\Phi_{da} \vdash \termeq{L}{N}$, then
$\Phi_{da} \vdash \termeq{M}{N}$.
\end{enumerate}
\end{lem}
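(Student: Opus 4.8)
The plan is to prove the three statements by iterating a single-atom weakening argument, reducing each claim to a theorem already established for the context $\actx$ (respectively $\Phi$ for the symmetry/transitivity parts). The key observation is that $S_{da} ::= \istm{x};\termequal{x}{x};\termeq{x}{x}$ is a \emph{linear extension} of $\asch ::= \istm{x};\termeq{x}{x}$: every atom of an $\asch$-declaration, namely $\istm{x}$ and $\termeq{x}{x}$, occurs in the corresponding $S_{da}$-declaration. Hence there is an obvious $\mathsf{rm}$ function — concretely $\mathsf{rm}^*_{\termequal{x}{x}} : S_{da} \to \asch$, which drops the middle atom from each block and leaves nothing else touched — so that $\mathsf{rm}^*_{\termequal{x}{x}}(\Phi_{da})$ is a context of schema $\asch$. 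Dually, weakening (the \copda{wk}{alt} / $\mathsf{rm}$-based \textit{c-wk} rule of Property~\ref{prop:structc}) lets us move a derivation over $\mathsf{rm}^*_{\termequal{x}{x}}(\Phi_{da})$ up to one over $\Phi_{da}$.

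For part (1): first I would apply \copda{str}{alt} (context strengthening, \textit{c-str}) to the hypothesis $\Phi_{da} \vdash \istm{M}$ to obtain $\actx' \vdash \istm{M}$ where $\actx' := \mathsf{rm}^*_{\termequal{x}{x}}(\Phi_{da})$, noting that the dropped atoms $\termequal{x}{x}$ are irrelevant to the well-formedness judgment $\istm{\cdot}$ (they mention only declarative equality). Since $\actx'$ has schema $\asch$, Theorem~\ref{thm:reflG} gives $\actx' \vdash \termeq{M}{M}$. Finally, \copda{wk}{alt} re-adds the $\termequal{x}{x}$ atoms to each block, yielding $\Phi_{da} \vdash \termeq{M}{M}$. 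The key point is that weakening $\actx'$ back up recovers exactly $\Phi_{da}$, because $\mathsf{rm}^*_{\termequal{x}{x}}$ only removed those atoms and the original context is well-formed. Parts (2) and (3) follow the identical three-step pattern — strengthen away the $\termequal{x}{x}$ atoms, apply Theorem~\ref{le:sym-trans}(1) respectively (2) in the $\asch$-context, then weaken back — the only difference being the judgment carried along, for which $\termequal{x}{x}$ is again irrelevant (algorithmic equality $\termeq{\cdot}{\cdot}$ has no rule mentioning $\termequal{\cdot}{\cdot}$).

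The main obstacle, such as it is, is bookkeeping about the strengthening proviso $(\dagger)$: one must check that the assumptions $\termequal{x}{x}$ being removed are genuinely irrelevant to the judgments $\istm{M}$, $\termeq{M}{N}$, etc., which is immediate from inspecting the inference rules for $\istm{\cdot}$ and $\termeq{\cdot}{\cdot}$ (none of their premises is a declarative-equality judgment), together with the fact that strengthening commutes with the structure of the context so that the result still has schema $\asch$. A secondary subtlety, already flagged in the paper's footnote on the redundancy of $de_l$, is that the \emph{reverse} direction (weakening $\actx$-style contexts to $S_{da}$) would not in general preserve provability of declarative-equality judgments — but we never need that here, since all three statements conclude with an \emph{algorithmic} equality judgment. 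So the proof is a routine application of \textit{c-str} and \textit{c-wk} sandwiching the previously proved theorems, with no genuine induction required.
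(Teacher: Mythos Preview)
Your proposal is correct and matches the paper's approach essentially line for line: strengthen $\Phi_{da}$ to a context of schema $\asch$ via $\mathsf{rm}^*_{\termequal{x}{x}}$ (the paper calls this \textit{c-str}), invoke Theorem~\ref{thm:reflG} or Theorem~\ref{le:sym-trans} as appropriate, then weaken back with \textit{c-wk}. The only cosmetic discrepancy is that you label the operations with the ``alt'' suffix, whereas here no schema alternative is being dropped---an atom is being removed from within each declaration block---so the paper writes \textit{c-str} in its ``decl'' flavor; the underlying rule from Property~\ref{prop:structc} is the same either way.
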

\begin{proof}
Similar to the proof of Theorem~\ref{lem:polypromoteG} where the
application of $\copda {str} {decl}$ transforms a context
$\Phi_{da}$ to $\actx$ by considering each block of the form
$(\istm{x};\termequal{x}{x};\termeq{x}{x})$ and removing
$(\termequal{x}{x})$.
\end{proof}

\begin{theorem}[Completeness, G Version]
\label{thm:ceqG}  \mbox{}\\
If $\Phi_{da} \vdash \termequal{M}{N}$ then
  $\Phi_{da} \vdash \termeq{M}{N}$.
\end{theorem}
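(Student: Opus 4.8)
The plan is to argue by structural induction on the derivation $\cD :: \Phi_{da} \vdash \termequal{M}{N}$, doing a case analysis on the last rule of $\cD$ and translating each declarative-equality rule into an algorithmic one. The three congruence rules $de_v$, $de_a$, $de_l$ will be mirrored by $ae_v$, $ae_a$, $ae_l$, while the three ``proof-theoretic'' rules $de_r$, $de_s$, $de_t$, which have no algorithmic counterpart, will be discharged using exactly the three clauses of the promotion Lemma~\ref{lem:subsumpRST}. Since the whole statement lives in the single generalized context $\Phi_{da}$, there is no context relation to carry along: the proof reduces to checking that $\Phi_{da}$, seen through its schema $S_{da} := \istm x;\,\termequal x x;\,\termeq x x$, always supplies the atoms the algorithmic rules require.

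Most cases are immediate. For $de_v$, where $\termequal{x}{x}\in\Phi_{da}$, the block of $\Phi_{da}$ containing $\termequal{x}{x}$ must also contain $\termeq{x}{x}$ by the shape of $S_{da}$, so $\termeq{x}{x}\in\Phi_{da}$ and $ae_v$ yields $\Phi_{da}\vdash\termeq{x}{x}$. For $de_a$, apply the induction hypothesis to the two premises and conclude by $ae_a$. For $de_r$, which has no premises, use the standing convention that $\Phi_{da}\vdash\termequal{M}{M}$ presupposes $\Phi_{da}\vdash\istm{M}$, and then apply Lemma~\ref{lem:subsumpRST}(1). For $de_s$, apply the induction hypothesis to the premise and then Lemma~\ref{lem:subsumpRST}(2); for $de_t$, apply it to both premises and then Lemma~\ref{lem:subsumpRST}(3).

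The delicate case is $de_l$, where $\cD$ concludes $\Phi_{da}\vdash\termequal{(\lam{x}{M})}{(\lam{x}{N})}$ from the premise $\Phi_{da},\istm{x};\termequal{x}{x}\vdash\termequal{M}{N}$. The context of this premise does \emph{not} satisfy $S_{da}$ -- the freshly added block is missing the atom $\termeq{x}{x}$ -- so the induction hypothesis does not apply directly. The steps, in order, are: apply \dop{wk} to insert $\termeq{x}{x}$ into that block, obtaining $\Phi_{da},\istm{x};\termequal{x}{x};\termeq{x}{x}\vdash\termequal{M}{N}$, which now has schema $S_{da}$; apply the induction hypothesis to get $\Phi_{da},\istm{x};\termequal{x}{x};\termeq{x}{x}\vdash\termeq{M}{N}$; apply \dop{str} to remove $\termequal{x}{x}$ from the block -- legitimate since a derivation of an algorithmic-equality judgment never consults declarative equality, so $\termequal{x}{x}$ is irrelevant to $\termeq{M}{N}$ (and the trailing context is empty) -- leaving $\Phi_{da},\istm{x};\termeq{x}{x}\vdash\termeq{M}{N}$; and finally apply $ae_l$. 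This weaken-then-strengthen manoeuvre around the generalized block is the subsumption idiom in miniature, and keeping straight which atoms sit in the working block before and after each structural step is the only real obstacle in the proof; the remaining cases are a mechanical rule-for-rule translation.
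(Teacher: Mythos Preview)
Your proposal is correct and follows essentially the same approach as the paper: induction on the derivation of $\Phi_{da}\vdash\termequal{M}{N}$, with the promotion Lemma~\ref{lem:subsumpRST} handling $de_r$, $de_s$, $de_t$, and the \dop{wk}/\dop{str} manoeuvre around the generalized block in the $de_l$ case. You even spell out the $de_v$ and $de_a$ cases that the paper elides, and your justification for the strengthening step (irrelevance of $\termequal{x}{x}$ to the algorithmic-equality judgment) is exactly the proviso the paper's \dop{str} rule requires.
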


\begin{proof}
By induction on the derivation $\cD :: \Phi_{da} \vdash
\termequal{M}{N}$. We only show some cases.
\noindent

\noindent
Case: 
$$\cD = \ianc{}
             {\Phi_{da}\vd\termequal{M}{M}}{de_r}$$
$\Phi_{da} \vd\istm M$\lcase{by (implicit) assumption}
$\Phi_{da} \vdash \termeq{M}{M}$\lcase{by Lemma~\ref{lem:subsumpRST} (1)}

\bigskip\noindent
Case: $$\cD = \ibnc{\aboved{\cD_1}{\Phi_{da}\vdash \termequal{M}{L}}}
                   {\aboved{\cD_2}{\Phi_{da}\vdash \termequal{L}{N}}}
                   {\Phi_{da} \vdash \termequal{M}{N}}{de_t}$$
$\Phi_{da}\vdash \termeq{M}{L}$ and $\Phi_{da}\vdash
\termeq{L}{N}$\lcase{by IH on $\cD_1$ and $\cD_2$}
$\Phi_{da} \vdash \termeq{M}{N}$\lcase{by Lemma~\ref{lem:subsumpRST}
(3)}

\noindent
Case: 
$$\cD = \ianc{\aboved {\cD'}{\Phi_{da}, \istm x; \termequal{x}{x}
  \vdash \termequal{M}{N}}}{\Phi_{da}\vdash
  \termequal{(\lam{x}{M})}{(\lam{x}{N})}}{de_l}$$
$\Phi_{da},\istm{x};\termequal{x}{x};\termeq{x}{x}
 \vdash\termequal{M}{N}$ \hfill by $\dop wk$ on $\cD'$ \\
$\Phi_{da},\istm{x}; \termequal{x}{x}; \termeq{x}{x}
 \vdash \termeq{M}{N}$ \hfill
by IH \\
$\Phi_{da}, \istm{x};\termeq{x}{x} \vdash \termeq{M}{N}$ \hfill
by $\dop str$ \\
$\Phi_{da} \vdash \termeq{(\lam{x}{M})}{(\lam{x}{N})}$ \hfill
by rule $ae_l$
\end{proof}
The symmetry case is not shown, but also requires promotion, via
Lemma~\ref{lem:subsumpRST} (2). Note that the $de_l$ case requires
both $\dop{str}$ and $\dop{wk}$.  In contrast, the binder cases for
the G versions of the previous examples
(Theorems~\ref{thm:reflG},~\ref{thm:atpreflG},
and~\ref{thm:prelreflG}) required only $\dop{wk}$.  The need for both
arises from the fact that the generalized context is a non-linear
extension of two contexts, i.e., it is not the same as either
one of the two contexts it combines.

\subsubsection{R Version}
\label{sssec_completeR}
The context relation required here is $\crel \actx \dctx$:
\[
\begin{array}{c}
  \multicolumn{1}{l}{\mbox{Context Relation}}\\[0.75em]
\infer[crel_{ad}]{\crel {\actx, \istm{x};\termeq x x}
                 {\dctx,  \istm x; \termequal x x}}{\crel \actx \dctx} 
\end{array}
\]
As in Sect.~\ref{ssec:poly}, we need the appropriate promotion
lemma, which again requires a relation strengthening lemma:
\begin{lem}[Relational Strengthening]
Let $\crel{\actx}{\dctx}$.  Then there exists a context 
$\ictx$ such that $\crel{\ictx}{\actx}$.
\label{lem:completestrenR}
\end{lem}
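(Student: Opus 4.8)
The plan is to mimic the proof of Lemma~\ref{lem:strenR}: proceed by structural induction on the derivation of $\crel{\actx}{\dctx}$. Since this relation is generated by only two rules --- the (omitted) base rule relating the two empty contexts, and $crel_{ad}$, which simultaneously extends an $\actx$-context by a block $\istm{x};\termeq{x}{x}$ and the corresponding $\dctx$-context by $\istm{x};\termequal{x}{x}$ --- there are just two cases, and the witness $\ictx$ will be built in lockstep with $\actx$ along the recursion.

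In the base case we have $\actx=\cdot$, and we take $\ictx=\cdot$: then $\crel{\cdot}{\cdot}$ holds by rule $crel_{e}$. In the step case the derivation ends in $crel_{ad}$, so $\actx=\actx',\istm{x};\termeq{x}{x}$ and $\dctx=\dctx',\istm{x};\termequal{x}{x}$, with a shorter subderivation of $\crel{\actx'}{\dctx'}$. The induction hypothesis yields a context $\ictx'$ with $\crel{\ictx'}{\actx'}$; setting $\ictx := \ictx',\istm{x}$, a single application of rule $crel_{xa}$ gives $\crel{\ictx',\istm{x}}{\actx',\istm{x};\termeq{x}{x}}$, i.e.\ $\crel{\ictx}{\actx}$, as required.

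There is essentially no obstacle: the two relations are ``stacked'' so that every block that can appear on the $\actx$ side of $\crel{\actx}{\dctx}$ is exactly a block that $crel_{xa}$ knows how to strengthen, and the eigenvariable-freshness side conditions carried implicitly by the relation rules transfer directly (renaming the bound variable $x$ in the $crel_{ad}$ case so that it agrees with the name used by $crel_{xa}$ is always possible by the Barendregt convention). If the companion fact $\schema{\ictx}{\isch}$ is wanted for the subsequent promotion argument, it falls out of the same induction, since each step appends exactly one $\istm{x}$ block.
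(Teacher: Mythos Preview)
Your proof is correct and follows essentially the same approach as the paper, which simply says the argument is analogous to Lemma~\ref{lem:strenR}, i.e., by induction on the derivation of $\crel{\actx}{\dctx}$. You have merely spelled out the two cases that the paper leaves implicit.
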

 \begin{lem}[R-Promotion for Reflexivity]
   Let $\crel{\actx}{\dctx}$. If $\dctx \vdash \istm{M}$ then $\actx
   \vdash \termeq{M}{M}$.
 \label{lem:completepromoteR}
 \end{lem}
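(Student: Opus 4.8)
The plan is to reduce this promotion lemma to the monomorphic reflexivity result already established, Theorem~\ref{thm:reflR}, in exactly the way Lemma~\ref{lem:polypromoteR} was reduced to Theorem~\ref{thm:prelreflR_tp}. The single extra ingredient needed is the relational strengthening lemma, Lemma~\ref{lem:completestrenR}, which, from $\crel{\actx}{\dctx}$, produces a context $\ictx$ with $\crel{\ictx}{\actx}$.

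First I would note that, by the shape of the rule $crel_{ad}$, the contexts $\actx$ and $\dctx$ carry exactly the same eigenvariables $x$, with each block $\istm{x};\termeq{x}{x}$ of $\actx$ matched to a block $\istm{x};\termequal{x}{x}$ of $\dctx$; likewise, by $crel_{xa}$, the $\ictx$ delivered by Lemma~\ref{lem:completestrenR} has one bare block $\istm{x}$ for each such $x$. Hence $\ictx$ is literally $\mathsf{rm}^*_{\termequal{x}{x}}(\dctx)$ (and also $\mathsf{rm}^*_{\termeq{x}{x}}(\actx)$), up to the usual renaming convention on eigenvariables. Making this identification precise is the one place that requires a small argument, but it is routine given the definitions of the two context relations.

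The derivation itself is then three short steps. From the hypothesis $\dctx \vdash \istm{M}$, observe that the well-formedness judgment $\istm{\cdot}$ never inspects assumptions of the form $\termequal{x}{x}$ (the $\mathsf{is\_tm}$ rules do not mention $\mathsf{deq}$), so the proviso of context strengthening is met and \copda{str}{decl} applied with $\mathsf{rm}^*_{\termequal{x}{x}}$ yields $\ictx \vdash \istm{M}$. Since Lemma~\ref{lem:completestrenR} gives $\crel{\ictx}{\actx}$, Theorem~\ref{thm:reflR} applies directly and produces $\actx \vdash \termeq{M}{M}$, which is the claim.

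I do not expect a genuine obstacle here: the proof is a straightforward composition of strengthening and the previously proved reflexivity theorem, and is structurally identical to the earlier promotion lemmas. The only delicate point is establishing that the ``abstractly existing'' $\ictx$ of Lemma~\ref{lem:completestrenR} coincides with the strengthened $\dctx$; should one wish to sidestep even that, the statement of Lemma~\ref{lem:completestrenR} could be strengthened to record the defining $\mathsf{rm}^*$ equation (or an explicit relation between $\ictx$ and $\dctx$), rendering the present proof entirely mechanical.
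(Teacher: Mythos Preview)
Your proposal is correct and follows essentially the same route the paper sketches: strengthen $\dctx \vdash \istm{M}$ to $\ictx \vdash \istm{M}$ via \textit{c-str}, invoke Lemma~\ref{lem:completestrenR} to obtain $\crel{\ictx}{\actx}$, and then apply Theorem~\ref{thm:reflR} directly---exactly the pattern of Lemma~\ref{lem:polypromoteR}, minus the final weakening step (unneeded here since Theorem~\ref{thm:reflR} already lands in $\actx$). Your explicit discussion of why the $\ictx$ produced by relational strengthening coincides with $\mathsf{rm}^*_{\termequal{x}{x}}(\dctx)$ is more careful than the paper, which leaves that identification tacit, but the argument is otherwise the same.
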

%
The proofs are analogous to Lemmas~\ref{lem:strenR}
and~\ref{lem:polypromoteR},
with the proof of Lemma~\ref{lem:completepromoteR} requiring
Lemma~\ref{lem:completestrenR}.

\begin{theorem}[Completeness, R Version]
\label{thm:ceqR}
Let $\crel\actx\dctx$. 
If $\dctx \vdash \termequal{M}{N}$ then $\actx \vdash \termeq{M}{N}$.
\end{theorem}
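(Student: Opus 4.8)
The plan is to proceed by structural induction on the derivation $\cD :: \dctx \vdash \termequal{M}{N}$, carrying throughout the hypothesis $\crel{\actx}{\dctx}$ together with the implicit well-formedness assumptions $\dctx \vdash \istm{M}$ and $\dctx \vdash \istm{N}$. The congruence cases mirror the R version of reflexivity (Theorem~\ref{thm:reflR}): in the variable case $de_v$ we have $\termequal{x}{x} \in \dctx$, and a context-membership lemma for $\crel{\actx}{\dctx}$ --- analogous to Lemma~\ref{le:member_refl} and proved by induction on the relation --- yields $\termeq{x}{x} \in \actx$, so rule $ae_v$ applies; the application case $de_a$ applies the induction hypothesis to each sub-derivation and concludes with $ae_a$.

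The lambda case $de_l$ is where the R version is cleaner than the G version of Theorem~\ref{thm:ceqG}. From the sub-derivation $\dctx, \istm{x};\termequal{x}{x} \vdash \termequal{M}{N}$ we extend the relation by rule $crel_{ad}$ to obtain $\crel{(\actx, \istm{x};\termeq{x}{x})}{(\dctx, \istm{x};\termequal{x}{x})}$, apply the induction hypothesis to get $\actx, \istm{x};\termeq{x}{x} \vdash \termeq{M}{N}$, and finish with rule $ae_l$. Crucially, and in contrast to the corresponding G-version case, no $\dop{str}$ or $\dop{wk}$ steps are needed, because the two contexts of the relation already carry exactly the required declarations, whereas the generalized context $\Phi_{da}$ of Theorem~\ref{thm:ceqG} had to grow an extra atom $\termequal{x}{x}$ and then shed it.

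For the three ``declarative-only'' rules: in case $de_r$ we combine the implicit assumption $\dctx \vdash \istm{M}$ with R-Promotion for Reflexivity (Lemma~\ref{lem:completepromoteR}) applied to $\crel{\actx}{\dctx}$, concluding $\actx \vdash \termeq{M}{M}$ in one step; in the symmetry case $de_s$ and transitivity case $de_t$ we apply the induction hypothesis to the sub-derivation(s), obtaining algorithmic-equality judgments already stated in $\actx$, and then invoke admissibility of symmetry and transitivity of algorithmic equality (Theorem~\ref{le:sym-trans}) directly in $\actx$ --- no promotion is required here, since $\actx$ is precisely the context in which that theorem holds. I expect no real obstacle: the proof is largely routine bookkeeping, analogous to Lemmas~\ref{lem:completestrenR} and~\ref{lem:completepromoteR} and to Theorem~\ref{thm:reflR}, with the only care needed being to spell out, via the induction on $\crel{\actx}{\dctx}$, that the blocks $\istm{x};\termequal{x}{x}$ in $\dctx$ and $\istm{x};\termeq{x}{x}$ in $\actx$ are introduced in lock-step so that membership transfers in the variable case, and noting that the reflexivity case is quarantined inside Lemma~\ref{lem:completepromoteR}, which itself relies on the relational strengthening Lemma~\ref{lem:completestrenR}.
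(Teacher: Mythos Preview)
Your proposal is correct and follows essentially the same approach as the paper: induction on the $\termequalcon$ derivation, extending the relation via $crel_{ad}$ in the $de_l$ case (with no $\dop{wk}$/$\dop{str}$ needed), invoking Lemma~\ref{lem:completepromoteR} for $de_r$, and appealing directly to Theorem~\ref{le:sym-trans} in $\actx$ for $de_s$ and $de_t$ without any promotion. The paper does not spell out the $de_v$ case, but your handling of it via a membership lemma analogous to Lemma~\ref{le:member_refl} is the expected step.
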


\begin{proof} By induction on the derivation $\cD :: \dctx
\vdash \termequal{M}{N}$. 

\noindent
Case:
$$\cD = \ianc{}
             {\dctx\vd\termequal{M}{M}}{de_r}$$
$\dctx\vd\istm M$\lcase{by (implicit) assumption}
$\actx \vdash \termeq{M}{M}$\lcase{by Theorem~\ref{lem:completepromoteR}}


\bigskip\noindent
Case: $$\cD = \ibnc{\aboved{\cD_1}{\dctx\vdash \termequal{M}{L}}}
                   {\aboved{\cD_2}{\dctx\vdash \termequal{L}{N}}}
                   {\dctx \vdash \termequal{M}{N}}{de_t}$$
$\actx\vdash \termeq{M}{L}$ and $\actx\vdash
\termeq{L}{N}$\lcase{by IH on  $\cD_1$ and $\cD_2$}
$\actx \vdash \termeq{M}{N}$\lcase{by Theorem~\ref{le:sym-trans} (2)}

\noindent
Case: 
$$\cD = \ianc{\aboved {\cD'}{\dctx, \istm x; \termequal{x}{x}
  \vdash \termequal{M}{N}}}{\dctx\vdash
  \termequal{(\lam{x}{M})}{(\lam{x}{N})}}{de_l}$$
%
$\crel{\actx}{\dctx}$ \hfill by assumption \\
$\crel {\actx,\istm{x};\termeq{x}{x}} {\dctx,\istm{x};\termequal{x}{x}}$
\hfill by rule $crel_{ad}$ \\
$\actx,\istm{x};\termeq{x}{x}
 \vdash\termeq{M}{N}$ \hfill by IH on $\cD'$  \\
$\actx \vdash \termeq{(\lam{x}{M})}{(\lam{x}{N})}$ \hfill
by rule $ae_l$
\end{proof}
Only one promotion lemma is required in this proof, for
the reflexivity case (which requires one occurrence each of
$\copda{str}{}$ and $\copda{wk}{}$), and no strengthening or weakening is
needed in the lambda case (thus no occurrences of $\dop{str/wk}$ in
this proof). 
In contrast, the proof of the G version of this theorem
(Theorem~\ref{thm:ceqG}) uses 3 occurrences of each of $\copda{str}{}$
and $\copda{wk}{}$ via promotion Lemma~\ref{lem:subsumpRST} and one
occurrence each of $\dop{str}$ and $\dop{wk}$ in the lambda case.

\subsection{Order
}
\label{ssec:exc}

A consequence of viewing contexts as {sequences} is that \emph{order}
comes into play, and therefore the need to consider \emph{exchanging}
the elements of a context. This happens when, for example, a judgment
singles out a particular occurrence of an assumption in head
position. We exemplify this with a ``parallel'' substitution property for
algorithmic equality, stated below.
The proof also involves some slightly more sophisticated reasoning
about names in the variable case than previously observed. Furthermore,
note that this substitution property does not ``come for free'' in a HOAS
encoding in the way, for example, that type substitution
(Lemma~\ref{lem:of-subst}) does.
\begin{theorem}[Pairwise Substitution]
\label{thm:asubst}
  If $\actx, \istm x; \termeq x x\vd\termeq {M_1} {M_2} $ and $\actx\vd\termeq
  {N_1} {N_2}$, then $\actx\vd\termeq {([N_1/x]M_1)} {([N_2/x]M_2)}$.
\end{theorem}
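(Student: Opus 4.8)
The plan is to proceed by induction on the derivation $\cD :: \actx, \istm{x}; \termeq{x}{x} \vd \termeq{M_1}{M_2}$, following the same pattern as Theorems~\ref{thm:reflR} and~\ref{thm:ceqR}, but now tracking how the substitution $[N_i/x](\cdot)$ interacts with the context. The non-trivial aspect is that the assumption block $\istm{x}; \termeq{x}{x}$ singled out in the statement need not be the \emph{last} block of the context: although we have written it last for readability, in the inductive steps for $ae_l$ (and $ae_{tl}$ if we were in the polymorphic setting) the context grows, so in general we are working with a context of the form $\actx_1, \istm{x}; \termeq{x}{x}, \actx_2$, and the block for $x$ sits in the middle. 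This is exactly the situation for which the exchange property (Proposition on Exchange, \emph{exc}) is needed, so I would first restate the theorem in this more general form — ``if $\actx_1, \istm{x}; \termeq{x}{x}, \actx_2 \vd \termeq{M_1}{M_2}$ and $\actx_1, \actx_2 \vd \termeq{N_1}{N_2}$, then $\actx_1, [N_1/x]\actx_2 \vd \termeq{([N_1/x]M_1)}{([N_2/x]M_2)}$'' — and recover the stated version by taking $\actx_2 = \cdot$.

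The cases then go as follows. In the congruence cases $ae_a$ (and, if present, $ae_{ta}$), substitution distributes over the term constructors and over $\actx_2$, so we simply apply the induction hypothesis to each subderivation and reassemble with the same rule. In the binder case $ae_l$, the subderivation lives in $\actx_1, \istm{x}; \termeq{x}{x}, \actx_2, \istm{y}; \termeq{y}{y}$ (after $\alpha$-renaming the bound variable $y$ to be fresh, in particular $y \notin \FV(N_1) \cup \FV(N_2)$); the induction hypothesis applied with $\actx_2$ extended by $\istm{y}; \termeq{y}{y}$ yields a derivation in $\actx_1, [N_1/x](\actx_2, \istm{y}; \termeq{y}{y})$, and since $y$ is fresh this block is unaffected, so rule $ae_l$ reassembles the conclusion. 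The interesting case is $ae_v$, where $\cD$ concludes $\termeq{z}{z}$ from $\termeq{z}{z} \in \actx_1, \istm{x}; \termeq{x}{x}, \actx_2$. There are two subcases: if $z \equiv x$, then $[N_1/x]M_1 = N_1$ and $[N_2/x]M_2 = N_2$, and the conclusion $\actx_1, [N_1/x]\actx_2 \vd \termeq{N_1}{N_2}$ follows from the second hypothesis $\actx_1, \actx_2 \vd \termeq{N_1}{N_2}$ together with a parametric substitution / weakening step to account for $\actx_2$ versus $[N_1/x]\actx_2$ (here one also uses that $M_1$ and $M_2$ are well-formed, so $M_i \equiv x$ and the substitution acts as claimed). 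If $z \not\equiv x$, then $z$ is unaffected by the substitution, the membership $\termeq{z}{z} \in \actx_1, [N_1/x]\actx_2$ persists (using Lemma~\ref{le:member_shape}-style reasoning about the shape of $\actx$), and we close with $ae_v$.

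The main obstacle is the variable case, and specifically the bookkeeping around \emph{which} occurrence of an assumption is being looked up: one has to argue carefully, using the schema $\asch$ and the distinctness of eigenvariables guaranteed by the schema-satisfaction rules, that the block $\istm{x}; \termeq{x}{x}$ is the \emph{only} place $x$ can be declared, so that the case split $z \equiv x$ versus $z \not\equiv x$ is exhaustive and the substitution $[N_i/x]$ does what we expect on the rest of the context. The need to invoke exchange in order to bring the $x$-block into the position required by the rule $ae_l$ — and to state the generalized induction hypothesis in the first place — is the subtlety the authors flag when they say this property does not ``come for free'' in a HOAS encoding; everything else is routine propagation of the substitution through the congruence rules.
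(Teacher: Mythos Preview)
Your proposal is correct, but you take the \emph{opposite} route from the paper.  The paper keeps the statement exactly as given---with the block $\istm{x};\termeq{x}{x}$ in tail position---and in the $ae_l$ case appeals to the exchange rule $exc$ to swap the freshly introduced $y$-block past the $x$-block, so that the induction hypothesis (still stated with the $x$-block last) applies with the enlarged prefix $\actx,\istm{y};\termeq{y}{y}$.  A $\dop{wk}$ on the second hypothesis then matches things up.  You instead generalize the statement to a context of shape $\actx_1,\istm{x};\termeq{x}{x},\actx_2$, which lets the $ae_l$ case go through by simply absorbing the $y$-block into $\actx_2$, with no exchange at all; the price is the extra bookkeeping about $\actx_2$ (and the weakening you flag in the $z\equiv x$ sub-case).

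The paper in fact discusses your alternative explicitly right after its proof and makes the trade-off precise: the generalized formulation ``works well in a paper and pencil style'' but ``is much harder to mechanize, since it brings in reasoning about appending and splitting lists that are foreign to the matter at hand.''  So your proof is fine on paper, but since this section is meant to exhibit where exchange is actually \emph{used}, the paper deliberately prefers the version that calls $exc$.  Two minor clean-ups: in schema $\asch$ every block is self-contained, so $[N_1/x]\actx_2=\actx_2$ and the ``parametric substitution'' you mention in the variable case is vacuous; and your final paragraph reads as if you still invoke exchange, which your generalized IH makes unnecessary---pick one story and stick to it.
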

\begin{proof}
  By induction on the derivation $\cD :: \actx, \istm x;\termeq
  x x\vd\termeq {M_1} {M_2}$ and inversion on $\actx\vd\termeq {N_1}
  {N_2}$.  We show two cases.

\noindent
Case:
$$\cD = \ianc{\termeq {y}{y}  \in \actx, \istm x; \termeq x x}
             {\actx, \istm x;\termeq x x \vdash \termeq {y}{y} }{ae_v}$$
We need to establish $\actx\vd\termeq {([N_1/x]y)} {([N_2/x]y)}$.

\noindent
\emph{Sub-case:} $y = x$: Applying the substitution to the above
judgment, we need to show $\actx\vd\termeq{N_1}{N_2}$,
which we have.

\noindent
\emph{Sub-case:} $\termeq {y}{y}  \in \actx$, for  $y\not =
x$. Applying the substitution in this case gives us
$\actx\vd\termeq y y$, which we have by assumption. 

\noindent
Case:
$$\cD = \ianc{\aboved {\cD'}{\actx , \istm x; \termeq x x, \istm y;
    \termeq y y
  \vdash \termeq{M_1}{M_2}}}{\actx, \istm x; \termeq x x \vdash
  \termeq{(\lam{y}{M_1})}{(\lam{y}{M_2})}}{de_l}$$
%
$\actx ,  \istm y; \termeq y y, \istm x;\termeq x x 
  \vdash \termeq{M_1}{M_2}$ \hfill by $exc$ on  $\cD'$\\
 $\actx\vd\termeq  {N_1} {N_2}$,\hfill by assumption\\
$\actx,\istm y;\termeq{y}{y}\vd\termeq  {N_1} {N_2}$\hfill by $\dop wk$\\
$\actx,\istm y;\termeq{y}{y}\vd\termeq {([N_1/x]M_1)} {([N_2/x]M_2})$ \hfill
by IH \\
$\actx  \vdash \termeq{[N_1/x](\lam{y}{M_1})}{[N_2/x](\lam{y}{M_2})}$ \hfill
by rule $ae_l$ and possible renaming.
\end{proof}

We remark that there are more general ways to formulate properties such as
Theorem~\ref{thm:asubst} that do \emph{not} require (on paper) exchange; for
example, 
\begin{quote}
  If $\actx,\istm x; \termeq x x, \actx'\vd\termeq {M_1} {M_2} $
  and $\actx\vd\termeq {N_1} {N_2}$, then $\actx, \actx'\vd\termeq
  {([N_1/x]M_1)} {([N_2/x]M_2)}$.
\end{quote}
The proof of the latter statement has a similar structure to the
previous one, except that it uses $\dop wk$ in the first variable
sub-case, while the binding case does not employ any structural
property to apply the induction hypothesis, by taking $(\actx'',\istm
y;\termeq{y}{y})$ as $\actx'$. While this works well in a paper and
pencil style, it is much harder to mechanize, since it brings in
reasoning about appending and splitting lists that are foreign to the
matter at hand.

We conclude by noting that there are examples where exchange cannot be
applied, since the dependency proviso is not satisfied. Cases in point
are substitution lemmas for dependent types. Here, other encoding
techniques must be used, as explored in  \citet{Crary09}.

\subsection{Uniqueness
}
\label{ssec:unique}
\renewcommand{\lamt}[3]{\mathsf{lam}\,#1^{#2}.\,#3}
Uniqueness of context variables plays an unsurprisingly important role in
proving type uniqueness, i.e.\ every lambda-term has a unique
type. For the sake of this discussion it is enough to consider the
monomorphic case, where abstractions include type annotations on bound
 variables, and types consist only of a ground type and a function
 arrow.
 \[
 \begin{array}[t]{llll}
 \mbox{Terms} & M & \bnfas & y \mid \lamt{x}{A}{M} \mid \app{M_1}{M_2}\\
 \mbox{Types} & A & \bnfas & \grnd \mid \fsp A B
 \end{array}
 \] 
The typing rules are the obvious subset
 of the ones presented in Sect.~\ref{sec:theory}, yielding:
\[
\begin{array}{llcl}
 \mbox{Context Schema} & \tsch:= \istm x; x \oftp A
\end{array}
\]
The statement of the theorem requires only a single context and thus
there is no distinction to be made between the R and G versions.

\begin{theorem}[Type Uniqueness]\label{thm:unique} 
 If $\tctx \vdash M \hastype A$ and $\tctx \vdash M \hastype B$ then $A = B$.
\end{theorem}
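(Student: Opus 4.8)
The plan is to proceed by induction on the first typing derivation $\cD :: \tctx \vdash M \hastype A$, performing inversion on the second derivation $\tctx \vdash M \hastype B$ in each case. Since each term constructor ($x$, $\lamt{x}{C}{M'}$, $\app{M_1}{M_2}$) is the conclusion of exactly one typing rule, the shape of $M$ determines which rule must also have produced $\tctx \vdash M \hastype B$, so the inversion is unambiguous. The work then splits into one conceptually interesting case (variables) and two routine structural cases (application and abstraction).

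First I would isolate the genuinely new ingredient as a lemma about contexts: if $\schema{\tctx}{\tsch}$, and $x\oftp A \in \tctx$ and $x\oftp A' \in \tctx$, then $A = A'$. This is proved by induction on $\tctx$: the schema $\tsch := \istm{x}; x\oftp A$ forces every block of $\tctx$ to contain exactly one atom of the form $x\oftp(\cdot)$, and the premise $\FV(D) \cap \FV(\Gamma) = \emptyset$ of the schema-satisfaction rule guarantees that the eigenvariable named in one block cannot reappear in another. Hence a given $x$ is declared in at most one block, and within that block its annotation is unique. This lemma discharges the variable case: if $\cD$ ends in $\mathit{of}_v$ then $M = x$ and $x\oftp A \in \tctx$; inversion on the second derivation (necessarily also $\mathit{of}_v$, since $x$ is a bare variable) yields $x\oftp B \in \tctx$; and the lemma gives $A = B$.

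The remaining cases are straightforward once inversion has been applied. If $\cD$ ends in $\mathit{of}_a$, then $M = \app{M_1}{M_2}$ with premises $\tctx \vdash M_1 \hastype \fsp{C}{A}$ and $\tctx \vdash M_2 \hastype C$; inversion gives $\tctx \vdash M_1 \hastype \fsp{C'}{B}$ and $\tctx \vdash M_2 \hastype C'$; the induction hypothesis applied to the subderivation for $M_1$ yields $\fsp{C}{A} = \fsp{C'}{B}$, and injectivity of the constructor $\carr$ gives $A = B$ (and, incidentally, $C = C'$). If $\cD$ ends in $\mathit{of}_l$, then $M = \lamt{x}{C}{M'}$ with premise $\tctx, \istm{x}; x\oftp C \vdash M' \hastype A'$ and $A = \fsp{C}{A'}$; since the annotation $C$ is part of the syntax of $M$, inversion on the second derivation yields the same $C$, together with $\tctx, \istm{x}; x\oftp C \vdash M' \hastype B'$ and $B = \fsp{C}{B'}$; the induction hypothesis on $M'$ in the extended context gives $A' = B'$, hence $A = \fsp{C}{A'} = \fsp{C}{B'} = B$. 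Note that the extended context $\tctx, \istm{x}; x\oftp C$ again satisfies $\tsch$, so the context uniqueness lemma — and thus the induction hypothesis — stays applicable when the inner derivation eventually bottoms out at a variable.

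The main obstacle is conceptual rather than computational: it is the variable case together with the context uniqueness lemma it rests on. All the real content lies in recognizing that the block structure $\istm{x}; x\oftp A$ paired with the distinctness-of-eigenvariables condition built into schema satisfaction is precisely what licenses the claim that a variable has a unique type in $\tctx$. In a mechanization this is exactly the point where the chosen representation of contexts — and how freshness of bound variables is enforced — makes itself felt; the application and abstraction cases would go through essentially unchanged regardless of that choice.
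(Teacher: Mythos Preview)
Your proposal is correct and follows essentially the same approach as the paper: induction on the first typing derivation with inversion on the second, with the variable case resting on the uniqueness of declarations for a given eigenvariable in a context satisfying $\tsch$. The only difference is presentational---you factor the context-uniqueness argument out as an explicit lemma, whereas the paper invokes it inline (``since all assumptions about $x$ occur uniquely, these must be the same block'')---and you spell out the application and abstraction cases that the paper omits.
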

\begin{proof}
  The proof is by induction on the first derivation and inversion on
  the second.  We show only the variable case where uniqueness plays a
  central role.

\noindent
Case: $$\cD = \ianc{x\oftp A\in \tctx}{\gvd x \hastype A}{{\mathit{of}_v}}$$
We know that 
$x \oftp A \in \tctx$ by rule $\mathit{of}_v$. By definition, $\tctx$
contains block $(\istm{x};x\oftp A)$.  Moreover, we know $\tctx \vdash
x \hastype B$ by assumption. By inversion using rule $\mathit{of}_v$,
we know that $x \oftp B \in \tctx$, which means that $\tctx$ contains
block $(\istm{x};x\oftp B)$.  Since all assumptions about $x$ occur
uniquely, these must be the same block. Thus $A$ must be identical to
$B$. 
\end{proof}

\subsection{Substitution }
\label{ssec:cut}

In this section we address the interaction of the substitution
property with context reasoning. It is well known and rightly
advertised that substitution lemmas come ``for free'' in HOAS
encodings, since substitutivity is just a by-product of
hypothetical-parametric judgments. We refer
to~\citet{Pfenning01book} for more details.  A classic
example is the proof of type preservation for a functional programming
language, where a lemma stating that substitution preserves typing is
required in every case that involves a $\beta$-reduction.  
However, this example theorem is unduly restrictive since
functional programs are closed expressions; in fact, the proof
proceeds by induction on (closed) evaluation and inversion on typing,
hence only addressing contexts in a marginal way. We thus discuss a
similar proof for an evaluation relation that ``goes under a lambda''
and we choose parallel reduction, as it is a standard relation also
used in other important case studies such as the Church-Rosser
theorem. The context schema and relevant rules are below.
\[  \begin{array}{c}
 \multicolumn{1}{l}{\mbox{Parallel Reduction}}\\[0.75em]
 \ianc{x\step x\in \Gamma}{\Gamma\vd x \step x}{{pr_v}}\qquad 
\ianc{\Gamma, \istm{x};x\step x \vd M \step N}
     {\Gamma\vd \lam x M \step \lam x N }{pr_l} \vsk
 \ibnc{\Gamma, \istm{x};x\step x \vd M \step M'}{\Gamma\vd N \step N'}
{\Gamma\vd(\app{(\lam x M)}  N) \step [N'/x] M' }{{pr_{\beta}}} \vsk
 \ibnc{\Gamma\vd M \step M'}{\Gamma\vd N\step N'}{\Gamma\vd(\app M  N) \step (\app {M'}{N'})}{{pr_a}}\qquad
\\[0.75em]
\multicolumn{1}{l}{\mbox{Context Schema }
\rsch := \istm x; x \step x}
\end{array}
\]

The relevant substitution lemma  is:
\begin{lem}
\label{lem:of-subst}
  If $\tctx, \istm{x};x\oftp A\vd M\hastype B$ and $\gvd N\hastype A$, then
  $\gvd [N/x]M\hastype B$. 
\end{lem}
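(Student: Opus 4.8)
The plan is to obtain this essentially as an instance of the abstract substitution properties of Sect.~\ref{subsec:struct}: it is the textbook substitution lemma that ``comes for free'' with a HOAS treatment, the substitutivity being inherited from the meta-framework. Starting from $\tctx,(\istm x; x\oftp A)\vd M\hastype B$, I would first apply the \emph{parametric substitution} property to the eigenvariable $x$ with the term $N$ --- legitimate because $\gvd N\hastype A$ carries with it $\gvd\istm N$, by the standing convention that well-typedness presupposes well-formedness --- which turns the block $\istm x; x\oftp A$ into the single atom $N\oftp A$ and yields $\tctx,(N\oftp A)\vd [N/x]M\hastype [N/x]B$. Then I would apply the \emph{hypothetical substitution} property to discharge the atom $N\oftp A$, whose witness is exactly the hypothesis $\gvd N\hastype A$; dropping the now-empty block, this gives $\tctx\vd [N/x]M\hastype [N/x]B$. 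Since $x$ is fresh for $\tctx$ and types contain no term variables, $[N/x]\tctx=\tctx$ and $[N/x]B=B$, which is the claimed $\gvd [N/x]M\hastype B$.

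For concreteness I would also record the ``unfolded'' version: induction on $\cD :: \tctx,\istm x; x\oftp A\vd M\hastype B$, by case analysis on its last rule. In the variable case ($\mathit{of}_v$), $M$ is a variable $y$ with $y\oftp B\in\tctx,\istm x; x\oftp A$. If $y=x$, then since $x$ does not occur in $\tctx$ this membership can only witness the atom $x\oftp A$ of the added block, so $B=A$; as $[N/x]x=N$, the goal is literally the hypothesis $\gvd N\hastype A$ (the same use of distinctness of assumptions as in the variable case of Theorem~\ref{thm:unique}). If $y\neq x$, then $y\oftp B\in\tctx$ and $[N/x]y=y$, so $\gvd y\hastype B$ follows directly by $\mathit{of}_v$. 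The application case ($\mathit{of}_a$) is immediate: apply the induction hypothesis to the two subderivations and reapply $\mathit{of}_a$, using that $[N/x]$ distributes over application. In the abstraction case ($\mathit{of}_l$) the subderivation extends the context with a block $\istm y; y\oftp C$; here I would first exchange that block with $\istm x; x\oftp A$ (legitimate, since types are closed, so the well-formedness proviso of exchange holds), weaken $\gvd N\hastype A$ to $\tctx,\istm y; y\oftp C\vd N\hastype A$ by \copda{wk}{}, apply the induction hypothesis, and reapply $\mathit{of}_l$; since $x\neq y$ and types are closed, the substitution commutes with the abstraction.

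The only genuinely delicate step --- and the reason this lemma is worth stating in isolation --- is the subcase $y=x$: that is where distinctness of the context assumptions is what pins down $B=A$, and where the hypothesis $\gvd N\hastype A$ is actually consumed. Everything else is bureaucratic: commuting the substitution past the term constructors (free in HOAS), plus, in the un-generalized formulation used here, one block exchange and one weakening in the abstraction case. Both of the latter vanish either by stating the lemma with a trailing context $\Gamma'$, in the style of the remark following Theorem~\ref{thm:asubst}, or --- more to the point --- by mechanizing in a HOAS framework, where, in contrast with the pairwise substitution property for algorithmic equality (Theorem~\ref{thm:asubst}), no structural reasoning about the context is required at all.
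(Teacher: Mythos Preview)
Your first paragraph is correct and is essentially the paper's own proof: derive the lemma as a direct corollary of the abstract parametric and hypothetical substitution properties rather than by induction. The only cosmetic difference is bookkeeping on the $\istm{\cdot}$ atom---the paper keeps $\istm{N}$ after parametric substitution and discharges it with a second hypothetical substitution using the implicit $\gvd\istm N$, whereas you (following the formal statement of parametric substitution more literally) drop it in that step; your additional inductive unfolding is extra material the paper explicitly mentions as ``usual'' but deliberately omits.
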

\begin{proof} While this is usually proved by induction on the first
  derivation, we show it as a corollary of the substitution
  principles.

\noindent
 $\tctx, \istm{x};x\oftp A\vd M\hastype B$ \hfill by assumption\\ 
 $\tctx, \istm{N};N\oftp A\vd [N/x]M\hastype B$ \hfill by parametric
 substitution\\ 
 $\tctx, \istm{N}\vd [N/x]M\hastype B$ \hfill by hypothetical
 substitution\\ 
$\tctx\vd \istm N$\hfill by (implicit) assumption\\
  $\gvd [N/x]M\hastype B$\hfill by hypothetical
 substitution
\end{proof}

We show only the R version of type preservation.  For the G version,
the context schema is obtained by combining the schemas $\rsch$
and $\tsch$ similarly to how $S_{{da}}$ was defined to
combine $\asch$ and $\dsch$ in
Sect.~\ref{subsubsec:ceqG}.  We leave it to the reader to
complete such a proof.  For the R version, we introduce the customary
context relation, which in this case is:
\[
\begin{array}{c}
  \multicolumn{1}{l}{\mbox{}}
\infer[crel_{rt}]{\crel{\rctx, \istm{x};x\step x}
               {\tctx, \istm{x};x\oftp A}}{\crel \rctx \tctx}
\end{array}
\]

\begin{theorem}[Type Preservation for Parallel Reduction]
\label{thm:tps}
Assume $\crel\rctx\tctx$. If $\pvd M\step N$ and $\gvd M\hastype A$,
then $\gvd N\hastype A$.
\end{theorem}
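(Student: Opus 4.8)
The plan is to prove Theorem~\ref{thm:tps} by induction on the derivation $\cD :: \pvd M \step N$, performing inversion on the typing derivation $\gvd M \hastype A$ in each case. Throughout, I keep the relation $\crel \rctx \tctx$ fixed and appeal to a context membership lemma (analogous to Lemma~\ref{le:member_refl}): $\crel \rctx \tctx$ implies that $x \step x \in \rctx$ iff $x \oftp A \in \tctx$ for some $A$. This lemma, proved by a routine induction on $\crel \rctx \tctx$, handles the variable case: from $\pvd x \step x$ we get $x \step x \in \rctx$, hence $x \oftp A' \in \tctx$; inverting $\gvd x \hastype A$ via $\mathit{of}_v$ gives $x \oftp A \in \tctx$, and by uniqueness of assumptions in the context (the same principle used in Theorem~\ref{thm:unique}) we conclude $A = A'$, which is the required $A$.

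The congruence cases $pr_a$ are straightforward: invert the typing of $\app M N$ via $\mathit{of}_a$ to get $\gvd M \hastype \fsp B A$ and $\gvd N \hastype B$, apply the induction hypothesis to the two reduction subderivations (the relation $\crel\rctx\tctx$ is unchanged since the context is not extended), and reassemble with $\mathit{of}_a$. The binder case $pr_l$ is the first place context reasoning enters: from $\pvd \lam x M \step \lam x N$ with subderivation $\rctx, \istm x; x\step x \vd M \step N$, invert the typing to get $\tctx, \istm x; x\oftp B \vd M \hastype C$ with $A = \fsp B C$; extend the relation via $crel_{rt}$ to obtain $\crel{\rctx,\istm x;x\step x}{\tctx,\istm x;x\oftp B}$; apply the induction hypothesis to get $\tctx,\istm x; x\oftp B \vd N \hastype C$; and conclude $\gvd \lam x N \hastype \fsp B C$ by $\mathit{of}_l$.

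The main obstacle is the $pr_\beta$ case, where $M = \app{(\lam x M_0)}{N_0}$ steps to $[N_0'/x]M_0'$ with subderivations $\rctx,\istm x;x\step x \vd M_0 \step M_0'$ and $\pvd N_0 \step N_0'$. Inverting the typing of $\app{(\lam x M_0)}{N_0}$ via $\mathit{of}_a$ and then $\mathit{of}_l$ yields $\tctx,\istm x;x\oftp B \vd M_0 \hastype A$ and $\gvd N_0 \hastype B$. Applying the induction hypothesis (using $crel_{rt}$ as above) gives $\tctx,\istm x;x\oftp B \vd M_0' \hastype A$, and the induction hypothesis on the $N_0$-subderivation gives $\gvd N_0' \hastype B$. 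Now I invoke the substitution Lemma~\ref{lem:of-subst} with these two facts to obtain $\gvd [N_0'/x]M_0' \hastype A$, which is exactly what is required. The delicate point is lining up the contexts correctly so that Lemma~\ref{lem:of-subst} applies --- in particular that after the two appeals to the induction hypothesis the typing derivation for $M_0'$ sits in precisely the context $\tctx,\istm x;x\oftp B$ expected by the lemma, and that $N_0'$ is typed in $\tctx$; this is where one must be careful that the context relation is threaded consistently and that no implicit weakening or strengthening is silently assumed beyond what $crel_{rt}$ provides.
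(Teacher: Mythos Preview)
Your overall strategy matches the paper's: induction on the reduction derivation, inversion on typing, and an appeal to the substitution lemma (Lemma~\ref{lem:of-subst}) in the $pr_\beta$ case. Your handling of $pr_a$, $pr_l$, and $pr_\beta$ is correct and essentially identical to what the paper does (the paper only spells out $pr_v$ and $pr_\beta$, but your other cases are the standard ones).

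The one place where you diverge unnecessarily is the variable case. You introduce a context membership lemma to extract $x\oftp A'\in\tctx$ from $x\step x\in\rctx$, then invoke uniqueness of context assumptions to identify $A'$ with $A$. None of this is needed: in the $pr_v$ case we have $M=x=N$, so the goal $\gvd N\hastype A$ is literally the assumption $\gvd M\hastype A$. The paper dispatches this case in one line (``the result follows trivially''), and your detour through the context relation and uniqueness, while not incorrect, obscures the fact that nothing at all has to be proved. In particular, the membership lemma you state is not required anywhere in this proof.
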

\begin{proof}
  The proof is by induction on the derivation $\cD :: \pvd M\step N$ and
  inversion on $\gvd M\hastype A$. We show only two cases:

\medskip\noindent
Case: $$\cD =  \ianc{x\step x\in \rctx}{\pvd x \step x}{pr_v}$$
We know that in this case $M=x=N$. Then the result follows trivially.

\medskip\noindent
Case: $$\cD = \ibnc{\aboved{\cD_1}{\rctx, \istm{x};x\step x \vd M \step M'}}
                   {\aboved{\cD_2}{\pvd N \step N'}}
                   {\pvd(\app{(\lam x M)}  N) \step [N'/x] M'}{pr_{\beta}}$$
$\gvd (\app{(\lam x M)} N) \hastype A$\hfill by assumption\\
${\gvd (\lam x M) \hastype \fsp{B}{A}}$ and ${\gvd N \hastype B}$\lcase{by
  inversion on rule $\mathit{of}_a$}
${\gvd N' \hastype B}$\lcase{by IH on $\cD_2$ and the latter}
$\tctx, \istm{x};x\oftp B \vd M \hastype A$ \lcase{by
  inversion on rule $\mathit{of}_l$}
$\crel\rctx\tctx $ \lcase{by assumption}
$\crel{(\rctx, \istm{x};x\step x)}{(\tctx, \istm{x};x \oftp B)}$
   \lcase{by rule $crel_{rt}$}
$\tctx, \istm{x};x\oftp B \vd M' \hastype A$ \lcase{by IH}
 $\gvd [N'/x]M'\hastype A$ \lcase{by Lemma~\ref{lem:of-subst} (substitution)}
\end{proof}


If we were to prove a similar result for the polymorphic
$\lambda$-calculus, we would need another substitution lemma, namely:
\begin{lem}
\label{lem:poly-subst}
\sloppypar
  If $\atpctx, \istp{\alpha}\vd M\hastype B$ and $\atpctx \vd \istp A$, then
 \mbox{ $\atpctx \vd [A/\alpha]M\hastype [A/\alpha]B$}. 
\end{lem}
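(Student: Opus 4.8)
The plan is to follow the template used for Lemma~\ref{lem:of-subst}: rather than arguing by induction on the typing derivation, I will obtain the statement as an immediate corollary of the substitution principles, here invoking only \emph{parametric substitution}. The argument is actually shorter than the one for Lemma~\ref{lem:of-subst}, because the block being eliminated, $\istp{\alpha}$, consists of a single atom --- namely the well-formedness atom $\wf{\alpha}$ for the eigenvariable $\alpha$ itself --- so there is no residual atom left over that would require a subsequent hypothetical-substitution step.

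Concretely, I would apply parametric substitution to the first hypothesis $\atpctx, \istp{\alpha} \vd M \hastype B$, matching it against the general pattern $\Gamma_1,(D_1;\wf{x};D_2), \Gamma_2 \vdash J$ by taking $\Gamma_1 := \atpctx$, $D_1 := D_2 := \cdot$ (so that the block $D_1;\wf{x};D_2$ is exactly the one-atom block $\istp{\alpha}$, with the generic $\wf{}$ instantiated by $\mathsf{is\_tp}$, the eigenvariable being $\alpha$), $\Gamma_2 := \cdot$, $J := (M \hastype B)$, and the substituting object $t := A$ (a type). Its side condition $\Gamma_1, D_1 \vdash \wf{t}$ then becomes $\atpctx \vd \istp{A}$, which is precisely the second hypothesis. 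Parametric substitution thus yields $\atpctx, (\cdot;\,[A/\alpha]\cdot),\, [A/\alpha](\cdot) \vd [A/\alpha](M \hastype B)$; erasing the empty declaration and using the convention that $\Gamma, \cdot, \Gamma'$ is $\Gamma, \Gamma'$, this is exactly $\atpctx \vd [A/\alpha]M \hastype [A/\alpha]B$, the desired conclusion. Note also that $\alpha$ is fresh for $\atpctx$, so the substitution leaves $\atpctx$ itself untouched, consistent with the statement.

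I would close with two remarks. First, unlike in Lemma~\ref{lem:of-subst}, no hypothetical-substitution step is needed: parametric substitution already consumes the whole block $\istp{\alpha}$, leaving no well-formedness atom about $\alpha$ (or about a fresh replacement for it) to discharge. Second, although parametric substitution need not preserve schema satisfaction in general, it does here: the block $\istp{\alpha}$ is dropped outright, and the hypothesis $\atpctx \vd \istp{A}$ ensures $A$ is well-formed in $\atpctx$, so $\atpctx$ remains a valid context of schema $\atpsch$. The only genuine difficulty is bookkeeping --- correctly lining up the one-atom block $\istp{\alpha}$ of the statement with the general form of the parametric substitution principle, and keeping track that the generic $\wf{}$ is instantiated by $\mathsf{is\_tp}$ in this instance; once that is set up, the result is immediate.
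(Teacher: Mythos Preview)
Your proposal is correct and follows the same high-level approach as the paper: derive the lemma as a corollary of the substitution principles rather than by induction on the typing derivation. The paper's entire proof is the one-line remark ``Again, this follows immediately from parametric and hypothetical substitution,'' so you are on the intended track.

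There is one small discrepancy worth flagging. You argue that \emph{parametric substitution alone} suffices, because the formal statement of that principle drops the atom $\wf{x}$ and returns $\Gamma_1,(D_1;[t/x]D_2),\ldots$; with $D_1=D_2=\cdot$ the block vanishes and you are done. Read literally, the formal statement indeed supports this. However, the paper explicitly invokes \emph{both} principles, and its detailed proof of Lemma~\ref{lem:of-subst} treats parametric substitution as merely instantiating the eigenvariable---turning $\istm{x};x\oftp A$ into $\istm{N};N\oftp A$---and then uses hypothetical substitution to discharge the residual $\istm{N}$. Under that reading, your first step would yield $\atpctx,\istp{A}\vd [A/\alpha]M\hastype[A/\alpha]B$, and you would still need one hypothetical-substitution step (using the hypothesis $\atpctx\vd\istp{A}$) to drop $\istp{A}$. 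This is a minor bookkeeping difference arising from a slight tension in the paper between the stated form of parametric substitution and its use in the proof of Lemma~\ref{lem:of-subst}; either reading gives a valid argument, and your observation that the one-atom block makes the situation simpler than in Lemma~\ref{lem:of-subst} is apt.
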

Again, this follows immediately from parametric and hypothetical
substitution, whereas a direct inductive proof may not be completely
trivial to mechanize.

\ignore{We again need a promotion lemma, this time involving the context
relation. First, we extend context operations to relations:
\begin{lem}[Relation Strengthening]
  Let $\crel{\axctx}{\pctx}$; if $\textsf{rm}^*_{\istm{x}}(\axctx) =
  \alphctx$ and $\textsf{rm}^*_{\istm x ; \termeq x x }(\pctx) =
  \atpeqctx$, then $\crel{\alphctx}{\atpeqctx}$.
\label{lem:strenR}
\end{lem}
\begin{proof}
  By induction on the derivation of $\crel{\axctx}{\pctx}$, using the
  definition of the strengthening functions.
\end{proof}
\begin{lem}[R-Promotion for Type Reflexivity]
  Let $\crel{\axctx}{\pctx}$. If $\axctx \vdash \istp{A}$ then $\pctx
  \vdash \tpaeq{A}{A}$.
\label{lem:polypromoteR}
\end{lem}
\begin{proof} \mbox{} \\
$\axctx \vdash \istp{A}$\lcase{by assumption}
$\alphctx \vdash \istp{A}$\lcase{by $\copda {str} {alt}$}
$\crel{\axctx}{\pctx}$\lcase{by assumption}
$\crel{\alphctx}{\atpeqctx}$\lcase{by relational
  strengthening (Lemma~\ref{lem:strenR})}
$\atpeqctx \vdash \tpaeq{A}{A}$\lcase{by Theorem~\ref{thm:prelreflR_tp}}
$\pctx \vdash \tpaeq{A}{A}$\lcase{by $\copda {wk} {alt}$}
\end{proof}
\begin{theorem}[Admissibility of Reflexivity for Terms, R Version]\\
  \label{thm:prelreflR}
  Let $\crel{\axctx}{\pctx}$.  If $\axctx \vdash \istm{M}$ then $\pctx
  \vdash \termeq{M}{M}$.
\end{theorem}
\begin{proof}
Again, the proof is by induction on the given derivation.  Most cases
are similar to the analogous cases in the proof of the R version for
the monomorphic case (Theorem~\ref{thm:reflR}) and the G version for
the types in the polymorphic case (Theorem~\ref{thm:atpreflG}).  The
two binder cases do not require $\dop wk$, as is common in R proofs.
The case for application of terms to types uses promotion
Lemma~\ref{lem:polypromoteR}, similarly to the use of promotion
Lemma~\ref{lem:polypromoteG} in the proof of Theorem~\ref{thm:prelreflG}.
\end{proof}
}

\section{The ORBI Specification Language}
\label{sec:mech}

{ORBI} (\underline{O}pen challenge problem
\underline{R}epository for systems supporting reasoning with
\underline{BI}nders) is an open repository for sharing benchmark
problems based on the notation we have developed.
ORBI is designed to be a human-readable, easily machine-parsable,
uniform, yet flexible and extensible language for writing
specifications of formal systems including grammar, inference rules,
contexts and theorems. The language directly upholds HOAS
representations and is oriented to support the mechanization of the
benchmark problems in Twelf, Beluga, Abella, and Hybrid, without
hopefully precluding other existing or future HOAS systems. At the
same time, we hope it also is amenable to translations to systems
using other representation techniques such as nominal systems.

The desire for ORBI to cater to both type and proof theoretic
frameworks requires an almost impossible balancing act between the two
views. While all the systems we plan to target are essentially
two-level, they differ substantially, as we will see in much more
detail in the companion paper~\cite{companion}. For example, contexts
are first-class and part of the specification language in Beluga; in
Twelf, schemas for contexts are part of the specification language,
which is an extension of LF, but users cannot explicitly quantify over
contexts and manipulate them as first-class objects; in Abella and
Hybrid, contexts are (pre)defined using inductive definitions on the
reasoning level.

We structure the language in two parts:
 \begin{enumerate}
 \item the problem description, which includes the grammar of
   the object language syntax, inference rules, context schemas and
   context relations;
 \item the logic language, which includes syntax for expressing
   theorems and directives to ORBI2X\footnote{Following TPTP's
     nomenclature \cite{TPTP}, we call ``ORBI2X'' any tool taking an
     ORBI specification as input; for example, the translator for
     Hybrid mentioned earlier translates syntax, inference rules, and
     context definitions of ORBI into input to the Coq version of
     Hybrid, and is designed so that it can be adapted fairly directly
     to output Abella scripts.}  tools.
 \end{enumerate}

 We consider the notation that we present here as a first attempt at
 defining ORBI (Version 0.1), where the goal is to cover the
 benchmarks considered in this paper.  As new benchmarks are added, we
 are well aware that we will need to improve the syntax and increase
 the expressive power---we discuss limitations and some possible
 extensions in Sect.~\ref{sec:concl}.

\subsection{Problem Description}
ORBI's language for defining the grammar of an object language
together with inference rules is based on the logical framework LF;
pragmatically, we have adopted the concrete syntax of LF
specifications in Beluga which is almost identical to  Twelf's.
The advantage is that specifications can be directly type checked by
Beluga thereby
eliminating many syntactically correct but meaningless expressions.

Object languages are written according to the EBNF (Extended
Backus-Naur Form) grammar in Fig.~\ref{fig:gramlf}, which uses certain
conventions: \verb!{a}!  means repeat a production zero or more times,
and comments in the grammar are enclosed between \verb!(*! and
\verb!*)!.  The token \verb!id! 
refers to identifiers starting with a lower or upper case letter.
\begin{figure}[th]
\begin{verbatim}
sig      ::= {decl                      (* declaration *)
           | s_decl}                    (* schema declaration *)

decl     ::= id ":" tp "."              (* constant declaration *)
           | id ":" kind "."            (* type declaration *)

op_arrow ::= "->" | "<-"                (* A <- B same as B -> A *)

kind     ::= type
           | tp op_arrow kind           (* A -> K *)
           | "{" id ":" tp "}" kind     (* Pi x:A.K *)

tp       ::= id {term}                  (* a M1 ... M2 *)
           | tp op_arrow tp
           | "{" id ":" tp "}" tp       (* Pi x:A.B *)

term     ::= id                         (* constants, variables *)
           | "\" id "." term            (* lambda x. M *)
           | term term                  (* M N *)

s_decl    ::= schema s_id ":" alt_blk "."

s_id      ::= id

alt_blk   ::= blk {"+" blk}                         

blk       ::= block id ":" tp {";" id ":" tp}  
\end{verbatim}
\caption{ORBI Grammar for Syntax, Judgments, Inference Rules, and
  Context Schemas}
\label{fig:gramlf} 
\end{figure}
These grammar rules are basically the standard ones used both in Twelf
and Beluga and we do not discuss them in detail here. We only note
that while the presented grammar permits general dependent types up to
level $n$, ORBI specifications will only use level 0 and level
1. Intuitively, specifications at level 0 define the syntax of a given
object language, while specifications at level 1 (i.e.~type families
which are indexed by terms of level 0) describe the judgments and
rules for a given OL\@.  We exemplify the grammar relative to the
example of algorithmic vs.\ declarative equality used in
Subsections~\ref{ssec:basic},~\ref{ssec:sub}, and~\ref{ssec:exc}. The
full ORBI specification is given in Appendix~\ref{sec:aede-orbi}, and
all examples described in this section are taken from that
specification.  For the remaining example specifications, we refer the
reader to the the companion paper \cite{companion} or to
\url{https://github.com/pientka/ORBI}.

To assist compact translations to systems that do not include the LF
language, we also support \emph{directives} written as comments of a
special form, i.e., they are prefixed by \verb!%! and
ignored by the LF type checker.  For example, we provide directives
that allow us to distinguish between the syntax definition of an
object language and the definition of its judgments and inference
rules. (See Appendix~\ref{sec:aede-orbi}.)
Directives, including their grammar, are detailed in
Sect.~\ref{ssec:direct}.

\paragraph{Syntax} 
An ORBI file starts in the \verb!Syntax! section with the declaration
of the constants used to encode the syntax of the OL in question, here
untyped lambda-terms, which are introduced with the declaration
\verb!tm:type!.  This declaration along with those of the constructors
\verb!app! and \verb!lam! in the \verb!Syntax! section fully specify
the syntax of OL terms.  We represent binders in the OL
 using binders in the HOAS
meta-language. Hence the constructor \verb!lam! takes in a
function of type \verb!tm -> tm!. For example, the
OL term $(\lam{x}{\lam{y}{\app{x}{y}}})$ is represented as
\verb!lam (\x. lam (\y. app x y))!, where ``$\verb-\-$'' is the binder
of the metalanguage. Bound variables found in the object language are
not explicitly represented in the meta-language.

\paragraph{Judgments and Rules} These are introduced as LF type
families (predicates)
in the \verb|Judgments| section followed by object-level inference
rules for these judgments in the \verb|Rules| section.\footnote{There
  are several excellent
  tutorials~\cite{Pfenning01book,HarperLicata:JFP07} on how to encode
  OLs in LF, and hence we keep it brief.}  In our running example, we
have two judgments, \verb|aeq| and \verb|deq| of type
\verb|tm -> tm -> type|.  Consider first the inference rule for
algorithmic equality for application, where the ORBI text is a
straightforward encoding of the rule:

\noindent
\begin{minipage}[c]{0.5\linewidth}
\begin{small}
\begin{verbatim}
ae_a: aeq M1 N1 -> aeq M2 N2 
   -> aeq (app M1 M2) (app N1 N2).
\end{verbatim}   
\end{small}
\end{minipage}
\begin{minipage}[c]{0.4\linewidth}
\[
\begin{array}{c}
~~~~~\infer[ae_a]{\termeq{(\app{M_1}{M_2})}(\app{N_1}{N_2})}
 {\termeq{M_1} {N_1}\qquad  \termeq{M_2}{N_2}
 }
\\\;
\end{array}
\]
\end{minipage}

\noindent
Uppercase letters such as \verb|M1| denote schematic variables,
which are implicitly quantified at the outermost level, namely
\verb|{M1:tm}|, as commonly done for readability purposes in Twelf and
Beluga.

The binder case is  more interesting:

\noindent
\begin{minipage}[c]{0.6\linewidth}
\begin{small}
\begin{verbatim}
ae_l: ({x:tm} aeq x x -> aeq (M x) (N x)) 
   -> aeq (lam (\x. M x)) (lam (\x. N x)).
\end{verbatim}
\end{small}
\end{minipage}
\begin{minipage}[c]{0.1\linewidth}
\[
\begin{array}{c}
\infer[ae_l^{x,ae_v}]{\termeq{(\lam{x}{M})}{(\lam{x}{N})}}{
 \begin{array}{c}
  \infer[x]{\istm{x}}{}\hspace{1cm}\infer[ae_v]{\termeq{x}{x}}{}\\
\vdots\\
\termeq{M}{N}
 \end{array}
} 
\end{array}
\]
\end{minipage}

\noindent
We 
view the $\istm x$ assumption as the parametric
assumption \lstinline{x:tm}, while the hypothesis $\termeq x x$
(and its scoping) is encoded within the embedded implication
\lstinline{aeq x x -> aeq (M x) (N x)} in the current (informal)
signature augmented with the dynamic declaration for
\lstinline{x}.\footnote{As is well known, 
parametric assumptions and embedded implication are
unified in the type-theoretic view.} Recall that the ``variable'' case of
an implicit-context presentation, namely $ae_v$, is folded inside
the binder case.

\paragraph{Schemas} A {schema}
declaration \verb!s_decl! is introduced
using the keyword \verb!schema!.  A \verb!blk! consists of one or more
declarations and \verb!alt_blk! describes \emph{alternating}
schemas. For example, schema  $\asch$ in Sect.~\ref{sssec:gen}
appears in the \verb!Schemas! section of Appendix~\ref{sec:aede-orbi} as:
\begin{verbatim}
schema xaG: block (x:tm; u:aeq x x).
\end{verbatim}  
As another example, in this case illustrating
a schema sporting alternatives, we encode the
schema $\psch$ from polymorphic 
equality as:

\begin{verbatim}
schema aeqG: block (a:tp; u:atp a a) + block (x:tm; v:aeq x x).
\end{verbatim}  

While we can type-check the schema definitions using an extension of
the LF type checker (as implemented in Beluga), we
do not verify that the given schema definition is meaningful with
respect to the specification of the syntax and inference rules; in
other words, we do not perform ``world checking'' in Twelf lingo.

\paragraph{Definitions}
So far we have considered the specification language for encoding
formal systems. ORBI also supports declaring inductive definitions for
specifying context relations and theorems.  We start with the grammar
for inductive definition (Fig.~\ref{fig:gramindef}). An inductive
predicate is given a \verb|r_kind| by the production \verb!def_dec!.
Although we plan to provide syntax for specifying more general
inductive definitions, in this version of ORBI we \emph{only} define
\emph{context relations} inductively, that is $n$-ary predicates
between contexts of a given schema. Hence the base predicate is of the
form \verb|id {ctx}| relating different contexts.

\begin{figure}[ht]
\begin{verbatim}
def_dec  ::= "inductive" id ":" r_kind "=" def_body "."

r_kind   ::= "prop"
           | "{" id ":" s_id "}" r_kind

def_body ::= "|" id ":" def_prp {def_body}

def_prp  ::= id {ctx}
           | def_prp "->" def_prp

ctx      ::= nil | id | ctx "," blk
\end{verbatim}
\caption{ORBI Grammar for Inductive Definitions describing Context Relations}
\label{fig:gramindef} 
\end{figure}
\noindent
For example, the relation $\crel \ictx \actx$ is encoded in the
\verb!Definitions! section of Appendix~\ref{sec:aede-orbi} as:
\begin{verbatim}
inductive xaR : {G:xG} {H:xaG} prop =
| xa_nil: xaR nil nil
| xa_cons: xaR G H -> xaR (G, block x:tm) (H, block x:tm; u:aeq x x).
\end{verbatim}
This kind of  relation can be translated fairly
directly to inductive n-ary predicates in systems
supporting the proof-theoretic view. In the type-theoretic framework
underlying Beluga, inductive predicates relating contexts correspond
to recursive data types indexed by contexts; this also allows for a
straightforward translation. Twelf's type theoretic framework,
however, is not rich enough to support inductive definitions.

\subsection{Language for Theorems and Directives}

While the elements of an ORBI specification detailed in the previous
subsection were relatively easy to define in a manner that is well
understood by all the different systems we are targeting, we
illustrate in this subsection those elements that are harder to
describe uniformly due to the different treatment and meaning of
contexts in the different systems.

\paragraph{Theorems}
We list the grammar for theorems in Fig.~\ref{fig:gramthm}.  Our
reasoning language includes a category \verb!prp! that specifies the logical
formulas we support. The base predicates include \verb!false,true!,
term equality, atomic predicates of the form \verb!id {ctx}!, which
are used to express context relations, and predicates of the form
\verb![ctx |- J]!, which represent judgments of an object language
within a given context.
Connectives and quantifiers include implication, conjunction,
disjunction, universal and existential quantification over terms, and
universal quantification over context variables.

\begin{figure}[ht]
\begin{verbatim}
thm      ::= "theorem" id ":" prp "."

prp      ::= id {ctx}                     (* Context relation *)
           | "[" ctx  "|-" id {term} "]"  (* Judgment in a context *)
           | term "=" term                (* Term equality *)
           | false                        (* Falsehood *)
           | true                         (* Truth *)
           | prp "&" prp                  (* Conjunction *)
           | prp "||" prp                 (* Disjunction *)
           | prp "->" prp                 (* Implication *)
           | quantif prp                  (* Quantification *)

quantif  ::= "{" id ":" s_id "}"          (* universal over contexts *)
           | "{" id ":" tp "}"            (* universal over terms *)
           | "<" id ":" tp ">"            (* existential over terms *)
\end{verbatim}
\caption{ORBI Grammar for Theorems}
\label{fig:gramthm} 
\end{figure}

The specification of the G and R versions
of the completeness theorem is as follows:
\begin{verbatim}
theorem ceqG: {G:daG} [G |- deq M N] -> [G |- aeq M N].
theorem ceqR: {G:xdG}{H:xaG} daR G H -> [G |- deq M N] -> [H |- aeq M N].
\end{verbatim}
This and all the others theorems pertaining to the development of the
meta-theory of algorithmic and declarative equality are listed in the
\verb!Theorems! section of Appendix~\ref{sec:aede-orbi}. The theorems
stated are a straightforward encoding of the main theorems in 
Subsections~\ref{ssec:basic},~\ref{ssec:sub}, and~\ref{ssec:exc}.

As mentioned, we do not type-check theorems; in particular, we do not
define the meaning of \verb![ctx |- J]!, since several interpretations
are possible.  In Beluga, every judgment \verb+J+ must be meaningful
within the given context \verb+ctx+; in particular, \emph{terms}
occurring in the judgment \verb+J+ must be meaningful in
\verb+ctx+. As a consequence, both parametric and hypothetical
assumptions relevant for establishing the proof of \verb+J+ must be
contained in \verb+ctx+. Instead of the local context view adopted in
Beluga, Twelf has one global ambient context containing all relevant
parametric and hypothetical assumptions. Systems based on proof-theory
such as Hybrid and Abella distinguish between assumptions denoting
eigenvariables (i.e.~parametric assumptions), which live in a global
ambient context and proof assumptions (i.e.~hypthetical assumptions),
which live in the context \verb+ctx+. While users of different systems
understand how to interpret \verb![ctx |- J]!, reconciling these
different perspectives in ORBI is beyond the scope of this paper. Thus
for the time being, we view theorem statements in ORBI as a kind of
\emph{comment}, where it is up to the user of a particular system to
determine how to translate them.

\paragraph{Directives}
\label{ssec:direct}

As we have mentioned before, \emph{directives} are comments that help
the ORBI2X tools to generate target
representations of the ORBI specifications. The idea is reminiscent of
what  \emph{Ott}  \cite{ott} does to customize certain
declarations, e.g.\ the representation of variables, to the different
programming languages/proof assistants it supports. 
The grammar for directives is listed in Fig.~\ref{fig:direct}.

\begin{figure}[th]
  \centering
\begin{verbatim}
dir       ::=  '%' sy_id what decl {dest} '.'
             | '%%' sepr '.'

sy_id     ::=  hy | ab | bel | tw 

sy_set    ::= '[' sy_id {',' sy_id} ']'

what      ::=  wf | explicit | implicit

dest      ::= 'in' ctx | 'in' s_id | 'in' id

sepr      ::=  Syntax | Judgments | Rules | Schemas | Definitions 
               | Directives | Theorems
\end{verbatim}  
  \caption{ORBI Grammar for Directives}
  \label{fig:direct}
\end{figure}

Most of the directives that we consider in this version of ORBI are
dedicated to help the translations into proof-theoretical systems,
although we include also some to facilitate the translation of
\emph{theorems} to Beluga. The set of directives is not intended to be
complete and the meaning of directives is system-specific.  Beyond
directives (\texttt{sepr}) meant to structure ORBI specs, the
instructions \verb|wf| and \verb|explicit| are concerned with the
asymmetry in the proof-theoretic view between declarations that give
typing information, e.g. \verb|tm:type|, and those expressing
judgments, e.g. \verb|aeq:tm -> tm -> type|. In Abella and Hybrid, the
former may need to be reified in a judgment, in order to show that
judgments preserve the well-formedness of their constituents, as well
as to provide induction on the structure of terms; yet, in order to
keep proofs compact and modular, we want to minimize this reification
and only include them where necessary.  The first line in the
\verb!Directives! section of Appendix~\ref{sec:aede-orbi} states the
directive ``\verb|% [hy,ab] wf tm|'' that refers to the first line of
the
\verb!Syntax! section where \verb!tm! is introduced, and indicates
that we need a predicate (e.g., \verb|is_tm|) to express
well-formedness of terms of type \verb|tm|.  Formulas expressing the
definition of this predicate are automatically generated from the
declarations of the constructors \lstinline{app} and \lstinline{lam}
with their types.

The keyword \verb|explicit| indicates when such well-formedness
predicates should be included in the translation of the declarations
in the \verb!Rules! section.  For example, the following
formulas both represent possible translations of the \verb|ae_l| rule
to Abella and Hybrid:
$$\begin{array}{l}
\forall M,N.\;
(\forall x.\; \mathsf{is\_tm}~x\rightarrow\termeq{x}{x}\rightarrow
 \termeq{M x}{N x})\rightarrow
 \termeq{(\mathsf{lam}~M)}{(\mathsf{lam}~N)} \\
\forall M,N.\;
(\forall x.\; \termeq{x}{x}\rightarrow
 \termeq{M x}{N x})\rightarrow
 \termeq{(\mathsf{lam}~M)}{(\mathsf{lam}~N)}
\end{array}$$
where the typing information is explicit in the first and implicit in
the second.  By default, we choose the latter, that is well-formed
judgments are assumed to be \emph{implicit}, and require a directive
if the former is desired. 
In fact, in the previous section, we assumed
that whenever a judgment is provable, the terms in it are
well-formed, e.g., if \verb|aeq M N| is provable, then so are \verb|is_tm M|
and \verb|is_tm N|.  Such a lemma is indeed provable in Abella and
Hybrid from the \emph{implicit} translation of the rules for
\verb|aeq|.  Proving a similar lemma for the \verb|deq| judgment, on
the other hand, requires some strategically placed explicit
well-formedness information.  In particular, the two
directives 
\begin{verbatim}
% [hy,ab] explicit x in de_l.
% [hy,ab] explicit M in de_r.
\end{verbatim}
require the clauses \verb|de_l| and \verb|de_r| to be translated to the
following formulas:
$$\begin{array}{l}
\forall M,N.\;
(\forall x.\; \mathsf{is\_tm}~x\rightarrow\termequal{x}{x}\rightarrow
 \termequal{M x}{N x})\rightarrow
 \termequal{(\mathsf{lam}~M)}{(\mathsf{lam}~N)} \\
\forall M.\; \mathsf{is\_tm}~M\rightarrow\termeq{M}{M}
\end{array}$$

The case for schemas is analogous: in 
the proof-theoretic view, schemas are translated to 
unary inductive predicates.  Again, typing information is left
implicit in the translation unless a directive is included.  For
example, the \verb|xaG| schema with no associated directive will be
translated to a definition that expresses that whenever context
\verb|G| has schema \verb|xaG|, then so does \verb|G,aeq x x|.  For
the \verb|daG| schema, with directive

\begin{verbatim}
% [hy,ab] explicit x in daG.
\end{verbatim}

\begin{sloppypar}
\noindent the translation will express that whenever \verb|G| has schema
\verb|daG|, then so does \verb|G, (is_tm x;deq x x;aeq x x)|.
\end{sloppypar}

Similarly, directives in context relations, such as:
\begin{verbatim}
% [hy,ab] explicit x in G in xaR.
\end{verbatim}
also state which well-formedness annotations to make explicit in the
translated version. In this case, when translating the definition of
\verb!xaR! in the \verb!Definitions! section, they are to be kept in
\verb|G|, but skipped in \verb|H|.

Keeping in mind that we consider the notion of directive
\emph{open} to cover other benchmarks and different systems, we offer
some speculation about directives that we may need to translate
theorems for the examples and systems that we are considering.
(Speculative directives are omitted from
Appendix~\ref{sec:aede-orbi}).  For example, theorems \verb|reflG| is
proven by induction over \verb+M+. As a consequence, \verb+M+ must be
explicit.
\begin{verbatim}
% [hy,ab,bel] explicit M in H in reflG.
\end{verbatim}
\begin{sloppypar}
\noindent
Hybrid and Abella interpret the directive by adding an explicit assumption 
$\left[H \vdash \mathsf{is\_tm}~M\right]$, as illustrated by the result of the translation: 
\end{sloppypar}
$$\forall H,M.\; \left[H \vdash \mathsf{is\_tm}~M\right]\rightarrow
               \left[H \vdash \termeq{M}{M}\right]$$
In Beluga, the directive is interpreted as
\begin{lstlisting}
{H:xaG} {M:[H.tm]} [H.aeq (M ..) (M ..)].
\end{lstlisting}
where \lstinline!M! will have type \lstinline!tm! in the context
\lstinline!H!. Moreover, since the term \lstinline!M! is used in the
judgment \lstinline!aeq! within the context \lstinline!H!, we
associate \lstinline!M!  with an identity substitution (denoted by
\lstinline!..!).  In short, the directive allows us to lift the type
specified in ORBI to a contextual type which is meaningful in Beluga.
In fact, Beluga always needs additional information on how to
interpret terms---are they closed or can they depend on a given
context? For translating \verb+symG+ for example, we use the following
directive to indicate the dependence on the context:
\begin{verbatim}
% [bel] implicit M in H in symG.
% [bel] implicit N in H in symG.
\end{verbatim}

\subsection{Guidelines}

In addition, we introduce a set of \emph{guidelines} for ORBI specification
writers, with the goal of helping translators generate output that is
more likely to be accepted by a specific system.  ORBI 0.1 includes
four such guidelines, which are motivated by the desire not to put too
many constraints in the grammar rules.  First, as we have seen in our
examples, we use as a convention that free
variables which denote schematic variables in rules are written using upper
case identifiers; we use lower case identifiers for eigenvariables in
rules. Second, while the grammar does not restrict what types we can quantify
over, the intention is that we quantify over types of level-0, i.e.\ objects of
the syntax level, only. Third, in order to more easily accommodate
systems without dependent types, \verb!Pi! should not be used when
writing non-dependent types.  An arrow should be used instead.  (In
LF, for example, \verb!A -> B! is an abbreviation for \verb!Pi x:A.B!
for the case when \verb!x! does not occur in \verb!B!.  Following this
guideline means favoring this abbreviation whenever it applies.)
Fourth, when writing a context (grammar
\verb|ctx|), distinct variable names should be used in different blocks.


\section{Related Work}
\label{sec:related}

Our approach to structuring contexts of assumptions takes its
inspiration from Martin-L\"of's theory of judgments
\cite{MartinLof85}, especially in the way it has been realized in 
Edinburgh LF \cite{Harper93jacm}. However, our formulation owes more to Beluga's
type theory, where contexts are first-class citizens, than to the
notion of \emph{regular world} in Twelf. The latter was introduced in
\citet{Schurmann00phd}, and used in \citet{schurmann03tphols} for the
meta-theory of Twelf and in \citet{CSL} for different purposes. It was
further explicated in \citet{HarperLicata:JFP07}'s review of Twelf's
methodology, but its treatment remained unsatisfactory since the
notion of worlds is extra-logical. Recent work
\cite{Wang:2013} on a logical rendering of Twelf's totality checking
has so far been limited to closed objects.

The creation and sharing of a library of benchmarks has proven to be
very beneficial to the field it represents. The brightest example is
\emph{TPTP} \cite{TPTP}, whose influence on the development, testing and
evaluation of automated theorem provers cannot be
underestimated. Clearly our ambitions are much more limited. We have
also taken some inspiration from its higher-order extension \emph{THF0}
\cite{THF0}, in particular in its construction in stages. 

The success of {TPTP} has spurned other benchmark suites in related
subjects, see for example \emph{SATLIB} \cite{SATLIB}; however, the only one
concerned with induction is the \emph{Induction Challenge Problems}
(\url{http://www.cs.nott.ac.uk/~lad/research/challenges}), a
collection of examples geared to the \emph{automation} of inductive
proof. The benchmarks are taken from arithmetic, puzzles, functional
programming specifications etc.\ and as such have little connection
with our endeavor.
On the other hand both Twelf's wiki
(\url{http://twelf.org/wiki/Case_studies}), Abella's library
(\url{http://abella-prover.org/examples}) and Beluga's
distribution contain a set of context-intensive examples, some of
which coincide with the ones presented here. As such they are prime
candidates to be included in ORBI.

Other projects have put forward LF as a common ground:
\emph{Logosphere}'s goal (\url{http://www.logosphere.org}) was the
design of a representation language for logical formalisms, individual
theories, and proofs, with an interface to other theorem proving
systems that were somewhat connected, but the project never
materialized. \emph{SASyLF} \cite{SASyLF} originated as a tool to
teach programming language theory: the user specifies the syntax,
judgments, theorems \emph{and} proofs thereof (albeit limited to
\emph{closed} objects) in a paper-and-pencil HOAS-friendly way and the
system converts them to totality-checked Twelf code. The capability to
express and share proofs is of obvious interest to us, although such
proofs, being a literal proof verbalization of the corresponding Twelf type
family,
are irremediably verbose.

\emph{Why3} (\url{http://why3.lri.fr}) is a software verification platform
that intends to provide a front-end to third-party theorem provers,
from proof assistants such as Coq to SMT-solvers.  To this end Why3
provides a first-order logic with rank-1 polymorphism, recursive
definitions, algebraic data types and inductive predicates
\cite{why3}, whose specifications are then translated in the several
systems that Why3 supports. Typically, those translations are
forgetful, but sometimes, e.g., with respect to Coq, they add some
annotations, for example to ensure non-emptiness of types. Although we
are really not in the same business as Why3, there are several ideas
that are relevant, such as the notion of a \emph{driver}, that is, a
configuration file to drive transformations specific to a system. Moreover,
Why3 provides an API for users to write and implement their own drivers
and transformations.

\emph{Ott} \cite{ott} is a highly engineered tool for ``working
semanticists,'' allowing them to write programming language
definitions in a style very close to paper-and-pen specifications; 
then  those are compiled into \LaTeX\ and, more interestingly, into
proof assistant code, currently supporting Coq, Isabelle/HOL and
HOL\@. Ott's metalanguage is endowed with a rich theory of binders,
but at the moment it favors the ``concrete'' (non $\alpha$-quotiented)
representation, while providing support for the nameless
representation for a single binder. Conceptually, it would be natural
to extend Ott to generate ORBI code, as a bridge for Ott to support
HOAS-based systems. Conversely, an ORBI user would benefit from having
Ott as a front-end, since the latter notion of grammar and judgment
seems at first sight general enough to support the notion of schema
and context relation.

In the category of environments for programming language descriptions,
we mention \emph{PLT-Redex}
\cite{PLTbook} and also the \emph{K} framework \cite{RosuK}. In both, several
large-scale language descriptions have been specified and
tested. However, none of those systems has any support for binders, let alone
context specifications, nor can any meta-theory be carried out.

Finally, there is a whole research area dedicated to the handling and
sharing of mathematical content (\emph{MMK}
\url{http://www.mkm-ig.org}) and its representation (\emph{OMDoc}
\url{https://trac.omdoc.org/OMDoc}), which is only very loosely
connected to our project.

\section{Conclusion and Future Work}
\label{sec:concl}

We have presented
an initial set of benchmarks that highlight a variety of different
aspects of reasoning within a context of assumptions.  We have also
provided
an infrastructure for formalizing these benchmarks in a variety of
HOAS-based systems, 
and for facilitating their comparison.  We have developed a framework
for expressing contexts of assumptions as structured sequences, which
provides additional structure to contexts via schemas and
characterizes their basic  properties.  Finally, we have
designed (the initial version of) the ORBI (\underline{O}pen challenge
problem \underline{R}epository for systems supporting reasoning with
\underline{BI}nders) specification language, and created an open
repository of specifications, which initially contains the benchmarks
introduced in this paper.

Selecting a small set of benchmarks has an inherent element of
arbitrariness. The reader may 
 complain that there are many
other features and issues not covered in Sect.~\ref{sec:bench}. We
agree and we mention some additional categories, which  we could not
discuss in the present paper for the sake of space, but which will
(eventually) make it into the ORBI repository:
\begin{itemize}
\item One of the weak spots of most current HOAS-based
  systems is the lack of libraries, built-in data-types and related
  decision procedures: for example, case studies involving calculi of
  explicit substitutions require a small corpus of arithmetic facts,
  that, albeit trivial, still need to be (re)proven, while they could
  be automatically discharged by decision procedures such as Coq's
  \textsf{omega}.\footnote{Case in point, the strong normalization
    proof  for
    the $\lambda\sigma$ calculus in Abella, see 
    \url{http://abella-prover.org/examples/lambda-calculus/exsub-sn/},
    $15\%$ of which consists of basic facts about addition.}

\item There are also specifications that are \emph{functional} in
  nature, such as those that descend through the structure of a lambda
  term, say counting its depth, the number of bound occurrences of a
  given variable etc.; most HOAS systems would encode those functions
  relationally, but this entails again the additional proof obligations
  of proving those relations total and deterministic.

\item In the benchmarks that we have presented all blocks are composed
  of \emph{atoms}, but there are natural specifications, to wit the
  solution to the \textsc{PoplMark} challenge in \citet{Pientka07}, where
  contexts have more structure, as they are induced by
  \emph{third-order} specifications.
  For example, the rule for subtyping universally quantified types
  introduces  
  a non-atomic assumption  about transitivity, of the form:
  \begin{center}
    \lstinline!{a:tp}({U:tp} {V:tp} sub a U -> sub U V -> sub a V)!.
  \end{center}
 
\item Proofs by \emph{logical
    relations} 
  typically require, in order to define reducibility candidates,
  inductive definitions and strong function spaces, i.e., a function
  space that does not only model binding. A direct encodings of those
  proofs is out of reach for systems such as Twelf, although indirect
  encodings exist \cite{Schurmann:LICS08}. Other systems, such as
  Beluga and Abella, are well capable of encoding such proofs, but
  differ in how this is accomplished, see \citet{Cave:logrel14} and
  \citet{GacekMillerNadathur:JAR12}.

\item Finally, a subject that is gaining importance is the encoding of
  \emph{infinite} behavior, typically realized via some form of
  \emph{co-induction}. Context-intensive case studies have been
  explored for example in \citet{Howe}.
\end{itemize}

One of the outcomes of our framework for expressing contexts of
assumptions is the unified treatment of all
weakening/strengthening/exchange re-arrangements, via the
$\textsf{rm}$ and $\textsf{perm}$ operations. This opens the road to a
\emph{lattice-theoretic} view of declarations and contexts, where,
roughly, $x\preceq y$ holds iff $x$ can be reached from $y$ by some
$\mathsf{rm}$ operation: a generalized context will be the join of two
contexts and context relations can be identified by navigating the
lattice starting from the join of the to-be-related contexts. We plan
to develop this view and use it to convert G proofs into R and vice
versa, as a crucial step towards breaking the proof/type theory
barrier.

\smallskip

The description of ORBI given in Sect.~\ref{sec:mech} is best thought
of as a stepping stone towards a more comprehensive specification
language, much as \emph{THF0} \cite{THF0} has been extended to the
more expressive formalism $THF_i$, adding for instance, rank 1
polymorphism. Many are the features that we plan to provide in the
near future, starting from general (monotone) \emph{(co)inductive}
definitions; currently we only relate contexts, while it is clearly
desirable to relate arbitrary well-typed terms, as shown for example
in \citet{Cave:POPL12} and \citet{GacekMillerNadathur:JAR12} with
respect to normalization proofs. Further, it is only natural to
support infinite objects and behavior.  However, full support for
(co)induction is a complex matter, as it essentially
entails fully understanding the relationship between the proof-theory
behind Abella and Hybrid and the type theory of Beluga. Once this is
in place, we can ``rescue'' ORBI theorems from their current status as
comments and even include a notion of {proof} in
ORBI.

Clearly, there is a significant amount of implementation work ahead,
mainly on the ORBI2X tools side, but also on the practicalities
of the benchmark suite. Finally, we would like to open up the
repository to other styles of specification such nominal, locally
nameless etc.

\begin{acknowledgements}
The first and third author acknowledge the support of the Natural
Sciences and Engineering Research Council of Canada.
We  thank Kaustuv Chaudhuri, Andrew Gacek, Nada Habli, and
Dale Miller for discussing some aspects of this work with us.
The first author would also like to extend her gratitude to the
University of Ottawa's Women's Writers Retreats.
\end{acknowledgements}


\appendix
\section{Overview of Benchmarks}
\label{sec:overview}

In this appendix, we provide a quick reference guide to some of the
key elements of the benchmark problems discussed in
Section~\ref{sec:bench}.
In the tables below, ULC (STLC) stands for the untyped (simply-typed)
lambda-calculus, and POLY stands for the polymorphic lambda
calculus. The entry ``same'' means that there is no difference between the R and
G version of the theorem because there is only one context involved.

\subsection{A Recap of Benchmark Theorems}
    \begin{tabular}{llll}
    \textbf{Theorem} & \textbf{Thm No.}& \textbf{Version} & \textbf{Page}  \\
\hline

\textsf{aeq}-reflexivity for ULC   &\ref{thm:reflR} & R     &   \pageref{thm:reflR}     \\
\textsf{aeq}-reflexivity for ULC   &\ref{thm:reflG} & G     &
\pageref{thm:reflG}     \\
\textsf{aeq}-symmetry and transitivity for ULC   &\ref{le:sym-trans}
& same     &   \pageref{le:sym-trans}     \\
\textsf{atp}-reflexivity for POLY    &\ref{thm:atpreflG} & G     &
\pageref{thm:atpreflG}     \\
\textsf{aeq}-reflexivity for POLY    &\ref{thm:prelreflG} & G     &   \pageref{thm:prelreflG}     \\
\textsf{atp}-reflexivity for POLY    &\ref{thm:prelreflR_tp} & R     &
\pageref{thm:prelreflR_tp}     \\
\textsf{aeq}-reflexivity for POLY    &\ref{thm:prelreflR} & R     &   \pageref{thm:prelreflR}     \\
\textsf{aeq/deq}-completeness for ULC   &\ref{thm:ceqG} & G     &
\pageref{thm:ceqG}     \\    
\textsf{aeq/deq}-completeness for ULC   &\ref{thm:ceqR} & R     &
\pageref{thm:ceqR}     \\    
type uniqueness for STLC   &\ref{thm:unique} & same     &
\pageref{thm:unique}     \\ 
type preservation for parallel reduction for STLC    &\ref{thm:tps} & R     &
\pageref{thm:tps}     \\    
\textsf{aeq}-parallel substitution for ULC&\ref{thm:asubst} & same     &
\pageref{thm:asubst}   

\end{tabular}

\subsection{A Recap of Schemas and Their Usage}
    \begin{tabular}{llll}
    \textbf{Context} & \textbf{Schema} & \textbf{Block} & 
    \textbf{Description/Used in:} \\
\hline
    $\Phi_{\alpha}$    & $S_{\alpha}$     & $\istp{\alpha} $   &
    type variables     \\
    $ \Phi_{x}$      & $\isch$      & $\istm{x}$        & term variables
    \\
    $ \Phi_{\alpha x}$     & $\axsch$       & $\istp{\alpha} \alt
    \istm{x}$     & type/term variables     \\
 $    \atctx$      &  $\atpsch$    & $\istp{\alpha} \alt
 \istm{x};x\oftp T$     & type-checking for POLY     \\
        $\actx$  & $\asch $      & $\istm{x}; \termeq{x}{x}$   &
        Thm~\ref{thm:reflG},~\ref{le:sym-trans}, and~\ref{thm:asubst}      \\
    $\atpeqctx$      & $ \atpeqsch$     &
    $\istp{\alpha};\tpaeq{\alpha}{\alpha}$       & Thm~\ref{thm:atpreflG}    \\
    $\pctx$     & $\psch$      &
    $\istp{\alpha};\tpaeq{\alpha}{\alpha}~\alt~\istm{x};\termeq{x}{x}
    $   &  Thm~\ref{thm:prelreflG} \\
     $\Gamma_{da}$      &  $S_{da}$     & $ \istm x; \termequal x x;
     \termeq{x}{x}$   &       Thm~\ref{thm:ceqG}    \\
    $\Phi_{xd}$        & $S_{xd}$ & $\istm x; \termequal x x$  & Thm~\ref{thm:ceqR}\\
    $\Phi_{t}$       & $S_{t}$ & $\istm x; \oft x A$ &
    Thm~\ref{thm:unique}\\
    $\Phi_r$       &  $\rsch$     & $ \istm x; x \step x$  &     Thm~\ref{thm:tps}   
    \end{tabular}

\subsection{A Recap of the Main Context Relations and Their Usage}

    \begin{tabular}{lll}
    \textbf{Relation} &  \textbf{Related Blocks} & 
    \textbf{Used in:} \\
\hline
    $\crel \ictx \actx$        & $\crel {\istm x} {(\istm{x}; \termeq{x}{x})} $  & 
Thm~\ref{thm:reflR}    \\
$\crel{\alphctx}{\atpeqctx}$ & $\crel{\istp{\alpha}}
                        {(\istp{\alpha};\tpaeq{\alpha}{\alpha})}$ &
                        Thm~\ref{thm:prelreflR_tp}\\
$\crel{\axctx}{\pctx}$ &   $\crel \ictx \actx$ plus $\crel{\alphctx}{\atpeqctx}$ &
 Thm~\ref{thm:prelreflR}\\
 $\crel\actx\dctx$ & $\crel {(\istm{x};\termeq x x)}
                 {(\istm x; \termequal x x)}$ & 
Thm~\ref{thm:ceqR}\\
 $\crel\rctx\tctx$ & $\crel{ (\istm{x};x\step x)}
               {(\istm{x};x\oftp A)}$    & 
Thm~\ref{thm:tps}
    \end{tabular}

\section{ORBI Specification of Algorithmic and Declarative Equality}
\label{sec:aede-orbi}

The following ORBI specification provides a complete encoding of the
example of algorithmic vs.\ declarative equality used in
Subsections~\ref{ssec:basic},~\ref{ssec:sub}, and~\ref{ssec:exc}.

\begin{verbatim}
%% Syntax
tm: type.

app: tm -> tm -> tm.
lam: (tm -> tm) -> tm.

%% Judgments
aeq: tm -> tm -> type.
deq: tm -> tm -> type.

%% Rules
ae_a: aeq M1 N1 -> aeq M2 N2 -> aeq (app M1 M2) (app N1 N2).
ae_l: ({x:tm} aeq x x -> aeq (M x) (N x)) 
        -> aeq (lam (\x. M x)) (lam (\x. N x)).

de_a: deq M1 N1 -> deq M2 N2 -> deq (app M1 M2) (app N1 N2).
de_l: ({x:tm} deq x x -> deq (M x) (N x)) 
        -> deq (lam (\x. M x)) (lam (\x. N x)).
de_r: deq M M.
de_s: deq M1 M2 -> deq M2 M1.
de_t: deq M1 M2 -> deq M2 M3 -> deq M1 M3.

%% Schemas
schema xG: block (x:tm).
schema xaG: block (x:tm; u:aeq x x).
schema xdG: block (x:tm; u:deq x x).
schema daG: block (x:tm; u:deq x x; v:aeq x x).

%% Definitions
inductive xaR : {G:xG} {H:xaG} prop =
| xa_nil: xaR nil nil
| xa_cons: xaR G H -> xaR (G, block x:tm) (H, block x:tm; u:aeq x x).

inductive daR : {G:xdG} {H:xaG} prop =
| da_nil: daR nil nil
| da_cons: daR G H -> daR (G, block x:tm; v:deq x x)
                          (H, block x:tm; u:aeq x x).

%% Theorems
theorem reflG: {H:xaG} {M:tm} [H |- aeq M M].
theorem symG: {H:xaG}{M:tm}{N:tm} [H |- aeq M N] -> [H |- aeq N M].
theorem transG: {H:xaG}{M:tm}{N:tm}{L:tm}
              [H |- aeq M N] & [H |- aeq N L] -> [H |- aeq M L].
theorem ceqG: {G:daG} [G |- deq M N] -> [G |- aeq M N].
theorem substG: {H:xaG}{M1:tm->tm}{M2:tm}{N1:tm}{N2:tm}
  [H, block x:tm; aeq x x |- aeq (M1 x) (M2 x)] & [H |- aeq N1 N2] ->
  [H |- aeq (M1 N1) (M2 N2)].

theorem reflR : {G:xG}{H:xaG}{M:tm} xaR G H -> [H |- aeq M M].
theorem ceqR: {G:xdG}{H:xaG} daR G H -> [G |- deq M N] -> [H |- aeq M N].

%% Directives
% [hy,ab] wf tm.
% [hy,ab] explicit x in de_l.
% [hy,ab] explicit M in de_r.
% [hy,ab] explicit x in xG.
% [hy,ab] explicit x in xdG.
% [hy,ab] explicit x in daG.
% [hy,ab] explicit x in G in xaR.
% [hy,ab] explicit x in G in daR.
\end{verbatim}  

\end{document}